\definecolor{ptblue}{RGB}{15,76,129} 
\definecolor{ptemerald}{HTML}{009473} 
\definecolor{ptgray}{HTML}{939597} 
\newcommand{\cmark}{{\ding{51}}\xspace}
\newcommand{\xmark}{{\ding{55}}\xspace}
\let\OLDland\land
\renewcommand{\land}{\:\OLDland\:}
\let\OLDlor\lor
\renewcommand{\lor}{\:\OLDlor\:}
\let\OLDforall\forall
\renewcommand{\forall}{\;\OLDforall\:}
\let\OLDexists\exists
\renewcommand{\exists}{\;\OLDexists\,}
\DeclareMathOperator*{\argmax}{arg\,max}
\definecolor{cobalt}{rgb}{0.0, 0.28, 0.67}
\theoremstyle{plain}
\newtheorem{theorem}{Theorem}[section]
\newtheorem{proposition}[theorem]{Proposition}
\theoremstyle{definition}
\newtheorem{definition}[theorem]{Definition}
\newtheorem{example}[theorem]{Example}
\newtheoremstyle{bfnote}
{}{}%
{}{}%
{\bfseries}{.\hspace{0.25cm}}%
{ }%
{\thmname{#1}\thmnumber{ #2}\thmnote{ (#3)}}
\theoremstyle{bfnote}
\newtheorem{remark}{Remark}
\newcommand{\approvalBallotOfvoter}[1]{A_{#1}} 
\newcommand{\nash}{$\mathsf{NASH}$\xspace}
\newcommand{\cut}{$\mathsf{CUT}$\xspace}
\newcommand{\fut}{$\mathsf{FUT}$\xspace}
\newcommand{\ues}{$\mathsf{UES}$\xspace}
\newcommand{\map}{$\mathsf{MP}$\xspace}
\newcommand{\mps}[1]{$#1\text{--}\mathsf{MSP}$\xspace}
\title{Approximately Fair and Population Consistent \\Budget Division via Simple Payment Schemes}
\author{Haris Aziz \qquad
Patrick Lederer\thanks{Corresponding author.} \qquad
Xinhang Lu \qquad
Mashbat Suzuki \qquad
Jeremy Vollen \medskip\\
UNSW Sydney, Australia \medskip\\
\nolinkurl{{haris.aziz, p.lederer, xinhang.lu, mashbat.suzuki, j.vollen}@unsw.edu.au}}
\date{}
\begin{document}
\maketitle

\begin{abstract}\vspace{-0.2cm}
	In approval-based budget division, a budget needs to be distributed to candidates based on the voters' approval ballots over these candidates. In the pursuit of a simple, consistent, and approximately fair rule for this setting, we introduce the maximum payment rule (\map). Under this rule, each voter controls a part of the budget and, in each step, the corresponding voters allocate their entire budget to the candidate approved by the largest number of voters with non-zero budget. We show that \map meets our criteria as it satisfies monotonicity and a demanding population consistency condition and gives a $2$-approximation to a fairness notion called average fair share (AFS). Moreover, we generalize \map to the class of sequential payment rule and prove that it is the most desirable rule in this class: all sequential payment rules but \map and one other rule fail monotonicity while only allowing for a small improvement in the approximation ratio to AFS. \medskip
	%
\end{abstract}

\noindent\textbf{Keywords:} social choice theory, budget division, fairness\\
\textbf{JEL Classification code:} D7

\section{Introduction}

Suppose that a city council wants to support the local sport clubs by distributing a fixed amount of money between them. In order to allocate the money in a satisfactory way, the council asks for the citizens' preferences over the sport clubs. However, given these preferences, how should the budget be distributed? This simple example describes a fundamental research problem in the realm of social choice theory, which we will refer to as budget division.\footnote{In the literature, budget division is referred to by a number of names such as randomized social choice \citep{BMS05a}, fair mixing \citep{ABM20a}, portioning \citep{AAC+19a,EGL+24a}, and (fractional) participatory budgeting \citep{FGM16a,AzSh21a}.} More specifically, the central problem of budget division is how to distribute a divisible resource among the candidates in a fair and structured way based on the voters' preferences over these candidates. Notably, this resource does not have to be money as budget division, e.g., also captures time-sharing problems (where we need to allocate time to various activities in a fixed time frame) or probabilistic voting (where we need to assign the probability to win the election to the candidates).

Perhaps due to these versatile applications, budget division has recently attracted significant attention \citep[e.g.,][]{AAC+19a,ABM20a,BBPS21a,BBP+19a,FPPV21a,BGSS24a,EGL+24a,EKPS24a} and it has been studied based on various assumptions on the voters' preferences. For instance, \citet{AAC+19a} and \citet{EKPS24a} study budget division under the assumption that the voters rank all candidates, \citet{FGM16a} suppose that the voters have cardinal utilities, and \citet{FPPV21a} let voters report ideal resource distributions. By contrast, we will investigate budget division based on the assumption that the voters report approval ballots over the candidates, i.e., each voter only reports the set of candidates he approves of. This model has been first suggested by \citet{BMS05a} and it is considered in numerous recent works \citep[e.g.,][]{ABM20a,MPS20a,BBPS21a,BBP+19a} because approval preferences achieve a favorable trade-off between the cognitive burden on the voters and the expressiveness of the ballot format.\footnote{We refer to the work of \citet{BrFi07c} for a detailed discussion of the advantages of approval ballots and to the papers of \citet{FBG23a} and \citet{YHPP24a} for an experimental evaluation of approval ballots in the context of participatory budgeting.} The main study object of this paper thus are \emph{distribution rules}, which map every profile of approval ballots to an allocation of the resource to the candidates.

A central objective for many budget division problems is to choose outcomes that are fair towards all voters: an appropriate amount of the budget should be allocated to the approved candidates of every group of voters.
For instance, if each voter in our introductory example only approves of a single sport club, it seems desirable that every sport club gets a portion of the budget that is proportional to the number of voters approving it.
Moreover, a multitude of axioms has been suggested to formalize fairness also in the context of general  approval preferences \citep[see, e.g.,][]{BMS05a,Dudd15a,ABM20a,BBP+19a}, most of which are based on the idea that each voter controls a $\frac{1}{n}$ share of the total budget (where $n$ is the number of voters). Two of the strongest fairness axioms based on this approach are \emph{average fair share (AFS)} and \emph{core}, both of which were suggested by \citet{ABM20a}. Average fair share requires for every group of voters that approves a common candidate that the average utility of the voters in this group is at least as large as the utility they would obtain if the voters allocate their share to the commonly approved candidate. On the other hand, core applies the universal concept of core stability to budget division: there should not be a group of voters who can enforce an outcome that is better for each voter in the group by only using the share they deserve.
In the literature, there is only one distribution rule that is known to satisfy AFS and core---the Nash product rule (\nash)~\citep{ABM20a}.
This rule returns the distribution that maximizes the Nash social welfare, i.e., the product of the total share assigned to the approved candidates of each voter.
However, while being  fair, this rule is rather intricate and exhibits otherwise undesirable behavior.
In particular, the distribution returned by \nash may contain irrational values~\citep{AAC+19a}, which means that we can only approximate this rule in practice.
Closely connected to this is the issue that, for many profiles, it is difficult even for experts to verify whether an allocation is the one selected by \nash.\footnote{We note that, from a computational perspective, it is easy to verify the \nash distribution by checking the first-order optimality constraints. However, checking these constraints by hand is tedious, and it seems not helpful to convince laypersons of the correctness of the computed distribution.}
Furthermore, \nash fails \emph{monotonicity}~\citep{BBPS21a}: if a candidate gains more support in the sense that an additional voter approves it, its share may decrease.
For instance, in our introductory example, this may lead to paradoxical situations where it would be better for a sport club to be approved by less voters.
Finally, we will show  in this paper that \nash also behaves counter-intuitively with respect to \emph{population consistency conditions}. Such conditions are well-studied in social choice theory  \citep[e.g.,][]{Smit73a,Youn75a,Fish78d,YoLe78a,Bran13a,BrPe19a} and roughly require that, when combining disjoint elections into one election, the outcome for the combined election should be consistent with the outcome for the subelections. In more detail, we will prove that, for \nash, it can happen that the share of a candidate in a combined election is less than its share in each subelection, even if the considered candidate obtains the maximal share in each subelection and the distributions chosen for the subelections are structurally similar. For our sport club example, this means that the share of a sport club in a city-wide election may be less than the minimal share the sport club would obtain in district-level elections, even if the sport club obtains the largest share in each district and the distributions chosen for the districts are structurally similar.


Motivated by these observations, our central research question is whether there are simple combinatorial rules that exhibit more consistent behavior than \nash while satisfying strong fairness guarantees. However, as it seems out of range to achieve full AFS or core with such rules, we will consider approximate variants of these axioms called $\alpha$-AFS and $\alpha$-core, which relax the constraints imposed by the original axioms by a $\frac{1}{\alpha}$ factor. Thus, $1$-AFS and $1$-core correspond to AFS and core, and the fairness guarantees of $\alpha$-AFS or $\alpha$-core become weaker as $\alpha$ increases. These approximate fairness notions allow us to quantify the fairness of every distribution rule by determining the minimal $\alpha$ for which it satisfies $\alpha$-AFS or $\alpha$-core, so we can derive a much more fine-grained picture about the fairness guarantees of distribution rules based on these axioms.
Our formal goal is thus to identify distribution rules that satisfy monotonicity and demanding population consistency conditions while satisfying $\alpha$-AFS or $\alpha$-core for as small $\alpha$ as possible.

\paragraph{Contributions} 
In an attempt to find simple, consistent, and fair distribution rules, we first analyze several classical rules such as the conditional utilitarian rule \citep{Dudd15a} and the fair utilitarian rule \citep{BMS02a,BBPS21a}.
However, all of these rules give poor approximations to AFS and core and they typically violate even weak population consistency conditions (see \Cref{tab:alg_comparison} for details), thereby showing that none of these rules meets our criteria.

Motivated by these negative results, we suggest the \emph{maximum payment rule} (\map) for approval-based budget division.\footnote{We note that \map was considered before under the name ``majoritarian portioning'' by \citet{SpGe19a} and \citet{BGP+22a} in related settings. See the related work section for details.}
Under \map, we first distribute the budget uniformly to the voters, who spend it on the candidates in multiple rounds. In more detail, in every round, \map identifies the candidate $x$ that is approved by the largest number of voters who have not spent their budget yet and all corresponding voters send their complete budget to this candidate. The final shares of all candidates are given by the total budget the voters spent on them. As our main result, we will show that \map answers our question about the existence of simple, fair, and consistent distribution rules in the affirmative. Specifically, we will prove that \map is monotonic and gives strong fairness guarantees as it satisfies $2$-AFS and $\Theta(\log n)$-core. Moreover, \map satisfies a demanding population consistency condition which we call \emph{ranked population consistency} (RPC). 

The rough idea of RPC is that, when a candidate is treated similarly in two disjoint subelections, then the share of this candidate in the joint election should be upper and lower bounded by its maximal and minimal share in the subelections. More specifically, for this axiom, we consider the rankings obtained by ordering the candidates with respect to their shares in the subelections. If the prefixes of these rankings agree up to some candidate, then the share of this candidate in the joint election should be upper and lower bounded by its maximal and minimal share in the subelections. Or, less formally, if the distributions chosen for two disjoint subelections agree on the candidate with the highest share, RPC bounds the share of this candidate in the joint election; if the distributions also agree on the candidate with the second highest share, RPC applies again to this candidate; etc. We believe this condition to be desirable because it ensures that the distribution chosen for the joint election closely resembles the distributions chosen for its subelections if these distributions are structurally similar. 

On the negative, it needs to be mentioned that \map fails efficiency. Specifically, we show that \map can choose a distribution for which it is possible to increase the utility of every voter by a $\Theta(\log n)$ factor. However, in practice, we believe that it is rare for \map to return (severely) inefficient outcomes for a number of reasons. Firstly, in real-world elections, the number of candidates $m$ is typically rather low, which allows to circumvent our negative result. In particular, the proof of Theorem~2 of \citet{MPS20a} entails that it is not possible for \map to increase the utility of every voter by more than $\frac{\sqrt{m}}{2-2/\sqrt{m}}$. Secondly, we note that our worst-case instance requires a rather intrinsic structure and, more generally, that it seems for typical elections with many voters and few candidates unlikely that there are many inefficient outcomes at all. 

\begin{table}
	\centering
	\begin{tabular}{@{}ccccc@{}}
		\toprule
		& Monotonicity & Population consistency & $\alpha$-AFS & $\alpha$-core \\
		\midrule
		\nash & \xmark & \cellcolor{black!20} weak & 1 & 1 \\
		\cut & \cmark & \cellcolor{black!20}\xmark & \cellcolor{black!20} $\Theta(n)$ & \cellcolor{black!20} $\Theta(n)$ \\
		\fut & \cmark & \cellcolor{black!20} \xmark & \cellcolor{black!20} $\Theta(n)$ & \cellcolor{black!20} $\Theta(n)$ \\
		\ues & \cmark & \cellcolor{black!20} strong & \cellcolor{black!20} $\Theta(n)$ & \cellcolor{black!20} $\Theta(n)$ \\
		\cellcolor{black!20} \map & \cellcolor{black!20} \cmark & \cellcolor{black!20} ranked & \cellcolor{black!20} $2$ & \cellcolor{black!20} $\Theta(\log n)$ \\
		\cellcolor{black!20} \mps{\frac{1}{3}} & \cellcolor{black!20} \xmark & \cellcolor{black!20} ranked & \cellcolor{black!20} $\frac{3}{2}$ & \cellcolor{black!20} $O(\log n)$ \\
		\bottomrule
	\end{tabular}
	\caption{Analysis of distribution rules with respect to monotonicity, population consistency, and their approximation ratios to AFS and core. Each row states for a distribution rule whether it satisfies monotonicity, the degree to which it satisfies population consistency, and its approximation ratio for AFS and core. The first four rows focus on established rules, while the fifth and sixth row discuss two rules designed in this paper. For population consistency, the symbol \xmark indicates that the rule fails all population consistency conditions considered in this paper, whereas ``weak'', ``ranked'', and ``strong'' indicate the exact population consistency condition that is satisfied (see \Cref{subsec:popCons} for details).
		Contributions of this paper are shaded gray.}
	\label{tab:alg_comparison}
\end{table}


By generalizing the idea of \map, we further introduce a new family of distribution rules called \emph{sequential payment rules}.
Just as for \map, sequential payment rules distribute the budget uniformly to the voters, who will spend their shares on the candidates. However, in contrast to \map, sequential payment rules do not require the voters to spend their whole budget in one shot. Instead, these rules are defined by payment willingness functions, which specify how much of his remaining budget a voter is willing to spend on his approved candidates in the next round of the rule. A sequential payment rule then works iteratively and, in every round, it identifies the candidate that maximizes the total payment willingness, let the voters send their corresponding payments to this candidate, and remove it from consideration. Finally, the distribution chosen by a sequential payment rule is proportional to the budgets accumulated by the candidates.

Our findings for this class of rules further strengthen our case for \map. In more detail, while all sequential payment rules satisfy RPC, we show that all these rules except for \map and one other rule called \ues fail monotonicity. Since \ues gives a poor approximation to AFS, this demonstrates that \map is the only sequential payment rule that satisfies our original desiderata. Moreover, we analyze the approximation ratio of sequential payment rules to AFS and show that no such rule satisfies $\alpha$-AFS for $\alpha<\frac{3}{2}$. This means that these rules only allow for a small improvement for the approximation ratio to AFS compared to \map. Lastly, we prove that our lower bound on the approximation ratio of sequential payment rules to AFS is tight. Specifically, we show that the \emph{$\frac{1}{3}$-multiplicative sequential payment rule (\mps{\frac{1}{3}})}, where the payment willingness of a voter is discounted by a factor of $\frac{1}{3}$ whenever he pays for a candidate, satisfies $\frac{3}{2}$-AFS. Moreover, based on a more fine-grained analysis that takes the maximum ballot size into account, we characterize \mps{\frac{1}{3}} as the fairest sequential payment rule. 

\paragraph{Related Work}

The problem of budget division with dichotomous preferences was first considered by \citet{BMS05a}, who, inspired by randomized social choice, defined several distribution rules and studied them with respect to strategyproofness, efficiency, and fairness. In particular, \citet{BMS05a} realized that there is a trade-off between these conditions because none of their rules simultaneously satisfies all three properties. This trade-off was further analyzed by \citet{Dudd15a} and \citet{BBPS21a}, the latter of whom proved that no strategyproof~and efficient rule can satisfy minimal fairness requirements. In a similar vein, \citet{MPS20a} showed that fair distribution rules only give poor guarantees on the utilitarian social welfare. 

On a more positive note, \citet{ABM20a} investigated distribution rules with respect to fairness notions and, in particular, showed that the Nash product rule satisfies AFS and core. Motivated by this result, several other authors have analyzed further aspects of \nash \citep[e.g.,][]{GuNe14a,AAC+19a,BBP+19a}. In particular, \citet{GuNe14a} studied approval-based distributions rules that return convex sets of distributions and characterized the rule that returns the set of all \nash distributions as the only rule that satisfies a mild proportionality condition, continuity, efficiency, independence of candidates who obtain a share of~$0$, and population consistency. Their population consistency axiom requires that, if the chosen sets of distributions intersect for two disjoint profiles, exactly the distributions in the intersection of these sets should be selected for the combined profile. This condition can be seen as the counterpart of what we call weak population consistency for the set-valued setting.

We further note that \citet{BBP+19a} presented a dynamic process for computing the Nash product rule, which was first suggested by \citet{Cove84a} in the context of portfolio selection in stock markets. More precisely, these authors demonstrated that, when the voters repeatedly update their spending such that it is proportional to the restriction of the current distribution to their approved alternatives, the distribution will converge against the one chosen by \nash. Similar redistribution dynamics have recently attracted significant attention \citep[e.g.,][]{FPPV21a,BBP+19a,BGSS23b,HKMS24a}, but these dynamics significantly differ from our work as we do not allow voters to reallocate spent money.

Furthermore, our paper belongs to an extensive stream of research that studies fairness in budget division. In more detail, besides the paper by \citet{ABM20a}, fairness notions similar to AFS and core have been studied in budget division models with strict preferences~\citep{AAC+19a,EKPS24a}, cardinal utilities~\citep{FGM16a}, and ideal distributions~\citep{FPPV21a,CCP24a,BGSS24a,EGL+24a,FrSc24a}, as well as budget division models with additional structure on the outcomes \citep{BLS24a,LPA+24a}.
For instance, \citet{FPPV21a} designed a strategyproof and fair distribution rule under the assumption that the voters report ideal budget division.\footnote{We note that the same setting was also considered in earlier works \citep[e.g.,][]{Intr73a,LNP08a}, but these papers do not study on fairness conditions.}
Moreover, analogous fairness concerns have attracted significant attention in settings related to budget division, such as committee voting \citep{FSST17a,LaSk22b} and participatory budgeting \citep{AzSh21a,RSM25a}.

In such related setting, variants of the maximum payment rule have been considered before \citep{SpGe19a,BGP+22a}. In more detail, \citet{BGP+22a} referred to \map as ``majoritarian portioning'' and use it as an auxiliary tool: they apply \map to turn an approval profile into a vote count over the parties and then apply apportionment methods to this vote count to select a fixed-size multiset of the parties. However, their analysis only concerns such combined portioning-apportionment methods. Similarly, \citet{SpGe19a} suggested \map to turn an approval profile into an assignment of votes to parties, with the ultimate goal again being to elect a discrete committee based on this vote count. We further note that these authors also make some informal claims about the properties of \map.

Lastly, we note that budget division is closely related to randomized social choice \citep{Bran17a}, where the goal is to select a probability distribution over the candidates from which the final election winner will be chosen by chance. Specifically, randomized social choice can be seen as a particular application of budget division, where the shares of candidates are interpreted as their probabilities of being selected as the final winner. We, however, note that randomized social choice has an \emph{ex post} perspective, which significantly affects the considered axioms. For instance, in randomized social choice, it seems desirable to minimize the amount of randomization, so that the final winner depends as much as possible on the voters' preferences instead of chance. Such considerations tend to conflict with the fairness notions studied in budget division, which aim to spread the budget among the candidates to satisfy all voters. 

\section{Model and Axioms}
\label{sec:prelim}

For any positive integer~$t \in \mathbb{N}$, we define $[t]= \{1, 2, \dots, t\}$.
Let $N=[n]$ be a set of~$n$ voters and $C=\{x_1,\dots, x_m\}$ a set of~$m$ candidates.
We assume that the voters have \emph{dichotomous} preferences over the candidates, i.e., voters only distinguish between approved and disapproved candidates. Moreover, each voter $i\in N$ reports his preferences in the form of an  \emph{approval ballot} $\approvalBallotOfvoter{i}$, which is the non-empty set of his approved candidates. 
An \emph{approval profile} $\mathcal{A} = (\approvalBallotOfvoter{1}, \approvalBallotOfvoter{2}, \dots, \approvalBallotOfvoter{n})$ contains the approval ballots of all voters. Given an approval profile $\mathcal A$, we denote by $N_x(\mathcal A)= \{i \in N \colon x \in \approvalBallotOfvoter{i}\}$ the set of voters who approve of candidate~$x$ and we call $|N_x(\mathcal A)|$ the \emph{approval score} of candidate~$x$. 
Finally, an \emph{election instance} is a tuple $\mathcal{I} = (N,C,\mathcal{A})$ that specifies the set of voters $N$, the set of candidates $C$, and an approval profile $\mathcal A$ for the given sets of voters and candidates. Following the literature, we will typically equate election instances with the corresponding approval profiles as these contain all relevant information. Because we will study population consistency axioms in this paper, we define by $\mathcal{A}+\mathcal{A}'$ the approval profile that combines two voter-disjoint profiles $\mathcal{A}$ and $\mathcal{A'}$. More formally, given two instances $\mathcal{I}=(N,C,\mathcal{A})$ and $\mathcal{I}'=(N',C',\mathcal{A}')$ such that $C=C'$ and $N\cap N'=\emptyset$,
$\mathcal{A}+\mathcal{A}'$ is the approval profile corresponding to the instance $\mathcal{I}+\mathcal{I'}= (N\cup N', C, (\mathcal{A}, \mathcal{A}'))$.

Given an approval profile, our goal is to distribute a divisible resource to the candidates. To this end, we will use distribution rules which determine for every election instance an allocation of the resource to the candidates. We will formalize such resource allocations by distributions that specify the share of the resource assigned to each candidate. More formally, a \emph{distribution} $p$ is a function of the type $C\rightarrow [0,1]$ such that $\sum_{x\in C} p(x)=1$, and we denote by $\Delta(C)$ the set of all distributions over $C$. Then, a \emph{distribution rule} $f$  is a function that maps each approval profile $\mathcal{A}$ to a distribution $p \in {\Delta}(C)$. For the ease of notation, we define by $f(\mathcal{A}, x)$ the share of the resource that the distribution rule $f$ assigns to candidate $x$ under the approval profile $\mathcal{A}$. For instance, $f(\mathcal A, x)=0.5$ means that, for the profile $\mathcal{A}$, $f$ assigns half of the resource to candidate $x$.
Following the literature, we assume that the utility of a voter $i$ for a distribution $p$ is the total share assigned to his approved candidates, i.e., the utility function of voter $i$ is given by $u_i(p)=\sum_{x\in A_i} p(x)$.

In the following three subsections, we will introduce the central axioms of this paper.

\subsection{Monotonicity}

As our first axiom, we will discuss monotonicity, which requires that the share of a candidate does not decrease if it gets additional supporters. Hence, this axiom can be seen as an incentive for candidates to gather as much support as possible.

\begin{definition}[Monotonicity]
    A distribution rule $f$ is \emph{monotonic} if $f(\mathcal{A}', x)\geq f(\mathcal{A}, x)$ for all approval profiles $\mathcal{A}$ and $\mathcal{A'}$, voters $i\in N$, and candidates $x\in C$ such that $A'_i=A_i\cup \{x\}$ and $A'_j=A_j$ for all $j\in N\setminus \{i\}$.
\end{definition}

Analogous monotonicity notions are well-studied in numerous voting settings \citep[e.g.,][]{Gibb77a,Fish82a,Mask99a,SaZw10a,SaFi19a}, thereby emphasizing the significance of this axiom. 

\subsection{Population Consistency}\label{subsec:popCons}

We will next discuss several population consistency axioms, which formalize that the chosen distribution for a joint election should be consistent with the distributions that are selected for its subelections. For instance, in our introductory sport club example, population consistency ensures that the distribution chosen for a city-wide election is, when possible, consistent with the distributions that would be selected for district-level elections. We believe such consistency notions to be important for at least two reasons. 
Firstly, population consistency adds justification for running elections on the city level instead of the district level. In more detail, for approval-based budget division, it would be easily possible to split the budget among the districts and let them run separate elections to allocate their shares. Consequently, we need to avoid situations where the outcome for the city-wide election directly conflicts with the outcomes of the district-level election as districts may not be willing to participate in the mechanism otherwise. Secondly, we note that population consistent distributions rules tend to afford higher explainability of the outcome as they allow to justify the selected distribution in terms of the distributions chosen for subelections. For instance, for population consistent rules, we may explain why a candidate did obtain a certain share by referring back to the outcomes for the individual districts.

Motivated by these observations, we will next present three population consistency axioms. We start by introducing the standard notion of population consistency, which states that, if a rule chooses the same outcome for two disjoint elections, it should choose the same outcome also for the combined election.
We call this condition weak population consistency as it is our weakest population consistency axiom, and we define it as follows.
\begin{definition}[Weak population consistency (WPC)]
	A distribution rule $f$ satisfies \emph{weak population consistency (WPC)} if $f(\mathcal A+\mathcal A')=f(\mathcal A)$ for all voter-disjoint profiles $\mathcal A$ and $\mathcal A'$ with $f(\mathcal A)=f(\mathcal A')$.
\end{definition}
Variants of this condition feature in numerous prominent results of social choice theory \citep[e.g.,][]{Smit73a,Youn75a,Fish78d,YoLe78a,Bran11c,LaSk21a,BrPe19a}. However, we believe that WPC is too weak in the context of budget division because the condition of choosing the exact same distribution in two profiles is too restrictive. For instance, assume a distribution rule $f$ chooses for two profiles $\mathcal A$ and $\mathcal A'$ the distributions $p=f(\mathcal A)$ and $q=f(\mathcal A')$ with $p(x)=1$ and $q(x)=0.99$. It then seems reasonable that candidate $x$ also gets a share close to $1$ in the combined election $\mathcal A+\mathcal A'$, but WPC does not impose any constraint on the distribution $f(\mathcal A+\mathcal A')$.

One possible solution for this is to demand that, for each alternative $x$, the share assigned to $x$ in the joint profile $\mathcal{A}+\mathcal {A}'$ should be lower and upper bounded by its shares in $\mathcal A$ and $\mathcal A'$. We formalize this idea next.
\begin{definition}[Strong population consistency (SPC)]
	A distribution rule $f$ satisfies \emph{strong population consistency (SPC)} if $\min\{f(\mathcal A,x), f(\mathcal A',x)\}\leq f(\mathcal A+\mathcal A',x)\leq \max\{f(\mathcal A,x), f(\mathcal A',x)\}$
	for all candidates $x\in C$ and voter-disjoint profiles $\mathcal A$ and $\mathcal A'$.
\end{definition}

We believe that SPC aligns well with our motivation for population consistency as it ensures that the outcome for an election is closely tied to the outcomes selected for its subelections. Specifically, it immediately allows to give explanations for the outcome in terms of the distributions selected for subelections, and it ensures that the outcome in a city-wide election is not too different from the distributions chosen for the districts if these distributions are not inherently at conflict. 
However, as we show next, SPC turns out to be overly restrictive for approval-based budget division as it conflicts even with minimal efficiency considerations.
To formalize this claim, we say a distribution rule $f$ is \emph{unanimous} if $f(\mathcal{A},x)=1$ for all profiles $\mathcal{A}$ and candidates $x$ such that $x$ is the unique candidate that is approved by all voters in $\mathcal{A}$. We next prove that unanimity and SPC are incompatible with each other.


\begin{proposition}\label{prop:SPC}
	No unanimous distribution rule satisfies SPC.
\end{proposition}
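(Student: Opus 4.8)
The plan is to exhibit a single profile on three candidates together with three different ways of splitting it into two halves, and to argue that SPC and unanimity jointly force the share of \emph{every} candidate to be $0$ in that profile -- which is impossible because the shares must sum to $1$.

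The underlying gadget is the following. Suppose a profile $\mathcal{G}$ decomposes as $\mathcal{G}=\mathcal{P}+\mathcal{Q}$ for voter-disjoint profiles $\mathcal{P},\mathcal{Q}$ over the same candidate set, and suppose each of $\mathcal{P}$ and $\mathcal{Q}$ has a unique commonly approved candidate, with neither of these candidates equal to $x$. Then unanimity puts all mass on the respective common candidate, so $f(\mathcal{P},x)=f(\mathcal{Q},x)=0$, and the upper-bound part of SPC gives $f(\mathcal{G},x)\le \max\{f(\mathcal{P},x),f(\mathcal{Q},x)\}=0$, hence $f(\mathcal{G},x)=0$. Thus it suffices to build one profile $\mathcal{G}$ on $C=\{a,b,c\}$ that, for each candidate, admits such a halving into two subprofiles whose unique common candidates are both different from that candidate.

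For the construction I would take $C=\{a,b,c\}$ and let $\mathcal{G}$ contain two voters approving each of the ballots $\{a,b\}$, $\{a,c\}$, and $\{b,c\}$ (six voters). The key feature is that any two of these ballots of different type meet in exactly one candidate, so appropriate triples of voters become unanimous subprofiles. To eliminate $a$, I split $\mathcal{G}$ into the part formed by the two $\{a,b\}$-voters together with one $\{b,c\}$-voter (unique common candidate $b$) and the part formed by the two $\{a,c\}$-voters together with the other $\{b,c\}$-voter (unique common candidate $c$); to eliminate $b$, into the two $\{a,b\}$-voters plus one $\{a,c\}$-voter (common $a$) and the other $\{a,c\}$-voter plus the two $\{b,c\}$-voters (common $c$); and to eliminate $c$, into one $\{a,b\}$-voter plus the two $\{a,c\}$-voters (common $a$) and the other $\{a,b\}$-voter plus the two $\{b,c\}$-voters (common $b$). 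Each listed part has a unique common candidate and is therefore pinned down by unanimity, and each of the three splits uses all six voters, so each is a genuine decomposition of $\mathcal{G}$ into two voter-disjoint subprofiles.

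Applying the gadget to these three decompositions yields $f(\mathcal{G},a)=f(\mathcal{G},b)=f(\mathcal{G},c)=0$, which contradicts $\sum_{x\in C}f(\mathcal{G},x)=1$. I expect the only real difficulty to be discovering the gadget: namely, the idea of driving the contradiction through the normalization of the distribution by forcing every candidate's share to $0$, and realizing that a single profile can be halved into unanimous pieces in three overlapping ways (which is what makes the repeated two-element ballots, rather than singletons, the right choice). Once the profile is written down, checking the pairwise intersections and that each split partitions all six voters is purely mechanical. The secondary point I would verify explicitly is that SPC is invoked legitimately -- each split is into two voter-disjoint subprofiles over the common candidate set $C$, so the inequality applies to each of the three decompositions of the same profile $\mathcal{G}$.
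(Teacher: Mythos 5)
Your proof is correct and follows essentially the same strategy as the paper's: decompose a single profile in several ways into voter-disjoint subprofiles that each have a unique unanimously approved candidate, use unanimity plus the upper-bound half of SPC to force every candidate's share in the combined profile to $0$, and contradict the normalization $\sum_{x\in C} f(\mathcal G,x)=1$. The only difference is the witness profile (the paper uses four voters over the four ballots $\{a_i,b_j\}$ with two decompositions, each killing two candidates at once, whereas you use six voters over the three ballots $\{a,b\},\{a,c\},\{b,c\}$ with one decomposition per candidate); both gadgets are valid and your verification of the splits checks out.
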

\begin{proof}
	Assume for contradiction that there is a unanimous distribution rule $f$ that satisfies SPC. We will derive a contradiction by considering the following profile $\mathcal{A}$ with four voters and four candidates. We note that we can extend the profile to larger numbers of candidates by adding candidates that are not approved by any voter. Moreover, our construction can be generalized to larger numbers of voters by adding voters that report $\{a_1,b_1\}$ (or any other ballot in $\mathcal{A}$) as adding such voters does not affect the argument. 
	\begin{center}
	\begin{tabular}{lllll}
		$\mathcal{A}$: & $1$: $\{a_1, b_1\}$ & $1$: $\{a_1, b_2\}$ & $1$: $\{a_2, b_1\}$ & $1$: $\{a_2, b_2\}$
	\end{tabular}
	\end{center}
	
	Firstly, we note that $\mathcal{A}$ can be decomposed into two profiles $\mathcal{A}^1$ and $\mathcal{A}^2$, where $a_1$ and $a_2$ are unanimously approved. 
		\begin{center}
		\begin{tabular}{llllll}
			$\mathcal{A}^1$: & $1$: $\{a_1, b_1\}$ & $1$: $\{a_1, b_2\}$\hspace{3cm} & $\mathcal{A}^2$: & $1$: $\{a_2, b_1\}$ & $1$: $\{a_2, b_2\}$
		\end{tabular}
	\end{center}
	Unanimity requires for these profiles that $a_1$ and $a_2$ get a share of $1$, respectively. In turn, this means that $b_1$ and $b_2$ get a share of $0$ in both profiles, so SPC requires that $f(\mathcal A, b_1)=f(\mathcal A, b_2)=0$.

	Secondly, we can also decompose $\mathcal{A}$ into two profiles $\mathcal{A}^3$ and $\mathcal{A}^4$, where $b_1$ and $b_2$ are unanimously approved, respectively.
	\begin{center}
		\begin{tabular}{llllll}
			$\mathcal{A}^3$: & $1$: $\{a_1, b_1\}$ & $1$: $\{a_2, b_1\}$\hspace{3cm} & $\mathcal{A}^4$: & $1$: $\{a_1, b_2\}$ & $1$: $\{a_2, b_2\}$
		\end{tabular}
	\end{center}
	 For these profiles, unanimity postulates that $a_1$ and $a_2$ both obtain a share of $0$. In turn, SPC implies that $f(\mathcal A, a_1)=f(\mathcal A, a_2)=0$. However, this means that every alternative obtains a share of $0$ in $\mathcal{A}$, which contradicts the definition of a distribution. Hence, our assumption is wrong and no unanimous distribution rule satisfies SPC.
\end{proof}

Intuitively, the proof of \Cref{prop:SPC} relies on the fact that SPC entails a prohibitively high degree of independence between the shares of candidates. In particular, if a candidate $x$ receives a share of $0$ in two disjoint elections, SPC requires that it also receives a share of $0$ in the joint election, even if the shares of the other candidates differ vastly. To circumvent \Cref{prop:SPC}, we will thus introduce an intermediate population consistency notion that allows for more dependencies between the candidates while still giving strong consistency guarantees. Specifically, the idea of \emph{ranked population consistency (RPC)} is to consider the ordinal rankings obtained by ordering the candidates in decreasing order of their assigned shares and to require SPC only for candidates $x$ such that the prefixes of these rankings up to candidate $x$ agree. To make this idea more formal, we define the \emph{distribution ranking} $\succsim^p$ of a distribution $p$ as the binary relation over $C$ given by $x\succsim^p y$ if and only if $p(x)\geq p(y)$ for all $x,y\in C$. Further, we define by ${\succsim^p}|_{x}$ the restriction of $\succsim^p$ to candidates that have a share of at least $p(x)$, i.e., ${\succsim^p}|_{x}$ is the restriction of $\succsim^p$ to the set $\{y\in C\colon p(y)\geq p(x)\}$. Then, ranked population consistency requires that strong population consistency only holds for profiles $\mathcal{A}$ and $\mathcal {A'}$ and candidates $x\in C$ such that ${\succsim^{f(\mathcal A)}}|_{x} ={ \succsim^{f(\mathcal A)}}|_{x}$.

\begin{definition}[Ranked population consistency (RPC)]
	A distribution rule $f$ satisfies \emph{ranked population consistency (RPC)} if
	$\min\{f(\mathcal A,x), f(\mathcal A',x)\}\leq f(\mathcal A+\mathcal A',x)\leq \max\{f(\mathcal A,x), f(\mathcal A',x)\}$
	for all candidates $x\in C$ and voter-disjoint profiles $\mathcal A$ and $\mathcal A'$ such that ${\succsim^{f(\mathcal A)}}|_{x}={\succsim^{f(\mathcal A')}}|_{x}$.
\end{definition}

Less formally, we may imagine for RPC that we go through the distribution rankings from top to bottom and apply the SPC condition as long as the rankings agree: if $f(\mathcal{A})$ and $f(\mathcal{A}')$ assign the maximal share to the same candidate, SPC needs to hold for this candidate in $\mathcal A + \mathcal A'$. If these distributions further agree on the candidate with the second highest share, SPC needs to hold for this candidate, too. This reasoning is repeated until we arrive at the first disagreement in the distribution rankings of $f(\mathcal A)$ and $f(\mathcal A')$. From this description, it should be clear that SPC implies RPC. Moreover, RPC implies WPC because the distribution rankings for $f(\mathcal{A})$ and $f(\mathcal{A}')$ agree when $f(\mathcal{A})=f(\mathcal{A}')$. We further observe that the proof of \Cref{prop:SPC} does not work with RPC and, as we will soon see, there is indeed an appealing distribution rule that satisfies unanimity and RPC. We therefore believe that RPC is a desirable condition as it preserves the normative appeal of SPC for ``structurally similar'' distributions while not being overly restrictive. Lastly, it may seem even more intuitive to require for the definition of RPC that ${\succsim^{f(\mathcal A+\mathcal A')}}|_x={\succsim^{f(\mathcal A)}}|_x$ when ${\succsim^{f(\mathcal A)}}|_x={\succsim^{f(\mathcal A')}}|_x$. We decided against this notion since it is unrelated to WPC and SPC, but we observe that all of our results remain true for this alternative definition.

\subsection{Fairness}

A central concern in budget division problems is fairness: we should select outcomes that guarantee an appropriate amount of utility to all voters. In budget division, fairness is typically formalized based on the idea that each voter deserves to control a share of $\frac{1}{n}$ of the total budget. Maybe the most direct formalization of this idea is \emph{decomposability} \citep{BBP+19a}, which requires that the selected distribution $p$ can be decomposed into payments from individual voters such that each voter pays exactly $\frac{1}{n}$ to his approved candidates.

\begin{definition}[Decomposability]
  A distribution $p$ is \emph{decomposable} for an approval profile $\mathcal{A}$ if there exist individual distributions $\{p^i\}_{i\in N}$ such that (i) $p = \frac{1}{n}\sum_{i\in N} p^i$ and (ii) $p^i(x) = 0$ for all $i\in N, x\notin A_i.$ A distribution rule $f$ is decomposable if $f(\mathcal{A})$ is decomposable for every profile $\mathcal{A}$.
\end{definition}

While decomposability formalizes a basic fairness notion, it fails to capture group synergies between voters. We thus view this condition as a minimal fairness requirement and only discuss decomposable rules in this work. By contrast, we will measure the fairness of distribution rules by studying stronger axioms in a quantitative manner. In particular, we will investigate the approximation ratio of distribution rules to average fair share (AFS) and core, two of the strongest fairness conditions in budget division. We will next discuss these two axioms, which were both suggested by \citet{ABM20a} in the context of budget division.\footnote{The core was first suggested by \citet{Auma61b} in the context of cooperative games and it has recently been used as fairness notion in numerous social choice settings \citep[e.g.,][]{FGM16a,ABC+16a,EKPS24a}.}
\begin{itemize}
	\item \emph{Average fair share (AFS)} lower bounds the average utility of every group of voters $S$ that support a common candidate $x$ by the average utility this group would enjoy if all voters in $S$ allocate their $\frac{1}{n}$ share to $x$. More formally, a distribution $p$ satisfies AFS for a profile $\mathcal{A}$ if $ \frac{1}{|S|} \sum_{i \in S} u_i(p) \geq \frac{|S|}{n}$ for all groups of voters $S\subseteq N$ such that $\bigcap_{i\in S} A_i\neq \emptyset$.
	\item \emph{Core} requires that no group of voters $S$ can find a distribution that only uses the $\frac{1}{n}$ shares of the voters in $S$, weakly increases the utility of all voter in $S$, and strictly increases the utility of some voter in $S$. More formally, a distribution $p$ satisfies core for a profile $\mathcal A$ if there is no distribution $q$ and group of voters $S$ such that $\frac{|S|}{n} u_i(q)\geq u_i(p)$ for all $i\in S$ and $\frac{|S|}{n} u_i(q)> u_i(p)$ for some $i\in S$.
\end{itemize}

We note that both of these axioms are very challenging as they give strong lower bounds on the utilities of the voters. To allow for a more fine-grained analysis, we will thus consider approximate versions of these axioms, called $\alpha$-average fair share ($\alpha$-AFS) and $\alpha$-core, which relax the corresponding axioms by introducing a multiplicative approximation factor. In more detail, $\alpha$-AFS weakens the lower bounds on the average utilities of groups of voters with a commonly approved candidate, whereas $\alpha$-core rules out that there is a set of voters $S$ such that the voters in $S$ can obtain $\alpha$ times their original utility by only using their shares.

\begin{definition}[$\alpha$-AFS]
A distribution~$p$ satisfies \emph{$\alpha$-AFS} for an approval profile $\mathcal{A}$ and an approximate factor $\alpha\geq 1$ if
$\alpha\cdot \frac{1}{|S|} \cdot \sum_{i \in S} u_i(p) \geq \frac{|S|}{n}$
 for all~$S \subseteq N$ with $\bigcap_{i \in S} A_i \neq \emptyset$.
\end{definition}

\begin{definition}[$\alpha$-core]
	A distribution~$p$ satisfies \emph{$\alpha$-core} for an approval profile $\mathcal{A}$ and an approximation factor $\alpha\geq 1$ if for all~$S \subseteq N$, there is no distribution~$q$ such that $\frac{|S|}{n} \cdot u_i(q)\geq \alpha \cdot u_i(p) $ for all~$i \in S$, and there is some voter~$i \in S$ such that $ \frac{|S|}{n} \cdot u_i(q)>\alpha \cdot u_i(p)$.
\end{definition}

Moreover, we say that a distribution rule $f$ satisfies $\alpha$-AFS (resp., $\alpha$-core) for an approximation factor $\alpha\geq 1$ if $f(\mathcal A)$ satisfies $\alpha$-AFS (resp., $\alpha$-core) for every approval profile $\mathcal{A}$. These notions allow us to quantify the fairness of distribution rules by analyzing the minimal value of $\alpha$ for which they satisfy the given axioms. In particular, a smaller value of $\alpha$ corresponds to a higher degree of fairness and $1$-AFS (resp., $1$-core) is equivalent to the original notion of AFS (resp., core). We note that similar multiplicative approximations of fairness axioms have been considered before, albeit not in the context of budget division with dichotomous preferences \citep[e.g.,][]{PeSk20a,MSWW22a,EKPS24a,BLS24a}.

\begin{example}[Fairness axioms]
	We will next discuss an example to further illustrate our fairness axioms. To this end, consider the following approval profile $\mathcal A$ with $6$ voters and $4$ candidates.\medskip
	
{
	\centering	
	\begin{tabular}{@{}lllllll@{}}
		$\mathcal A$: & $1$: $\{a,b_1\}$ & $1$: $\{a,b_2\}$ & $1$: $\{a,b_3\}$ & $1$: $\{b_1\}$ & $1$: $\{b_2\}$ & $1$: $\{b_3\}$
	\end{tabular}\par
}\smallskip

	Moreover, let $p$ denote the distribution given by $p(b_1)=p(b_2)=p(b_3)=\frac{1}{3}$. First, we note that this distribution is decomposable for $\mathcal A$ as demonstrated by the following decomposition: the first and fourth voter pay their $\frac{1}{6}$ share to $b_1$, the second and fifth voter pay for $b_2$, and the third and sixth voter pay for $b_3$. However, the distribution fails both AFS and core because, intuitively, the voters approving $a$ deserve half of the budget but each of them only has a utility of $\frac{1}{3}$. We thus compute the approximation ratio of $p$ to core and AFS, and define to this end $S$ as the group of voters approving $a$. By its definition, $\alpha$-AFS requires that $\frac{\alpha}{3}=\alpha \frac{1}{|S|} \sum_{i\in S} u_i(p)\geq \frac{|S|}{|N|}=\frac{1}{2}$. Hence, $p$ can only satisfy $\alpha$-AFS for $\alpha\geq \frac{3}{2}$ and it turns out that it actually satisfies $\frac{3}{2}$-AFS by checking the constraints for all other candidates. On the other hand, for $\alpha$-core, consider the distribution $q$ given by $q(a)=1$. For this distribution, it holds for all $i\in S$ that $\frac{1}{2}=\frac{|S|}{|N|} u_i(q)\geq u_i(p)=\frac{1}{3}$, so it follows that $p$ can only satisfy $\alpha$-core for $\alpha\geq \frac{3}{2}$. Moreover, it can again be checked that $p$ satisfies $\frac{3}{2}$-core by considering all other groups of voters. Hence, while the distribution $p$ may not fully satisfy AFS and core, it is still rather fair.
	\end{example}

As the last point in this section, we will discuss the relation between $\alpha$-AFS and $\alpha$-core. To this end, we first recall that AFS and core are logically incomparable, i.e., there are instances where a distribution satisfies AFS but not core and vice versa \citep{ABM20a}. By contrast, we will next prove that the approximation ratios to core and AFS are mathematically related. We note that the implications given in the following proposition are asymptotically tight (see Remark \ref{para:tight} for details).

\begin{restatable}{proposition}{afscore}\label{prop:afs-core}
	If a distribution $p$ satisfies $\alpha$-AFS for an approval profile $\mathcal{A}$, it satisfies $\alpha(1+\log(n))$-core. Conversely, if a distribution $p$ satisfies $\alpha$-core for an approval profile $\mathcal{A}$, it satisfies $2\alpha$-AFS.
\end{restatable}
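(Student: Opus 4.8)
The two implications are logically independent, so I would prove them separately; the converse ($\alpha$-core $\Rightarrow 2\alpha$-AFS) is routine, while the forward direction carries all the difficulty.

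For the converse, fix $S$ with a common candidate $x\in\bigcap_{i\in S}A_i$ and write $s=|S|$. The plan is a greedy peeling based on a single claim: for every nonempty $T\subseteq S$ there is a voter $i\in T$ with $u_i(p)\ge\frac{|T|}{\alpha n}$. This follows by applying the $\alpha$-core condition to the coalition $T$ and the deviation $q$ with $q(x)=1$; since $x\in A_i$ gives $u_i(q)=1$ for all $i\in T$, the core forbids $\frac{|T|}{n}\ge\alpha u_i(p)$ from holding for all $i\in T$ with one strict inequality, so some $i\in T$ must satisfy $\alpha u_i(p)\ge\frac{|T|}{n}$. Starting from $T_0=S$ and repeatedly extracting such a voter and deleting it, I obtain an enumeration $i_1,\dots,i_s$ of $S$ with $u_{i_k}(p)\ge\frac{s-k+1}{\alpha n}$. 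Summing the resulting arithmetic series yields $\sum_{i\in S}u_i(p)\ge\frac{1}{\alpha n}\sum_{k=1}^{s}k=\frac{s(s+1)}{2\alpha n}\ge\frac{s^2}{2\alpha n}$, which is precisely the $2\alpha$-AFS bound $2\alpha\cdot\frac{1}{|S|}\sum_{i\in S}u_i(p)\ge\frac{|S|}{n}$.

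For the forward direction I would argue by contradiction: suppose a coalition $S$ (with $s=|S|$) and distribution $q$ block $p$ at factor $\beta:=\alpha(1+\log n)$, so that $u_i(q)\ge\frac{\beta n}{s}u_i(p)$ for all $i\in S$. Two consequences of $\alpha$-AFS drive the argument. Applying AFS to singletons gives $u_i(p)\ge\frac{1}{\alpha n}$ for every voter; applying it to the $j$ lowest-utility supporters of any candidate $x$ shows that their $j$-th smallest utility is at least $\frac{j}{\alpha n}$, which rewrites as the count bound $|\{i\in N_x:u_i(p)\le\tau\}|\le\alpha n\tau$ for every threshold $\tau$. The heart of the proof is a \emph{scale-free} inequality obtained by summing the blocking condition over the low-utility voters $V_\tau:=\{i\in S:u_i(p)\le\tau\}$. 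Writing $S_x=\{i\in S:x\in A_i\}$, expanding $u_i(q)=\sum_{x\in A_i}q(x)$, switching the order of summation, and invoking the count bound together with $\sum_x q(x)=1$ gives $\frac{\beta n}{s}\sum_{i\in V_\tau}u_i(p)\le\sum_{i\in V_\tau}u_i(q)=\sum_x q(x)\,|\{i\in S_x:u_i(p)\le\tau\}|\le\alpha n\tau$, that is, $F(\tau):=\sum_{i\in V_\tau}u_i(p)\le\frac{\alpha s}{\beta}\,\tau$ for all $\tau$. I would then recover the coalition size from the utility profile via $|S|=F(1)+\int_0^1 F(t)\,t^{-2}\,dt$ (integration by parts against the counting measure). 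Since $F(t)=0$ for $t<\frac1{\alpha n}$, substituting $F(t)\le\frac{\alpha s}{\beta}t$ gives $s\le\frac{\alpha s}{\beta}\bigl(1+\log(\alpha n)\bigr)$, forcing $\beta\le\alpha(1+\log(\alpha n))$ and contradicting the choice of $\beta$.

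The main obstacle is exactly this logarithmic accounting. Any \emph{global} comparison of $\sum_i u_i(p)$ with $\sum_i u_i(q)$ (via Cauchy--Schwarz or a single utility threshold) only produces a $\sqrt{n}$-type bound, so the $\log n$ factor has to be extracted by exploiting the pointwise blocking inequality at \emph{every} utility scale simultaneously, which is what summing over the sets $V_\tau$ and integrating $t^{-2}F(t)$ accomplishes. A cruder dyadic bucketing of the utility scales would work but lose an extra constant factor, so the integral (equivalently, a harmonic sum over scales ranging in $[\Theta(1/n),1]$) is what pins the leading coefficient to $1$. I expect the fiddly points to be the handling of ties and the strict-versus-weak inequalities in the core definition, and the precise endpoint $\frac1{\alpha n}$ versus $\frac1n$ inside the logarithm; none of these should affect the stated $\alpha(1+\log n)$ bound.
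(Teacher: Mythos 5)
Your converse direction ($\alpha$-core $\Rightarrow 2\alpha$-AFS) is exactly the paper's argument: peel voters out of $S$ using the deviation $q(x)=1$, obtain $u_{i_t}(p)\geq\frac{1}{\alpha}\cdot\frac{|S|-(t-1)}{n}$, and sum the arithmetic series. No issues there. Your forward direction takes a genuinely different route from the paper, which instead bounds the proportional-fairness quantity $\max_{x}\frac{1}{n}\sum_{i\in N_x(\mathcal A)}\frac{1}{u_i(p)}$ by a discrete harmonic sum $\alpha\sum_{j=1}^{|N_x(\mathcal A)|}\frac{1}{j}\leq\alpha(1+\log n)$ (using the same quantile bound you derive, namely that the $j$-th smallest utility in $N_x(\mathcal A)$ is at least $\frac{j}{\alpha n}$) and then invokes PF $\Rightarrow$ core. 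Your version is more self-contained, since it argues directly against a blocking pair $(S,q)$ rather than citing the PF-to-core implication, and the identity $|S|=F(1)+\int_0^1 F(t)t^{-2}\,dt$ together with the scale-indexed inequality $F(\tau)\leq\frac{\alpha s}{\beta}\tau$ is a correct continuous analogue of the harmonic sum.

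However, there is a quantitative gap that you explicitly wave away and that does matter for the stated constant: integrating $\frac{\alpha s}{\beta}t\cdot t^{-2}$ from $\frac{1}{\alpha n}$ to $1$ yields $\beta\leq\alpha\bigl(1+\log(\alpha n)\bigr)$, which for $\alpha>1$ is \emph{not} a contradiction with $\beta=\alpha(1+\log n)$, since $\alpha(1+\log n)\leq\alpha(1+\log(\alpha n))$. As written, your argument only establishes $\alpha(1+\log(\alpha n))$-core. The repair is to use, in addition to the count bound $|\{i\in S_x\colon u_i(p)\leq\tau\}|\leq\alpha n\tau$, the trivial cap $|\{i\in S_x\colon u_i(p)\leq\tau\}|\leq|S|$, which gives $F(\tau)\leq\frac{s}{\beta n}\min(\alpha n\tau,\,s)$. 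Splitting the integral at $\tau_0=\frac{s}{\alpha n}$ then yields $s\leq\frac{\alpha s}{\beta}(1+\log s)\leq\frac{\alpha s}{\beta}(1+\log n)$, which (using the strict inequality in the core violation) produces the desired contradiction. The paper's discrete harmonic sum sidesteps this entirely because the $j$-th term is bounded by $\frac{\alpha n}{j}$ rather than by the worst-case $\alpha n\cdot\alpha n=\alpha^2 n^2$ at the bottom scale, so the sum telescopes to $\log|N_x(\mathcal A)|\leq\log n$ with no extraneous $\log\alpha$.
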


\begin{proof}
	We will first show the implication from approximate AFS to approximate core and thus let $p$ denote a distribution satisfying $\alpha$-AFS for an approval profile $\mathcal A$. To show that $p$ satisfies $\alpha(1+\log n)$-core, we need an auxiliary concept called \emph{proportional fairness (PF)}. We thus say that a distribution $p$ satisfies $\beta$-PF for an approval profile $\mathcal{A}$ if
	$$
	\text{PF}(p) \coloneqq \max \limits_{q\in \Delta(C)} \frac{1}{n}\sum\limits_{i\in N} \frac{u_i(q)}{u_i(p)} \leq \beta.
	$$
	\citet{EKPS24a} have shown that if a distribution satisfies $\beta$-PF for a profile $\mathcal A$, then the distribution also satisfies $\beta$-core. Thus, it suffices to show that $p$ satisfies $\alpha(1+\log(n))$-PF. To this end, we note analogous to~\citep{EKPS24a} that
	\begin{align}
	\label{eq:PF}
		\text{PF}(p) &= \max \limits_{q\in \Delta(C)} \frac{1}{n}\sum\limits_{i\in N} \frac{u_i(q)}{u_i(p)}
		= \max \limits_{x\in C} \frac{1}{n} \sum\limits_{i\in N} \frac{u_i(x)}{u_i(p)}
		=  \max \limits_{x\in C} \frac{1}{n} \sum\limits_{i\in N_x(\mathcal A)} \frac{1}{u_i(p)}.
	\end{align}
	Consider now an arbitrary candidate $x$. Since $p$ satisfies $\alpha$-AFS, it holds that
		$\frac{1}{|N_x(\mathcal A)|} \sum_{i\in N_{x}(\mathcal A)} u_i(p) \geq \frac{1}{\alpha} \cdot\frac{|N_x(\mathcal A)|}{n}$.
	Since the average utility of voters in $N_x(\mathcal A)$ is at least $\frac{1}{\alpha}\cdot \frac{|N_x(\mathcal A)|}{n}$, there must exist a voter $i_1\in N_x(\mathcal A)$ who gets utility $u_{i_1}(p)\geq \frac{1}{\alpha} \cdot\frac{|N_x(\mathcal A)|}{n}$. Consider now the group of voters $N_x(\mathcal A)\setminus \{i_1\}$. By applying the $\alpha$-AFS bound to this group, we derive that  the average utility of voters in $N_x(\mathcal A)\setminus \{i_1\}$ is at least $\frac{1}{\alpha} \cdot \frac{|N_x(\mathcal A)|-1}{n}$. Hence, there is a voter $i_2$ such that $u_{i_2}(p)\geq \frac{1}{\alpha}\cdot \frac{|N_x(\mathcal A)|-1}{n}$.  By applying this argument iteratively, we see that there exists an order of the voters in $N_x(\mathcal A) =\{i_1, \dots, i_{|N_x(\mathcal A)|}\}$ such that $u_{i_t}(p) \geq \frac{1}{\alpha}\cdot \frac{|N_x(\mathcal A)|-(t-1)}{n}$ for each $t\in[|N_x(\mathcal A)|]$. Using this bound in conjunction with \Cref{eq:PF}, we infer that
	\begin{align*}
		\text{PF}(p)  &=  \max \limits_{x\in C} \frac{1}{n} \sum\limits_{i\in N_x(\mathcal A)} \frac{1}{u_i(p)}\\
		&\leq  \max \limits_{x\in C} \frac{1}{n} \sum_{t=1}^{|N_x(\mathcal A)|} \frac{\alpha n}{ |N_x(\mathcal A)|-(t-1)}\\
		& \leq \alpha \cdot \max_{x \in C} \left( 1 + \log(|N_x(\mathcal{A})|) \right) \\
		&\leq \alpha (1 + \log(n)).
	\end{align*}
Hence, $p$ satisfies $\alpha (1+\log(n))$-PF, and therefore also $\alpha(1+\log(n))$-core.

We now show that if $p$ satisfies $\alpha$-core for an approval profile $\mathcal A$, then it also satisfies $2\alpha$-AFS. For this, assume that $p$ is a distribution that satisfies $\alpha$-core for an approval profile $\mathcal A$. Moreover, we consider an arbitrary group of voters $S$ with $\bigcap_{i\in S} A_i \neq \emptyset$, let $x\in \bigcap_{i\in S} A_i$, and let $q$ denote the distribution with $q(x)=1$.
Because $p$ satisfies $\alpha$-core and $x\in\bigcap_{i\in S} A_i$, there exists a voter $i_1\in S$ such that $\alpha \cdot u_{i_1}(p)\geq {\frac{|S|}{n} \cdot u_{i_1}(q)}=\frac{|S|}{n}$. Next, by analyzing the group $S\setminus \{i_1\}$, we derive from $\alpha$-core that there is a voter $i_2\in S\setminus \{i_1\}$ such that $\alpha\cdot  u_{i_2}(p)\geq \frac{|S|-1}{n}$ as the group of voters $S\setminus \{i_1\}$ can otherwise benefit by deviating to $q$. By repeatedly applying this argument, it follows that there is an order $i_1,\dots, i_{|S|}$ of the voters in $S$ such that $\alpha \cdot u_{i_t}(p) \geq  \frac{|S|-(t-1)}{n} $ for each $t\in[|S|]$. We hence conclude that
\[
\sum_{i \in S} u_i(p) = \sum_{t = 1}^{|S|} u_{i_t}(p)
\geq \frac{1}{\alpha} \sum_{t = 1}^{|S|} \frac{|S|-(t-1)}{n}
= \frac{1}{\alpha} \sum_{t = 1}^{|S|} \frac{t}{n}
= \frac{1}{\alpha} \cdot \frac{|S| \cdot (|S|+1)}{2n}.
\]
This means that $ \frac{2\alpha}{|S|}\sum\limits_{i\in S} u_i(p)\geq \frac{|S|+1}{n} \geq \frac{|S|}{n}$. Since this bound holds for every group $S\subseteq N$ with $\bigcap_{i\in S} A_i \neq \emptyset$, $p$ satisfies $2\alpha$-AFS for $\mathcal A$.
\end{proof}

\section{Analysis of Existing Rules}
\label{sec:alg_comparison}

We will now analyze known decomposable rules, namely the Nash product rule
(\nash), the conditional utilitarian rule (\cut), the fair utilitarian rule (\fut), and the uncoordinated equal shares rule (\ues), with respect to monotonicity, our population consistency axioms, and their approximation ratios to AFS and core. We refer to the subsequent subsections for the definitions of these rules and to \Cref{tab:alg_comparison} for a summary of the results of our analysis. In particular, this table shows that only \nash satisfies AFS and core, but it violates monotonicity and only satisfies weak population consistency. Conversely, \ues satisfies monotonicity and even strong population consistency, but it is only a $\Theta(n)$-approximation for AFS and core.\footnote{For the upper bounds on the approximation ratio to AFS and core, we note that for all profiles $\mathcal A$ and all decomposable distributions $p$, it holds that $u_i(p)\geq \frac{1}{n}$ for all $i\in N$ and that $\frac{1}{|S|} \sum_{i\in S} u_i(p)\geq \frac{1}{n}$ for all $S\subseteq N$. It thus follows immediately that every decomposable distribution rule satisfies $n$-core and $n$-AFS.} Finally, \cut and \fut are also only $\Theta(n)$-approximations to AFS and core and violate even weak population consistency. These results demonstrate the need for new rules in order to simultaneously satisfy (approximate) fairness, monotonicity, and demanding population consistency notions. 



\subsection*{Nash product rule (\nash)}

The Nash product rule (\nash) selects a distribution $p$ maximizing the Nash welfare $\prod_{i\in N} u_i(p)$  \citep{BMS05a,FGM16a,ABM20a}.
Although the solution to this convex optimization problem may be irrational, it can be approximated efficiently and its solution is guaranteed to satisfy both AFS and core.
However, \citet{BBPS21a} showed that \nash fails monotonicity, and we will show next that \nash violates ranked population consistency. More specifically, we will demonstrate that even if a candidate receives the maximal share in two disjoint elections for which \nash returns distributions with identical distribution rankings, the candidate's share can decrease when combining the two elections.

\begin{proposition}\label{prop:Nash}
	\nash satisfies weak population consistency but fails ranked population consistency.
\end{proposition}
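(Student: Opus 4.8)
The plan is to prove the two assertions by separate arguments: a short convex-optimization argument for weak population consistency, and an explicit three-candidate counterexample for the failure of ranked population consistency.

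For WPC, I would pass to the logarithm of the Nash welfare and set $W_{\mathcal B}(q)=\sum_{i\in N(\mathcal B)}\log u_i(q)$ for a profile $\mathcal B$; this is a concave function of $q\in\Delta(C)$ whose maximizer is the \nash distribution. The key observation is its additivity over voter-disjoint profiles, $W_{\mathcal A+\mathcal A'}=W_{\mathcal A}+W_{\mathcal A'}$. Suppose now $f(\mathcal A)=f(\mathcal A')=p$. Then $p$ maximizes both $W_{\mathcal A}$ and $W_{\mathcal A'}$, so for every $q\in\Delta(C)$ we get $W_{\mathcal A+\mathcal A'}(q)=W_{\mathcal A}(q)+W_{\mathcal A'}(q)\le W_{\mathcal A}(p)+W_{\mathcal A'}(p)=W_{\mathcal A+\mathcal A'}(p)$, whence $p$ also maximizes $W_{\mathcal A+\mathcal A'}$. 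Since the \nash distribution is the maximizer of the Nash welfare, this yields $f(\mathcal A+\mathcal A')=p=f(\mathcal A)$, which is exactly WPC. The only care needed is uniqueness of the maximizer, which I would justify by noting that $\sum_i\log u_i$ is strictly concave in the voters' utility vector, so the optimal utilities — and hence the shares of all candidates approved by at least one voter — are pinned down uniquely.

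For the failure of RPC, I would exhibit two voter-disjoint profiles over $\{a,b,c\}$ in which $a$ is the strict \nash-top in each subelection yet loses share in the union. Concretely, take $\mathcal A$ with ballots $\{a,c\},\{a\},\{a\},\{b\},\{b\}$ and $\mathcal A'$ with ballots $\{a\},\{a\},\{a\},\{c\},\{c\}$. A first-order computation gives $f(\mathcal A)=(\tfrac35,\tfrac25,0)$ — the lone $c$-supporter also approves $a$, so shifting any mass from $c$ to $a$ is a strict improvement and it is optimal to set $p_c=0$ — and $f(\mathcal A')=(\tfrac35,0,\tfrac25)$. Thus $a$ is the unique maximum in both, so ${\succsim^{f(\mathcal A)}}|_a={\succsim^{f(\mathcal A')}}|_a$ is the trivial order on $\{a\}$ and RPC does apply to $a$. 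For $\mathcal A+\mathcal A'$ (namely $\{a,c\}$ once, $\{a\}$ five times, $\{b\}$ and $\{c\}$ twice each), the stationarity conditions $\tfrac{1}{p_a+p_c}+\tfrac{5}{p_a}=\tfrac{2}{p_b}=\tfrac{1}{p_a+p_c}+\tfrac{2}{p_c}$ on the simplex solve to $p=(\tfrac47,\tfrac15,\tfrac{8}{35})$. Hence $f(\mathcal A+\mathcal A',a)=\tfrac47<\tfrac35=\min\{f(\mathcal A,a),f(\mathcal A',a)\}$, violating the lower bound required by RPC. I would confirm that this interior critical point is the genuine (unique) optimum by observing that $\{a\},\{b\},\{c\}$ all appear as ballots, making $W_{\mathcal A+\mathcal A'}$ strictly concave on the simplex.

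The main obstacle is locating a counterexample of the right shape, not the verification. Disjoint-support constructions are useless, since there the \nash shares scale proportionally and are invariant under combination, and the most natural overlapping constructions (bundling $a$ with its competitors) push $a$'s share \emph{up} rather than down. The decisive idea is to plant a single $\{a,c\}$-voter in a subelection where $c$ is otherwise unsupported: there this voter behaves like a pure $a$-supporter (so $p_c=0$ and $a$ earns a large share), but once combined with a subelection in which $c$ is strong, the voter is partly satisfied through $c$, freeing the Nash optimum to divert mass away from $a$ toward $b$ and $c$ — precisely the drop that breaks RPC.
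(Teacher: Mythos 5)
Your proof is correct. The WPC half is essentially the paper's argument (additivity of the log-Nash-welfare over voter-disjoint electorates), so there is nothing to compare there beyond one caveat: your justification of uniqueness is slightly overstated. Strict concavity in the \emph{utility vector} pins down the optimal utilities, but not the shares of individual candidates --- if two candidates are approved by exactly the same set of voters, mass can be shifted between them without changing any utility, so \nash is not single-valued without a tie-breaking convention. The paper handles this with a parenthetical appeal to consistent tie-breaking; you should do the same rather than claim the shares of all approved candidates are determined.

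For the failure of RPC your counterexample is genuinely different from the paper's and, I would say, nicer. The paper uses two profiles with $10$ and $9$ voters whose \nash outcomes must be computed numerically ($q(a)\approx 0.608$, etc.), and its two subelection rankings agree in full ($a\succ c\succ b$ in both). Your construction uses $5+5$ voters, every optimum is exactly computable by hand (I verified $f(\mathcal A)=(\tfrac35,\tfrac25,0)$, $f(\mathcal A')=(\tfrac35,0,\tfrac25)$, and $f(\mathcal A+\mathcal A')=(\tfrac47,\tfrac15,\tfrac{8}{35})$ via the stationarity conditions with multiplier $\lambda=n=10$), and the violation $\tfrac47<\tfrac35$ is clean. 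The one structural difference is that your two subelection rankings agree only on the top candidate ($a\succ b\succ c$ versus $a\succ c\succ b$), whereas the paper's agree everywhere; this is immaterial for the formal claim, since ${\succsim^{f(\mathcal A)}}|_a$ and ${\succsim^{f(\mathcal A')}}|_a$ are both the trivial relation on $\{a\}$ and RPC therefore does apply to $a$, but the paper's example makes the slightly stronger rhetorical point that \nash misbehaves even when the two sub-distributions are as structurally similar as possible. Your closing discussion of \emph{why} the example works (a $\{a,c\}$-voter who acts as a pure $a$-supporter in isolation but is partly satisfied through $c$ after merging) is a genuine insight absent from the paper.
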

\begin{proof}
First, for showing that \nash satisfies weak population consistency, let $\mathcal A$ and $\mathcal A'$ denote two voter-disjoint approval profiles such that \nash returns for both profiles the same distribution~$p$. Moreover, let $N$ denote the set of voters corresponding to $\mathcal A$ and $N'$ denote the set of voters corresponding to $\mathcal A'$. By the definition of \nash , it holds that $p$ maximizes both $\prod_{i\in N} u_i(p)$ and $\prod_{i\in N'} u_i(p)$. This immediately implies that $p$ also maximizes $\prod_{i\in N\cup N'} u_i(p)$. Hence, \nash will choose $p$ also for $\mathcal A+\mathcal A'$. (Strictly speaking, we need to consider tie-breaking in case multiple distributions maximize the Nash social welfare for $\mathcal A$ or $\mathcal A'$, but it is easy to see that the argument still holds when imposing any consistent tie-breaking mechanism).

Next, to see that \nash fails ranked population consistency, consider the following two profiles.\smallskip

{
  \centering
  \begin{tabular} {@{}llllll@{}}
  $\mathcal{A}$: & $3$: $\{a\}$ & $2$: $\{b,c\}$ & $2$: $\{c\}$ & $3$: $\{a,b\}$ \\
  $\mathcal{A}'$: & $1$: $\{a\}$ & $1$: $\{b\}$ & $1$: $\{c\}$ &$2$: $\{a,b\}$ & $4$: $\{a,c\}$.
  \end{tabular}\par
}\smallskip

By solving the corresponding convex programs, it can be shown that \nash chooses for $\mathcal{A}$ the lottery $p$ given by $p(a)=0.6$, $p(b)=0$, and $p(c)=0.4$ and for $\mathcal A'$ the lottery $q$ with $q(a)\approx 0.608$, $q(b)\approx0.157$, and $q(c)\approx0.235$. Hence, we have that ${\succsim^p}={\succsim^q}$. However, for $\mathcal A+\mathcal A'$, \nash chooses the lottery $r$ with $r(a)\approx 0.558$, $r(b)\approx0.137$, and $r(c)\approx0.305$. Thus, it holds that $p(a)>r(a)$ and $q(a)>r(a)$, which shows that ranked population consistency is violated even for the candidate that obtains the maximal share in two disjoint profiles.
\end{proof}

\begin{remark}
	In the proof of \Cref{prop:Nash} we show that the share of a candidate can decrease even if it has the maximal share in two profiles and the distribution rankings for these profiles agree. It can also be shown that the share of a candidate can significantly increase when combining two profiles for which \nash chooses distributions with the same distribution ranking, even if the considered candidate gets a share of $0$ in either of the two profiles. To see this, let $\mathcal A$ denote the profile where $50$ voters report $\{a\}$, $49$ voters report $\{b,c\}$, $49$ voters report $\{c\}$, and $50$ voters report $\{a,b\}$. For this profile, \nash selects the distribution $p$ with $p(a)=\frac{50}{99}$, $p(b)=0$, and $p(c)=\frac{49}{99}$. Next, let $\mathcal A'$ denote the profile where $1$ voter reports $\{a\}$, $1$ voter reports $\{c\}$ and $200$ voters report $\{a,b\}$. \nash chooses for this profile the distribution $q$ with $q(a)=\frac{201}{202}$, $q(b)=0$, and $q(c)=\frac{1}{202}$. It is straightforward to verify that the distribution rankings $\succsim^p$ and $\succsim^q$ coincide as {$a\succ^x c\succ^x b$} for both $x\in \{p,q\}$.
	Finally, in the combined profile $\mathcal A+\mathcal A'$ with $400$ voters, \nash selects the lottery $r$ with $r(a)=\frac{153}{300}$, $r(b)=\frac{97}{300}$, and $r(c)=\frac{50}{300}$, thus demonstrating another severe violation of RPC.
\end{remark}

\subsection*{Conditional Utilitarian Rule (\cut)}
First introduced by \citet{Dudd15a}, the conditional utilitarian rule (\cut) gives every voter control over a share of $\frac{1}{n}$ and assumes that each voter uniformly distributes this share across the candidates in $A_i$ with the highest approval scores.
More formally, we denote by $\text{\cut}(\mathcal{A}, i)=\arg\max_{x\in A_i} |N_x(\mathcal{A})|$ the subset of voter $i$'s approved candidates with maximal approval score.
Then, \cut returns the distribution $p$ defined by $p(x)=\frac{1}{n}\sum_{i\in N} \frac{\mathbb{I}[x\in\text{\cut}(\mathcal{A},i)]}{|\text{\cut}(\mathcal{A},i)|}$ for all $x\in C$, where $\mathbb{I}[x\in\text{\cut}(\mathcal{A},i)]$ is the indicator function that is $1$ if $x\in \text{\cut}(\mathcal{A},i)$ and $0$ otherwise.
\cut is known to satisfy strategyproofness and thus also monotonicity \citep{BBP+19a}.
However, as we show in the next two propositions, \cut fails even weak population consistency, and it is only a $\Theta(n)$-approximation to AFS and core.

\begin{proposition}
	\cut fails weak population consistency.
\end{proposition}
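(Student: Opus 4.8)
The plan is to exhibit two voter-disjoint profiles $\mathcal{A}$ and $\mathcal{A}'$ over the same candidates with $\cut(\mathcal{A})=\cut(\mathcal{A}')$ but $\cut(\mathcal{A}+\mathcal{A}')\neq\cut(\mathcal{A})$, which directly contradicts WPC. The lever I would use is the nonlinearity of \cut: each voter sends his whole $\frac1n$ share to the candidate in his ballot with the largest \emph{global} approval score, and approval scores add under the $+$ operation. Hence a voter's funded candidate is determined by an $\arg\max$ that can change in the union even when the two separate \cut distributions coincide. My goal is therefore to plant one ``swing'' ballot whose favorite flips when the profiles are merged.

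Concretely, I would work with three candidates $a,b,c$. For $\mathcal{A}$ I would take two voters with $\{a\}$, one with $\{a,b\}$, and one with $\{b,c\}$; here the scores satisfy $a>b>c$, so the swing voter $\{b,c\}$ funds $b$, and a quick computation gives $\cut(\mathcal{A})=(\frac34,\frac14,0)$. For $\mathcal{A}'$ I would inflate $c$'s approval score \emph{without} awarding it any share, by using five voters with $\{a,c\}$ (where $a$ narrowly outscores $c$, so they all fund $a$), one with $\{a\}$, and two with $\{b\}$; this yields scores $a>c>b$ yet the same distribution $\cut(\mathcal{A}')=(\frac34,\frac14,0)$. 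In the union the scores become $a=9$, $c=6$, $b=4$, so $c$ overtakes $b$: the swing voter $\{b,c\}$ now funds $c$ instead of $b$, and one computes $\cut(\mathcal{A}+\mathcal{A}')=(\frac34,\frac16,\frac{1}{12})$, which differs from $(\frac34,\frac14,0)$.

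The step I expect to be the main obstacle is arranging $\cut(\mathcal{A})=\cut(\mathcal{A}')$ while still permitting an $\arg\max$ flip. One cannot simply flip the winner of a \emph{single shared} ballot type between the two profiles: the winner of a ballot collects the large share, so opposite winners force $b$ to take a large share in one profile and $c$ in the other, making the two distributions incompatible. My resolution is to decouple ``score'' from ``share'' for the swing candidate $c$: in $\mathcal{A}'$ candidate $c$ accumulates a high raw approval score but stays strictly dominated (it always appears beside the higher-scoring $a$), so it receives score but zero share, and the remaining shares are balanced using the singleton ballots and the auxiliary candidate $a$. Once the construction is fixed this way, what remains is the routine verification of the three \cut distributions, after which $\cut(\mathcal{A})=\cut(\mathcal{A}')\neq\cut(\mathcal{A}+\mathcal{A}')$ establishes that \cut fails WPC.
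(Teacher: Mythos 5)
Your construction is correct (I verified all three \cut distributions: both profiles yield $(\frac34,\frac14,0)$ and the union yields $(\frac34,\frac16,\frac{1}{12})$), and it uses exactly the same idea as the paper's proof: a swing voter with a two-candidate ballot whose $\arg\max$ flips in the union because the other profile inflates one candidate's approval score without awarding it any share. The only differences are the specific numbers chosen.
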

\begin{proof}
	We consider the following two profiles $\mathcal A$ and $\mathcal A'$, both of which contain $10$ voters and $3$ candidates $C=\{a,b,c\}$.\medskip

	{
		\centering
		\begin{tabular} {@{}lllll@{}}
			$\mathcal{A}$: & $2$: $\{a\}$& $4$: $\{a,c\}$ & $1$: $\{c\}$ & $3$: $\{b\}$ \\
			$\mathcal{A}'$: & $6$: $\{a\}$ &  $2$: $\{b\}$ & $1$: $\{b,c\}$ & $1$: $\{c\}$ \\
		\end{tabular}\par
	}\medskip

	In both profiles, candidate $a$ is approved by the most voters, so every voter who approves $a$ sends his $\frac{1}{n}$ share to this candidate. Consequently, \cut assigns a share of $\frac{6}{10}$ to $a$ in both $\mathcal A$ and $\mathcal A'$. Now, in $\mathcal{A}$, all of the remaining four voters approve only a single candidate, so we immediately get that \cut chooses the distribution $p$ with $p(a)=\frac{6}{10}$, $p(b)=\frac{3}{10}$, and $p(c)=\frac{1}{10}$. On the other hand, for $\mathcal A'$, we note that three voters approve $b$ but only two voters approve $c$, so the single voter approving $\{b,c\}$ sends his share to $b$. So, the outcome for $\mathcal A'$ is again the distribution $p$ with $p(a)=\frac{6}{10}$, $p(b)=\frac{3}{10}$, and $p(c)=\frac{1}{10}$. Finally, for $\mathcal A+\mathcal A'$, \cut will first assign a share of $\frac{12}{20}=\frac{6}{10}$ to $a$ because it is approved by the most voters. However, in $\mathcal A+\mathcal A'$, $c$ is approved by $7$ voters whereas $b$ is only approved by $6$ voters. Hence, the voter approving $\{b,c\}$ sends his share to $c$ and $c$ gets a share of $\frac{3}{20}>\frac{1}{10}$. This proves that \cut violates weak population consistency as it chooses the same distribution for $\mathcal A$ and $\mathcal A'$ but not for $\mathcal A+\mathcal A'$.
\end{proof}

\begin{proposition}
	\cut does not satisfy $\alpha$-AFS or $\alpha$-core for any $\alpha<\frac{n}{2}-1$.
\end{proposition}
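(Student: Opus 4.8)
The plan is to exhibit, for every (even) $n$, a single explicit profile on which the distribution $p$ chosen by \cut simultaneously witnesses the failure of $\alpha$-AFS and $\alpha$-core for all $\alpha<\frac{n}{2}-1$. The design tension to resolve is the following: the witnessing group $S$ must share a common candidate $c$, which forces the approval score of $c$ to be at least $|S|$; yet we want $c$ (and every candidate the members of $S$ approve) to receive almost no share under \cut, so that each voter in $S$ ends up with utility only $\frac{1}{n}$ while AFS would demand average utility $\frac{|S|}{n}$. Since every voter's $\frac{1}{n}$ share lands on some approved candidate, $u_i(p)\ge\frac{1}{n}$ always holds, so the best we can hope for is to drive each $u_i(p)$ down to exactly $\frac{1}{n}$; the construction is engineered to achieve precisely this.

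Concretely, I would take candidates $c,a_1,\dots,a_s,z$ and set $n=2s+2$. The voters are: a group $S=\{1,\dots,s\}$ where voter $i$ reports $\{c,a_i\}$; a group $T$ of $s$ voters each reporting $\{a_1,\dots,a_s,z\}$; and two further voters reporting $\{z\}$. Then the approval scores are $s$ for $c$, $s+1$ for each $a_i$ (voter $i$ together with all of $T$), and $s+2$ for $z$. The point of this choice is a two-step diversion: each $a_i$ is made strictly more popular than $c$, so every voter in $S$ sends its share to $a_i$ rather than to $c$; and $z$ is made strictly more popular than every $a_i$, so every voter in $T$ sends its share to $z$ rather than to the $a_i$'s. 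Verifying these two strict score inequalities ($s+1>s$ and $s+2>s+1$) is the crux that makes the construction behave as intended.

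Given these diversions, reading off the \cut distribution is routine: $p(c)=0$, $p(a_i)=\frac{1}{n}$ (funded only by voter $i$), and $p(z)=\frac{s+2}{n}$, and one checks the shares sum to $1$. For the group $S$, which shares the common candidate $c$, each $i\in S$ has $u_i(p)=p(c)+p(a_i)=\frac{1}{n}$, so $\frac{1}{|S|}\sum_{i\in S}u_i(p)=\frac{1}{n}$. The $\alpha$-AFS constraint for $S$ then reads $\alpha\cdot\frac{1}{n}\ge\frac{s}{n}$, which fails for every $\alpha<s=\frac{n}{2}-1$. For core, I would have $S$ deviate to the distribution $q$ with $q(c)=1$: every $i\in S$ then gets $\frac{|S|}{n}u_i(q)=\frac{s}{n}$, whereas $\alpha\,u_i(p)=\frac{\alpha}{n}$, so for any $\alpha<s=\frac{n}{2}-1$ the group $S$ strictly improves for all its members and $\alpha$-core is violated. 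Note that this direct deviation yields the full $\frac{n}{2}-1$ bound for core; routing through \Cref{prop:afs-core} would only give roughly $\frac{n}{4}$, so the direct argument is preferable.

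The main obstacle is the instance design rather than any calculation: one must force $c$ to receive zero share despite being approved by the large group $S$ (which requires boosting each personal candidate $a_i$ above $c$) while simultaneously keeping each $a_i$ at share exactly $\frac{1}{n}$ (which requires the very voters who boost $a_i$ to divert their own money onward to $z$). Balancing the sizes $|S|=|T|=s$ against the two extra $z$-voters is what forces $n=2s+2$, i.e.\ $s=\frac{n}{2}-1$, and thereby pins the bound down exactly. For odd $n$ the same construction works after adding a single voter approving a fresh dummy candidate (or by a similar minor adjustment), which affects the bound only in lower-order terms.
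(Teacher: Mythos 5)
Your construction is, up to renaming ($z \leftrightarrow x^*$, $c \leftrightarrow x_{n/2}$, $a_i \leftrightarrow x_i$, $T \leftrightarrow S'$), identical to the paper's: same voter counts, same approval scores ($n/2+1$, $n/2$, $n/2-1$), same two-step diversion forcing $p(c)=0$ and $p(a_i)=\tfrac{1}{n}$, and the same witnessing group and deviation $q(c)=1$ for both the AFS and the core bound. The proof is correct and takes essentially the same approach as the paper.
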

\begin{proof}
	Consider the following approval profile $\mathcal A$ for an even number of voters $n$, which are partitioned into two sets $S=\{1,\dots, \frac{n}{2}-1\}$ and $S'=\{\frac{n}{2},\ldots, n-2\}$,  and $\frac{n}{2}+1$ candidates $C=\{x^*, x_1, \dots, x_{n/2}\}$.
	\begin{itemize}
		\item Each voter $i\in S$ reports $A_i=\{x_i, x_{n/2}\}$.
		\item Each voter $i\in S'$ reports $A_i = \{x^*, x_1, \ldots, x_{n/2-1}\}$.
		\item Voters $n$ and $n-1$ report $A_{n-1} = A_n = \{x^*\}$.
	\end{itemize}
	Now, let $p$ denote the distribution returned by \cut for this instance.
	Candidate $x^*$ is approved by the most voters, so all these voters send their $\frac{1}{n}$ share to $x^*$ and $p(x^*)=\frac{1}{2}+\frac{1}{n}$. Next, we consider the group of voters $S$.
	Each voter $i\in S$ will allocate his $\frac{1}{n}$ share to $x_i$ since $|N_{x_i}(\mathcal A)| > |N_{x_{n/2}}(\mathcal A)|$ for all $i\in S$. Thus, $p({x_i})=\frac{1}{n}$ for all $i\in S$ and $p({x_{n/2}}) = 0$.
	Together, we have that
	\begin{align*}
	 	\frac{1}{|S|} \sum_{i\in S} u_i(p) &= \frac{1}{|S|} \sum_{i\in S} \sum_{x\in A_i} p(x)
		= \frac{1}{|S|} \cdot \frac{|S|}{n}
		= \frac{2}{n-2} \cdot \frac{|S|}{n}
	 \end{align*}
	Since $\bigcap_{i\in S} A_i = \{x_{n/2}\} \neq \emptyset$, this means that \cut does not satisfy $\alpha$-AFS for any $\alpha < \frac{n}{2}-1$.

	Next, for core, consider the distribution $q$ with $q(x_{n/2})=1$. Then, for all $i\in S$,
	\begin{align*}
		\frac{u_i(q)}{u_i(p)}\cdot \frac{|S|}{n} = \frac{1}{1/n}\cdot \frac{n/2-1}{n} = \frac{n}{2}-1.
	\end{align*}
	It follows that \cut cannot satisfy $\alpha$-core for any $\alpha<\frac{n}{2}-1$.
\end{proof}

\subsection*{Fair Utilitarian Rule (\fut)}

First introduced by \citet{BMS02a} and later rediscovered by \citet{BBPS21a}, \fut dynamically constructs weights for each voter and returns a distribution which maximizes the resulting weighted utilitarian welfare.
In more detail, the voters start with unit weights $\lambda_i = 1$ for all $i\in N$.
Then, \fut identifies the set of candidates $X_1$ that maximize $\sum_{i\in N_x(\mathcal A)} \lambda_i$ and sets $t=\sum_{i\in N_x(\mathcal A)} \lambda_i$.
Every voter $i$ with $A_i\cap X_1\neq\emptyset$ distributes their budget of $\frac{1}{n}$ uniformly among the candidates in $A_i\cap X_1$ and their $\lambda_i$ is fixed.
Then, the weights $\lambda_i$ of all other voters increase at a common rate until a candidate obtains a score of $t$, i.e., until $\sum_{i\in N_{x}(\mathcal A)} \lambda_i=t$ for some $x\in C\setminus X_1$. We then identify the set of candidates $X_2\subseteq C\setminus X_1$ which each have a total weight of $t$, and every voter $i\in N$ who did not spend his budget yet and approves at least one candidate in $X_2$ uniformly distributes his $\frac{1}{n}$ share among the candidates $A_i\cap X_2$. \fut then again increases the weights of voters who did not spend their $\frac{1}{n}$ share and repeats this process until all voters have allocated their share to some candidates.

Just like \cut, \fut satisfies monotonicity \citep[][]{BBPS21a}, but fails weak population consistency and only approximates AFS and core within a factor in $\Theta(n)$.

\begin{proposition}
	\fut fails weak population consistency.
\end{proposition}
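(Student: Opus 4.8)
The plan is to find two voter-disjoint profiles $\mathcal A$ and $\mathcal A'$ over a common candidate set $\{a,b,c\}$ on which \fut returns the \emph{same} distribution, yet returns a \emph{different} distribution on $\mathcal A+\mathcal A'$. The feature of \fut I would exploit is that the candidate(s) a voter with several approved candidates ends up paying for are exactly those among his approved candidates that \fut selects first into some set $X_j$. After the plurality winner $a$ has been funded, the order in which the remaining candidates are selected is determined by their weight-increase dynamics: funding $a$ freezes the weights of the voters approving both $a$ and a candidate $x$, so $x$ enters with a fixed offset $o_x$ and then rises towards the threshold $t=|N_a(\mathcal A)|$ at a rate equal to its number of still-active supporters. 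The selection order is thus governed by the ratios $(t-o_x)/(\text{active}_x)$, and I would exploit that the relative order of two such ratios need not be preserved under combining profiles (an amalgamation / Simpson-type effect), because combination adds offsets, active counts, and thresholds componentwise.

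Concretely, I would introduce a ``swing'' voter approving $\{b,c\}$ and arrange that \fut routes him to $b$ in the subprofiles but to $c$ in the combination. For instance, take $\mathcal A$ with $2$ voters reporting $\{a\}$, $3$ reporting $\{a,b\}$, $1$ reporting $\{a,c\}$, $1$ reporting $\{b\}$, $2$ reporting $\{c\}$, and $1$ reporting $\{b,c\}$; and $\mathcal A'$ with $6$ voters reporting $\{a\}$, $2$ reporting $\{b\}$, and $2$ reporting $\{c\}$ (both profiles have ten voters). In all three instances $a$ is the strict plurality winner (score $6$, $6$, $12$) and receives a share of $\tfrac35$. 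Writing distributions as $(p(a),p(b),p(c))$, in $\mathcal A$ candidate $b$ reaches the threshold $t=6$ at weight level $\tfrac{6-3}{2}=\tfrac32$ (offset $3$ from the $\{a,b\}$ voters, two active supporters), \emph{before} $c$, which reaches it at $\tfrac{6-1}{3}=\tfrac53$; hence the $\{b,c\}$ voter pays for $b$ and \fut returns $(\tfrac35,\tfrac15,\tfrac15)$. As $\mathcal A'$ consists only of singletons, its outcome is order-independent and also equals $(\tfrac35,\tfrac15,\tfrac15)$, so the hypothesis of weak population consistency holds. In $\mathcal A+\mathcal A'$, however, the singletons add two active supporters to each of $b$ and $c$ but no offset while the threshold doubles to $12$; now $b$ reaches it at $\tfrac{12-3}{4}=\tfrac94$ whereas $c$ reaches it earlier at $\tfrac{12-1}{5}=\tfrac{11}{5}<\tfrac94$, so $c$ is selected first, the swing voter pays for $c$, and \fut returns $(\tfrac35,\tfrac{3}{20},\tfrac14)\neq(\tfrac35,\tfrac15,\tfrac15)$, which establishes the claim.

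The main obstacle is not producing the order reversal---that is a routine amalgamation phenomenon---but reconciling it with the exact-equality requirement of weak population consistency. I resolve this with two design choices: letting $\mathcal A'$ contain only single-candidate ballots, so that its distribution is independent of any selection order and can be matched to the outcome of $\mathcal A$ purely through the supporter multiplicities, while still injecting the extra active supporters (and no offsets) needed to tip the race in the combination; and choosing $\mathcal A$ and $\mathcal A'$ to have equally many voters, which reduces the matching condition to equality of the raw supporter counts and makes the final verification a short computation.
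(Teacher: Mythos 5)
Your proof is correct: all the weight-race computations check out (in $\mathcal A$ candidate $b$ wins the race at $\lambda=\tfrac32<\tfrac53$, while in $\mathcal A+\mathcal A'$ candidate $c$ wins at $\lambda=\tfrac{11}{5}<\tfrac94$), and the counterexample is essentially the same construction as the paper's: two equal-size profiles agreeing on the plurality winner, with a single $\{b,c\}$ swing voter whose payment is rerouted in the combined election because the order in which $b$ and $c$ reach the threshold reverses. The only cosmetic difference is that your second profile consists entirely of singleton ballots (the paper's uses two $\{a,b\}$ ballots), which does not change the nature of the argument.
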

\begin{proof}
We consider the following two profiles $\mathcal A$ and $\mathcal A'$, both of which contain $14$ voters and $3$ candidates $C=\{a,b,c\}$.\medskip

{
	\centering
	\begin{tabular} {@{}llllll@{}}
		$\mathcal{A}$: & $3$: $\{a\}$& $5$: $\{a,c\}$ & $1$: $\{c\}$ & $1$: $\{b,c\}$ & $4$: $\{b\}$ \\
		$\mathcal{A}'$: & $6$: $\{a\}$ &  $2$: $\{a,b\}$ & $4$: $\{b\}$ & $2$: $\{c\}$ \\
	\end{tabular}\par
}\medskip

For $\mathcal A$, \fut first assigns a share of $\frac{8}{14} $ to $a$ as this candidate is approved by $8$ voters. We then start increasing the weight of the remaining $6$ voters and note that the total score of $c$ will reach $8$ when $\lambda_i=\frac{3}{2}$ and the total score of $b$ is $8$ when $\lambda_i=\frac{8}{5}$. Hence, \fut next assigns a share of $\frac{2}{14}$ to $c$ and the remaining $\frac{4}{14}$ are assigned to $b$. This means that \fut chooses  for $\mathcal A$ the distribution $p$ with $p(a)=\frac{8}{14}$, $p(b)=\frac{4}{14}$, and $p(c)=\frac{2}{14}$. Next, for $\mathcal A'$, \fut again assigns a share of $\frac{8}{14}$ to $a$. Since all remaining voters only approve a single candidate and \fut is decomposable, we derive that \fut chooses for $\mathcal A'$ again the distribution $p$ with $p(a)=\frac{8}{14}$, $p(b)=\frac{4}{14}$, and $p(c)=\frac{2}{14}$. Finally, in the joint profile $\mathcal{A}+\mathcal{A}'$, \fut again assigns a share of $\frac{8}{14}$ to $a$ since it is approved by $16$ of $28$ voters. Next, we start increasing the weights and note that $c$ reaches a total weight of $16$ when $\lambda_i=\frac{11}{4}$ and $b$ when $\lambda_i=\frac{14}{9}$. Hence, the $9$ voters approving $b$ now send their portion to this candidate, so $b$ is assigned a share of $\frac{9}{28}>\frac{4}{14}$. This proves that \fut fails weak population consistency because it chooses the distribution $p$ for both $\mathcal A$ and $\mathcal A'$ but not for $\mathcal A+\mathcal A'$.
\end{proof}

\begin{proposition}\label{thm: FUT}
	\fut  does not satisfy $\alpha$-AFS or $\alpha$-core for any $\alpha < \frac{n}{3}-1$
\end{proposition}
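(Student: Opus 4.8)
The plan is to adapt the lower-bound construction used for \cut, but to engineer the profile so that \fut's weight-increase mechanism does not rescue the group sharing a common candidate. Concretely, I would take a group $S=\{1,\dots,s\}$ of voters where voter $i$ reports $A_i=\{x_i,y\}$, so that $y=\bigcap_{i\in S}A_i$ is their single common candidate. To force each private candidate $x_i$ to be selected before $y$, I add $h=2s-1+r$ \emph{helper} voters reporting $\{x^\ast,x_1,\dots,x_s\}$ and $d=2$ \emph{dummy} voters reporting $\{x^\ast\}$, where $r\in\{0,1,2\}$ is chosen so that the total number of voters is $n=s+h+d=3s+1+r$. The intended behaviour is that \fut first spends the helpers' and dummies' budgets on $x^\ast$ (freezing their weights), then selects every $x_i$ strictly before $y$, so that $y$ ends up with share $0$ and every voter in $S$ obtains utility exactly $\frac1n$.

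The core of the argument is to verify this execution against the definition of \fut. At the initial weights $\lambda_i=1$, candidate $x^\ast$ has score $h+d$, each $x_i$ has score $h+1$, and $y$ has score $s$; since $h+d=h+2$ exceeds both $h+1$ and $s$ (using $h\ge 2s-1$), we get $X_1=\{x^\ast\}$ and threshold $t=h+d$. When $x^\ast$ is selected, the helpers and dummies spend their $\frac1n$ on $x^\ast$ (as $A_i\cap X_1=\{x^\ast\}$) and their weights freeze at $1$, while the voters in $S$ do not spend and keep increasing their common weight $\lambda$. The crucial computation is that, as $\lambda$ grows, each $x_i$ reaches score $h+\lambda=t$ at $\lambda=d=2$, whereas $y$ only reaches $s\lambda=t$ at $\lambda=t/s=(h+2)/s>2$. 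Hence all $x_i$ are selected simultaneously as $X_2$ before $y$, each voter $i\in S$ spends its $\frac1n$ on $x_i$ (since $y\notin X_2$ and $A_i\cap X_2=\{x_i\}$), and the process terminates with every voter having spent. The resulting distribution has $p(y)=0$ and $p(x_i)=\frac1n$, so $u_i(p)=p(x_i)+p(y)=\frac1n$ for all $i\in S$. The main obstacle is precisely this bookkeeping of the selection order, i.e., checking that no candidate other than $x^\ast$ reaches the threshold before $\lambda=2$ and that $y$ strictly trails all the $x_i$; the inequality $h\ge 2s-1$ is exactly what guarantees this.

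With the execution established, the fairness bounds follow by routine arithmetic. For AFS, since $\bigcap_{i\in S}A_i=\{y\}\neq\emptyset$ and $\frac{1}{|S|}\sum_{i\in S}u_i(p)=\frac1n$, the $\alpha$-AFS constraint for $S$ reads $\alpha\cdot\frac1n\ge\frac{s}{n}$, so it is violated whenever $\alpha<s$. For core, I would take the deviation $q$ with $q(y)=1$; then $u_i(q)=1$ and $\frac{|S|}{n}u_i(q)=\frac{s}{n}>\alpha\cdot\frac1n=\alpha\,u_i(p)$ for every $i\in S$ as soon as $\alpha<s$, exhibiting a blocking coalition and hence violating $\alpha$-core. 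Finally, since $n=3s+1+r$ with $r\in\{0,1,2\}$, we have $s\ge\frac{n-3}{3}=\frac n3-1$, so for every $n$ (by the appropriate choice of $r$) the rule \fut fails both $\alpha$-AFS and $\alpha$-core for all $\alpha<\frac n3-1$, as claimed.
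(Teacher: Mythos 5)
Your proof is correct and uses essentially the same construction as the paper: a large bloc of voters approving all the private candidates plus a sink candidate forces the sink to be selected first, after which the frozen bloc's weights let each private candidate reach the threshold at $\lambda=2$, strictly before the group's common candidate, leaving that candidate with share $0$ and each group member with utility $\frac{1}{n}$. The only differences are cosmetic (the roles of $x^*$ and $y$ are swapped, and your choice of $r$ handles all residues of $n$ modulo $3$ rather than assuming $3\mid n$).
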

\begin{proof}
	Assume that $n$ is divisible by $3$ and consider the following approval profile $\mathcal {A}$ for the candidates $C=\{x^*, x_1,\dots, x_{n/3-1},y\}$.
	\begin{itemize}
		\item Each voter $i\in \{1,\dots, \frac{n}{3}-1\}$ reports $\{x^*, x_i\}$.
		\item Voters $\frac{n}{3}$ and $\frac{n}{3}+1$ report $\{y\}$.
		\item Each voter $i\in \{\frac{n}{3}+2,\dots, n\}$ reports $\{ x_1,\dots, x_{n/3-1},y\}$.
		\end{itemize}
Observe that $|N_y(\mathcal A)|=\frac{2n}{3}+1>|N_x(\mathcal A)|$ for all $x\in C\setminus \{y\}$. Hence, \fut selects candidate $y$ first and sets $p(y)= \frac{2}{3}+\frac{1}{n}$. Moreover, the weights of the voters in $N_y(\mathcal A)$ are now fixed at $1$. Next, the weights of all other agents uniformly increase until some candidate has score of $\frac{2n}{3}+1$.
This occurs when $\lambda_i=2$ for all $i\in N\setminus N_y(\mathcal A)$ because $\sum_{i\in N_{x_i}(\mathcal A)}\lambda_i = \frac{2n}{3}-1+\lambda_i$ for each $x_i\in \{x_1,\dots,x_{x/3-1}\}$ and $\sum_{i\in N_{x^*}(\mathcal A)} \lambda_i=(\frac{n}{3}-1)\lambda_i$.
Thus, each candidate $x_i$ is assigned a share of $p({x_i})=\frac{1}{n}$ and $x^*$ is not assigned any budget.
Since all voter in $S=\{1,\dots, \frac{n}{3}-1\}$ approve $x^*$, we have that
\begin{align*}
\frac{1}{|S|}\sum_{i\in S} u_i(p) &= \frac{1}{\frac{n}{3}-1}\cdot (\frac{n}{3}-1)\cdot\frac{1}{n}=\frac{1}{n}.
\end{align*}
On the other hand, $\alpha$-AFS requires that $\frac{1}{|S|}\sum_{i\in S} u_i(p)\geq \frac{1}{\alpha}\cdot \frac{|S|}{n}=\frac{1}{\alpha}(\frac{1}{3}-\frac{1}{n})$. Hence, we conclude that \fut fails $\alpha$-AFS for every $\alpha<\frac{n}{3}-1$ when there are $n$ voters.

For the $\alpha$-core lower bound, we consider the distribution $q$ with $q(x^*)=1$. Then, we infer for every voter $i\in S$ that $u_i(p)=\frac{1}{n}$, but $\frac{|S|}{n}\cdot u_i(q)=\frac{1}{3}-\frac{1}{n}$, so \fut fails $\alpha$-core for every $\alpha<\frac{n}{3}-1$.
\end{proof}

\subsection*{Uncoordinated Equal Shares Rule (\ues)}

\ues assumes that each voter spends his $\frac{1}{n}$ share uniformly among his approved alternatives, i.e., it chooses the distribution $p$ given by $p(x) = \sum_{i\in N_x(\mathcal{A})} \frac{1}{n|A_i|}$ for all $x\in C$ \citep{MPS20a}. This rule is easily seen to satisfy both monotonicity and even strong population consistency. However, the approximation ratio of \ues to AFS and core is in $\Theta(n)$.

\begin{proposition}
\label{prop:es_afs}
\ues does not satisfy $\alpha$-AFS or $\alpha$-core for any $\alpha<\frac{n}{2}$.
\end{proposition}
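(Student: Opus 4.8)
The plan is to exhibit a single family of profiles on which \ues leaves a large group of voters who share a commonly approved candidate far below their entitlement, thereby forcing the required approximation factor up to $\frac{n}{2}$. The guiding intuition is that \ues, unlike \cut or \fut, never concentrates a voter's budget: each voter always splits his $\frac{1}{n}$ share uniformly over his entire ballot. Hence the only lever for starving a commonly approved candidate is to make every ballot large, and the only way to keep the resulting utilities low is to pad each ballot with candidates that are private to a single voter, so that these candidates hand almost nothing back.

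Concretely, I would take $n$ voters, a single common candidate $x^*$, and, for each voter $i$, a block of $n-2$ private candidates approved by no one else; thus $A_i$ consists of $x^*$ together with these $n-2$ private candidates, so $|A_i| = n-1$. Under \ues each voter spends $\frac{1}{n(n-1)}$ on every candidate in $A_i$. A direct computation then gives $p(x^*) = \frac{n}{n(n-1)} = \frac{1}{n-1}$, since all $n$ voters pay into $x^*$, while each private candidate receives only $\frac{1}{n(n-1)}$. Summing over the ballot of voter $i$ yields
$$u_i(p) = \frac{1}{n-1} + (n-2)\cdot\frac{1}{n(n-1)} = \frac{2(n-1)}{n(n-1)} = \frac{2}{n}.$$

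With this in hand, the verification is immediate. Taking $S = N$, whose ballots all contain $x^*$ so that $\bigcap_{i\in S}A_i = \{x^*\}\neq\emptyset$, the average utility is $\frac{1}{|S|}\sum_{i\in S} u_i(p) = \frac{2}{n}$, whereas $\alpha$-AFS demands $\alpha\cdot\frac{2}{n} \geq \frac{|S|}{n} = 1$; this forces $\alpha \geq \frac{n}{2}$, so \ues fails $\alpha$-AFS for every $\alpha < \frac{n}{2}$. For core, I would use the deviation $q$ with $q(x^*) = 1$: it gives $\frac{|S|}{n}u_i(q) = 1 > \alpha\cdot\frac{2}{n}$ for every $i\in S$ whenever $\alpha < \frac{n}{2}$, so $q$ is a valid blocking distribution and \ues fails $\alpha$-core on the same range.

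The only real obstacle is getting the construction right rather than the algebra: one must recognize that diluting $x^*$ through large ballots is not enough on its own, since \emph{shared} secondary candidates would return the utility straight to the voters and collapse the bound. The crucial point is that the $n-2$ padding candidates must be private to each voter, so that their total contribution to $u_i(p)$ is only $\frac{n-2}{n(n-1)}\approx\frac{1}{n}$; this pins every voter's utility at $\Theta\!\left(\frac{1}{n}\right)$ while the AFS entitlement of the group $S=N$ grows like $\frac{|S|}{n}=1$, and the ratio between the two comes out to exactly $\frac{n}{2}$.
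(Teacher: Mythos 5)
Your proof is correct and follows essentially the same construction as the paper: pad each voter's ballot with candidates private to that voter so that \ues dilutes the share of the commonly approved candidate $x^*$, then apply the $\alpha$-AFS constraint and the blocking distribution $q(x^*)=1$ to the group $S=N$. The only difference is that you use $n-2$ private candidates per voter (ballot size $n-1$) where the paper uses $n-1$ (ballot size $n$), which yields the bound $\frac{n}{2}$ exactly rather than the paper's marginally stronger $\frac{n^2}{2n-1}>\frac{n}{2}$; both suffice for the stated proposition.
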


\begin{proof}
	Consider the following approval profile $\mathcal A$ for $n$ voters and $1+n(n-1)$ candidates. Each voter $i$ approves candidate $x^*$, $n-1$ additional candidates, and the sets $A_i\setminus \{x^*\}$ are pairwise disjoint. More formally, when denoting the set of candidates $C$ by $C=\{x^*\}\cup \{x_j^i\colon i\in [n], j\in [n-1]\}$, then the approval ballot of each voter $i$ is given by $A_i=\{x^*\}\cup \{x_j^i\colon j\in [n-1]\}$.
	Now, let $p$ denote the distribution returned by \ues.
	It holds that $p(x^*) = \frac{1}{n}$ and that $p(x_i^j)=\frac{1}{n^2}$, which implies that $u_i(p)=\frac{2n-1}{n^2}$ for each voter $i\in N$. Consequently, we derive that
	$\frac{1}{n} \sum_{i\in N} u_i(p) =  \frac{2n-1}{n^2}$. On the other hand, $\alpha$-AFS requires that $\frac{1}{n} \sum_{i\in N} u_i(p)\geq\frac{1}{\alpha}$. Hence, it follows that \ues only satisfies $\alpha$-AFS for $\alpha\geq \frac{n^2}{2n-1}> \frac{n}{2}$.

	Moreover, it is easy to see that the utility of each agent improves by a factor $\frac{n^2}{2n-1}$ when moving from $p$ to the distribution $q$ with $q(x^*)=1$, which gives the lower bound for $\alpha$-core.
\end{proof}


\section{Maximum Payment Rule}

In order to obtain more positive results, we will now introduce the \emph{maximum payment rule (\map)} and show that it satisfies monotonicity, ranked population consistency, and strong fairness guarantees. In combination with its simplicity, these properties make a strong case for using \map in practice. 

The idea of \map is to distribute the budget uniformly to the voters, who spend their shares sequentially on the candidates. In more detail, in every step, \map identifies the candidate $x$ that is approved by the maximum number of voters who have not yet spent their shares, and it allocates the budget of these voters to $x$. This process is repeated until all voters have assigned their share to a candidate and the final distribution corresponds to the amount of budget spent on each candidate. 
To make this more formal, we denote by $\Pi(\mathcal{A}, x, X)=|\{i\in N_x(\mathcal{A})\colon A_i\cap X=\emptyset\}|$ the number of voters who are willing to spend their share on $x$ after we assigned parts of the budget to the candidates in $X$. Now, letting $X$ denote the set of candidates that have been assigned budget in prior rounds, \map identifies in each round the candidate $x^*\in C\setminus X$ that maximizes $\Pi(\mathcal{A}, x, X)$, with ties broken lexicographically, and assigns a budget of $\frac{1}{n}\Pi(\mathcal{A}, x^*, X)$ to this candidate.


	\begin{example}
		To further illustrate \map, we compute its distribution for the profile $\mathcal{A}$ shown below.\medskip
		{
			\centering
			\begin{tabular} {@{}rccccc@{}}
				$\mathcal{A}$: & $4$: $\{a,b\}$ & $4$: $\{a\}$ & $2$: $\{b,c\}$ & $1$: $\{c,d\}$ & $1$: $\{d\}$
			\end{tabular}\par
		}\medskip
		Since there are $n = 12$ voters, each agent starts with a budget of $\frac{1}{12}$. In the first round, all agents have non-zero budgets, so the first candidate selected is the approval winner—namely, candidate~$a$. All voters who approve of $a$ spend their entire budget to this candidate, resulting in $p(a) = \frac{8}{12}$. In the second round, the  candidate who is approved by the largest number of voters who did not spent their budget yet is $c$, leading to $p(c)=\frac{3}{12}$. In the final round, only the voter reporting $\{d\}$ has some budget remaining, so this voter sends his budget to $d$ and $p(d)=\frac{1}{12}$. 
		 Since all voters have now spent their entire budget, no part of the budget will be allocated to $b$ and \map selects the distribution $p$ with $p(a)=\frac{8}{12}$, $p(b)=0$, $p(c)=\frac{3}{12}$, and $p(d)=\frac{1}{12}$.
	\end{example}
	
	We next prove that \map satisfies monotonicity, ranked population consistency and strong fairness properties. We moreover note that \map is unanimous, thus demonstrating that RPC indeed allows to circumvent the impossibility in \Cref{prop:SPC}.
	
	\begin{theorem}\label{thm:MPmain}
		\map satisfies the following properties:
		\begin{enumerate}[leftmargin=*,label=(\arabic*)]
			\item Monotonicity.
			\item Ranked population consistency.
			\item  $2$-AFS and $\Theta(\log n)$-core. Moreover, these approximation ratios are tight.
			\end{enumerate}
		\end{theorem}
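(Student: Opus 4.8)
The plan is to prove the three parts essentially separately, relying on two elementary observations about \map that I would record first. Writing $X$ for the set of already-funded candidates, note that (i) $\Pi(\mathcal{A},\cdot,X)$ is \emph{set-monotone}: enlarging $X$ weakly decreases it, so the scores $\Pi(\mathcal{A},c_t,X_{t-1})$ of the successively chosen candidates $c_1,c_2,\dots$ are non-increasing in $t$; consequently each candidate's share is $\tfrac{1}{|N|}$ times its selection score, never-funded candidates get $0$, and the \map selection order refines the decreasing-share order. And (ii) $\Pi$ is \emph{additive} over voter-disjoint profiles, $\Pi(\mathcal{A}+\mathcal{A}',x,X)=\Pi(\mathcal{A},x,X)+\Pi(\mathcal{A}',x,X)$, since a voter is counted exactly when none of his approved candidates lies in $X$, a condition that never couples the two voter sets. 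For \textbf{monotonicity}, let $A'_i=A_i\cup\{x\}$ with all other ballots fixed. The key computation is that \emph{as long as $x\notin X$}, we have $\Pi(\mathcal{A}',y,X)=\Pi(\mathcal{A},y,X)$ for every $y\neq x$ (only voter $i$'s ballot changed, and that change is invisible until $x$ enters $X$), while $\Pi(\mathcal{A}',x,X)=\Pi(\mathcal{A},x,X)+\mathbf{1}\{A_i\cap X=\emptyset\}\ge \Pi(\mathcal{A},x,X)$. I would then run both executions in lockstep and prove by induction on rounds that, until $x$ is selected in the $\mathcal{A}'$-run, the two runs pick the same candidates in the same order (the only discrepancy the boosted score can cause is to select $x$ itself weakly earlier in $\mathcal{A}'$); fixed lexicographic tie-breaking makes this robust. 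Hence the set $X'$ funded before $x$ in $\mathcal{A}'$ is a subset of the set funded before $x$ in $\mathcal{A}$, and set-monotonicity together with the $+\mathbf{1}\{\cdot\}$ boost gives $f(\mathcal{A}',x)=\tfrac1{|N|}\Pi(\mathcal{A}',x,X')\ge \tfrac1{|N|}\Pi(\mathcal{A},x,X'')=f(\mathcal{A},x)$.

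For \textbf{ranked population consistency}, write $p=\map(\mathcal{A})$, $q=\map(\mathcal{A}')$, $r=\map(\mathcal{A}+\mathcal{A}')$, and set $U_{>x}=\{y:p(y)>p(x)\}$ and the tie class $E=\{y:p(y)=p(x)\}$. The first structural lemma I would establish is that $|N|\,p(x)=\Pi(\mathcal{A},x,U_{>x})$, i.e.\ funding exactly the strictly-higher candidates already pins down $x$'s count. The subtlety is the tied candidates in $E$ funded before $x$: I would show that within a tie class no voter approves two candidates of $E$ unless he also approves a strictly-higher-ranked candidate, because otherwise selecting one of the two would drop the other's score below the common level $|N|p(x)$, contradicting that both are eventually funded at that level. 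This disjointness makes $x$'s funding set insensitive to the order inside $E$, giving the lemma. The RPC hypothesis ${\succsim^{p}}|_x={\succsim^{q}}|_x$ means $p$ and $q$ induce the \emph{same} weak order on the upper set $U=U_{>x}\cup E$, so in particular $U_{>x}$ and $E$ are common to both profiles and, by the lemma, $|N|p(x)=\Pi(\mathcal{A},x,U_{>x})$ and $|N'|q(x)=\Pi(\mathcal{A}',x,U_{>x})$. The remaining, and hardest, step is to show the combined run funds exactly $U_{>x}$ before $x$ and that the equally-ranked candidates it funds in between leave $x$'s count unchanged, so that by additivity $r(x)=\tfrac{1}{|N|+|N'|}\Pi(\mathcal{A}+\mathcal{A}',x,U_{>x})=\tfrac{|N|p(x)+|N'|q(x)}{|N|+|N'|}$. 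I would prove this by a top-down induction over the common share-levels: at each level, additivity shows the combined score of the correct candidates weakly dominates that of all lower-ranked candidates, and lexicographic consistency forces an actual level-candidate to be chosen in case of equality. Once the displayed identity holds, $r(x)$ is a convex combination of $p(x)$ and $q(x)$, hence lies in $[\min\{p(x),q(x)\},\max\{p(x),q(x)\}]$, which is exactly RPC; unanimity of \map (a commonly approved candidate is chosen first with score $|N|$) then shows RPC genuinely escapes \Cref{prop:SPC}.

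For \textbf{$2$-AFS}, fix $S$ with common candidate $x^\ast$ and let $t^\ast$ be the round $x^\ast$ is funded (if ever). Let $a_t$ be the number of $S$-voters who pay for $c_t$ and $R_t=\sum_{j\ge t}a_j$ the number of $S$-voters still unspent at the start of round $t$, so $R_1=|S|$ and $a_t=0$ for $t>t^\ast$ (every unspent $S$-voter would pay for $x^\ast$). Since $u_i(p)\ge p(c_i)$ and, for $t\le t^\ast$, $x^\ast$ is still available with $\Pi(\mathcal{A},x^\ast,X_{t-1})\ge R_t$, the argmax property gives $s_t=\Pi(\mathcal{A},c_t,X_{t-1})\ge R_t$. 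Therefore $\sum_{i\in S}u_i(p)\ge \tfrac1{|N|}\sum_t a_t s_t\ge \tfrac1{|N|}\sum_t (R_t-R_{t+1})R_t\ge \tfrac1{|N|}\sum_t \tfrac{R_t^2-R_{t+1}^2}{2}=\tfrac{|S|^2}{2|N|}$ by telescoping, which is exactly the $2$-AFS bound. For tightness I would use the chain profile in which voter $i\in S=[k]$ approves $\{x^\ast,z_i\}$ and $z_i$ is additionally backed by $k-i$ filler voters approving only $z_i$; then \map funds $z_1,\dots,z_k$ (each tied with $x^\ast$ and lex-preferred) with scores $k,k-1,\dots,1$, $x^\ast$ receives nothing, and the average $S$-utility is $\tfrac{k+1}{2|N|}$, driving the ratio to $\tfrac{2k}{k+1}\to 2$.

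For \textbf{$\Theta(\log n)$-core}, the upper bound $O(\log n)$ is immediate: $2$-AFS together with \Cref{prop:afs-core} yields $2(1+\log n)$-core. The matching lower bound is the delicate construction. Here one cannot take a group with a common candidate and deviate to it: a short argument (at the first round an $S$-voter pays, $x^\ast$ is still available with all $|S|$ supporters, so the chosen candidate has score $\ge|S|$ and its $S$-payers get utility $\ge |S|/|N|$) shows the natural single-candidate deviation caps the ratio at $1$. Hence I would build an instance with $\Theta(\log n)$ geometric "scales" of candidates so that \map fragments a group across candidates of geometrically decreasing support, and exhibit a single spread-out distribution $q$ that uniformly multiplies every member's utility by $\Theta(\log n)$ over its \map utility; verifying the entire \map trajectory and the uniform improvement on this layered instance is the main obstacle in this part, alongside the combined-run level induction for RPC. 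The remaining tightness claims ($\Omega(\log n)$-core matching the upper bound, and $2$-AFS being exactly achieved) then complete the theorem.
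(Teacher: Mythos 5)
Most of your proposal is correct and follows essentially the same route as the paper. Your monotonicity argument (lockstep runs, the boosted score only moves $x$ earlier, set-monotonicity of $\Pi$) is the paper's argument. Your $2$-AFS bound via $s_t\ge R_t$ and the telescoping $\sum_t(R_t-R_{t+1})R_t\ge\frac{1}{2}\sum_t(R_t^2-R_{t+1}^2)=\frac{|S|^2}{2}$ is a slightly cleaner bookkeeping of the paper's computation with the sets $\hat S_j$, and your tightness instance is the paper's chain construction (the paper adds one extra filler voter per $z_i$ so as not to rely on tie-breaking, but your lexicographic version is fine). For RPC you organize the argument around the upper set $U_{>x}$ and a tie-class insensitivity lemma instead of directly proving that the processing prefixes of the two runs coincide, as the paper does; your sub-lemma that a voter cannot approve two tied candidates without also approving a strictly higher one is correct, but note that the crux you flag as hardest --- the level-by-level induction showing the combined run funds the right candidates before $x$ --- is exactly the content of the paper's prefix-agreement argument via additivity of $\Pi$ and consistent lexicographic tie-breaking, so this step still needs to be written out rather than asserted.

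The one genuine gap is the $\Omega(\log n)$-core lower bound, which you describe only as an intention (``geometric scales,'' ``the main obstacle in this part'') without a construction or verification; since tightness of the core ratio is part of the statement, this is missing content rather than a compressible detail. The paper's instance, for each $k\ge 2$, has $n=k\cdot 2^{k-1}$ voters arranged in $k$ rows of $2^{k-1}$, and two candidate families: row-local candidates $y^i_\ell$, where row $i$ is partitioned into $2^{i-1}$ blocks of size $2^{k-i}$ each approving one $y^i_\ell$, and global candidates $x_1,\dots,x_{2^{k-1}}$, where a row-$i$ voter approves a contiguous block of $2^{k-i-1}$ of them (a single one in row $k$), so that every $x_j$ has approval score exactly $2^{k-1}$. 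With tie-breaking favoring the $y$-candidates, \map funds only the $y$'s, level by level, giving voter $(i,j)$ utility $\frac{1}{k\cdot 2^{i-1}}$ and every $x_j$ a share of $0$; the uniform distribution $q$ over the $x_j$'s then gives voter $(i,j)$ utility $\frac{1}{2^i}$ (and $\frac{1}{2^{k-1}}$ in row $k$), a uniform factor-$\frac{k}{2}$ improvement for the grand coalition, whence \map fails $\alpha$-core for $\alpha<\frac{k}{2}=\Omega(\log n)$. Your preliminary observation that a coalition with a \emph{common} candidate cannot witness a super-constant core violation is correct and is precisely why the deviation here must be spread over many candidates, but the construction itself, and the verification of the \map trajectory on it, must be supplied for the proof to be complete.
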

\begin{proof} 
	We prove each of the properties individually.\medskip
	
	\noindent\textbf{Monotonicity:}
	Let $\mathcal A$ and $\mathcal A'$ be two profiles, $i$ a voter, and $x^*$ a candidate such that $\mathcal A'$ is derived from $\mathcal A$ by adding $x^*$ to the approval ballot of voter $i$.
	We need to show that $q(x^*)\geq p(x^*)$ for the distributions $p=\text{\map}(\mathcal A)$ and $q=\text{\map}(\mathcal A')$. To this end, we observe for all profiles $\bar{\mathcal{A}}$, candidates $x\in C$, and sets $Y\subseteq C\setminus \{x\}$ that $\Pi(\bar{\mathcal{A}},x,Y)=|\{i\in N\colon x\in \bar A_i\land \bar A_i\cap Y=\emptyset\}|$.
	For our profiles $\mathcal A$ and $\mathcal{A}'$, this means that $\Pi(\mathcal A,x,Y)=\Pi(\mathcal A',x,Y)$ for all $x\in C\setminus \{x^*\}$ and $Y\subseteq C\setminus \{x, x^*\}$, and that $\Pi(\mathcal A,x^*,Y)\leq \Pi(\mathcal A',x^*,Y)$ for all $Y\subseteq C\setminus \{x^*\}$.
	Now, let $x_1,\dots, x_m$ and $x_1'\dots, x_m'$ denote the sequences according to which \map assigns the shares to the candidates in $\mathcal A$ and $\mathcal A'$, and let $\ell$ and $k$ denote the indices such that $x_\ell=x^*$ and $x_k'=x^*$.
	First, we observe that $\ell\geq k$ because the payment willingness for $x^*$ in $\mathcal{A}'$ is always weakly higher than in $\mathcal{A}$.
	Moreover, since $\Pi(\mathcal A,x,Y)=\Pi(\mathcal A',x,Y)$ for all $x\in C\setminus \{x^*\}$ and $Y\subseteq C\setminus \{x, x^*\}$, it follows that $x_i=x_i'$ for all $i<k$. This means that $\{x_1',\dots, x_{k-1}'\}=\{x_1,\dots, x_{k-1}\}\subseteq \{x_1,\dots, x_{\ell-1}\}$, so we conclude that
	\begin{align*}
		\Pi(\mathcal A',x^*,\{x_1',\dots, x_{k-1}'\})
		\geq \Pi(\mathcal A,x^*,\{x_1,\dots, x_{k-1}\})
		\geq \Pi(\mathcal A,x^*,\{x_1,\dots, x_{\ell-1}\}).
	\end{align*}
	
	This shows that \map satisfies monotonicity because $q(x^*)=\frac{1}{n}\Pi(\mathcal A',x^*,\{x_1',\dots, x_{k-1}'\})$ and $p(x^*)=\frac{1}{n}\Pi(\mathcal A,x^*,\{x_1,\dots, x_{\ell-1}\})$.
	\medskip

	\noindent\textbf{Ranked population consistency:} 
	Let $\mathcal A$ and $\mathcal A'$ denote two voter-disjoint profiles, and let $p=\text{\map}(\mathcal A)$, $q=\text{\map}(\mathcal A')$, and $r=\text{\map}(\mathcal{A}+\mathcal A')$ be the distributions chosen by \map for $\mathcal{A}$, $\mathcal{A'}$, and $\mathcal{A}+\mathcal{A'}$. Moreover, let $z\in C$ denote a candidate such that ${\succsim^p}|_z={\succsim^q}|_z$. We need to show that $\min\{p(z), q(z)\}\leq r(z)\leq \max\{p(z),q(z)\}$. To this end, we will denote the strict part of a given distribution ranking $\succsim^s$ by $\succ^s$ (i.e., $x\succ^s y$ if and only if $s(x)>s(y)$) and the indifference part by $\sim^s$ (i.e., $x\sim^s y$ if and only if $s(x)=s(y)$). Furthermore, we let $x_1,\dots, x_m$ and $x_1',\dots, x_m'$ denote the sequences according to which \map assigns the shares to the candidates in $\mathcal A$ and $\mathcal A'$, respectively. Finally, let $k$ and $k'$ denote the index such that $x_k=z$ and $x_k'=z$. As a first step, we will show that $x_i=x_i'$ for all $i\in[k]$, which means that $k=k'$.
	For this, we observe for all $i\in \{1,\dots, m-1\}$ that
	\begin{align*}
		\Pi(\mathcal{A}, x_i,\{x_1,\dots, x_{i-1}\})\geq \Pi(\mathcal{A}, x_{i+1},\{x_1,\dots, x_{i-1}\})
		\geq \Pi(\mathcal{A}, x_{i+1},\{x_1,\dots, x_{i}\}).
	\end{align*}
	The first inequality holds because $x_i$ maximizes $\Pi(\mathcal{A}, x,\{x_1,\dots, x_{i-1}\})$ among all candidates in $C\setminus \{x_1,\dots, x_{i-1}\}$ and the second one as $\Pi(\mathcal A, y, Y)\geq \Pi(\mathcal A, y, Y')$ when $Y\subseteq Y'$. This means that $p(x_1)\geq p(x_2)\geq \dots \geq p(x_m)$ and we can analogously show that $q(x_1')\geq q(x_2')\geq \dots \geq q(x_m')$.
	
	Now, fix two indices $i,j\in [m]$ with $i>j\geq k$, which entails that $p(x_i)\geq p(x_j)\geq p(x_k)$. If $p(x_i)>p(x_j)$, then $x_i\succ^p x_j\succsim^p x_k$ by definition. Since ${\succsim^p}|_{x_k}={\succsim^q}|_{x_k}$, it thus follows that $x_i\succ^q x_{j}$ and so $q(x_i)>q(x_{j})$. By our previous insight, this means that $x_i$ is processed before $x_j$ for both $\mathcal{A}$ and $\mathcal{A}'$.
	On the other hand, if $p(x_i)=p(x_j)$, then $\Pi(\mathcal{A}, x_i,\{x_1,\dots, x_{i-1}\})= \Pi(\mathcal{A}, x_{j},\{x_1,\dots, x_{j-1}\})$. By using the definition of $x_i$ and the fact $\Pi(\mathcal A, y, Y)\geq \Pi(\mathcal A, y, Y')$ when $Y\subseteq Y'$, we infer from this that $\Pi(\mathcal{A}, x_i,\{x_1,\dots, x_{i-1}\})= \Pi(\mathcal{A}, x_{j},\{x_1,\dots, x_{i-1}\})$, too.
	Hence, $x_i$ is assigned its share of the budget before $x_j$ in $\mathcal A$ because of the lexicographic tie-breaking. Moreover, it holds by definition that $x_i\sim^p x_j\succsim^p x_k$. Since ${\succsim^p}|_{x_k}={\succsim^q}|_{x_k}$, this means that $x_i \sim^q x_j$ and thus $q(x_i)=q(x_j)$.
	Because our tie-breaking favors $x_i$ over $x_j$, this means again that $x_i$ is processed before $x_j$ in $\mathcal{A}'$. By combining our observations so far, we conclude that $x_i=x_i'$ for all $i\in [k]$.
	
	Next, we observe that $\Pi(\mathcal{A} + \mathcal{A}', x, Y) = \Pi(\mathcal{A}, x, Y) + \Pi(\mathcal{A}', x, Y)$ for all $x\in C$, $Y\subseteq C\setminus \{x\}$. Since $x_i=x_i'$ for all $i\in [k]$, it follows for all such $x_i$ and all $y\not\in \{x_1,\dots, x_{i-1}\}$ that
	\begin{align*}
		\Pi(\mathcal{A}+\mathcal{A}', x_i, \{x_1,\dots, x_{i-1}\}) &=\Pi(\mathcal{A}, x_i, \{x_1,\dots, x_{i-1}\})+\Pi(\mathcal{A}', x_i, \{x_1,\dots, x_{i-1}\})\\
		&\geq \Pi(\mathcal{A}, y, \{x_1,\dots, x_{i-1}\})+\Pi(\mathcal{A}', y, \{x_1,\dots, x_{i-1}\})\\
		&=\Pi(\mathcal{A}+\mathcal{A}', y, \{x_1,\dots, x_{i-1}\}).
	\end{align*}
	
	If this inequality is tight, it holds that $\Pi(\mathcal{A}, x_i, \{x_1,\dots, x_{i-1}\})=\Pi(\mathcal{A}, y, \{x_1,\dots, x_{i-1}\})$ and that $\Pi(\mathcal{A}', x_i, \{x_1,\dots, x_{i-1}\})=\Pi(\mathcal{A}', y, \{x_1,\dots, x_{i-1}\})$. We then infer that $x_i$ is lexicographically favored to $y$, so $x_i$ will be processed before $y$ in $\mathcal{A}+\mathcal{A}'$. It thus follows for the sequence $x_1'',\dots, x_m''$ according to which $f$ processes the candidates in $\mathcal{A}+\mathcal{A}'$ that $x_i''=x_i$ for all $i\in [k]$.
	
	Finally, assume there are $n$ voters in $\mathcal A$ and $n'$ in voters in $\mathcal A'$. Our insights so far imply that
	\begin{align*}
	r(z)&=\frac{1}{n+n'} \Pi(\mathcal A+\mathcal A', x_k, \{x_1,\dots, x_{k-1}\})\\
		&=\frac{1}{n+n'} (\Pi(\mathcal A, x_k, \{x_1,\dots, x_{k-1}\})+\Pi(\mathcal A', x_k, \{x_1,\dots, x_{k-1}\}))\\
		&=\frac{n}{n+n'}\cdot \frac{1}{n}\cdot \Pi(\mathcal A, x_k, \{x_1,\dots, x_{k-1}\}) + \frac{n'}{n+n'}\cdot \frac{1}{n'}\cdot \Pi(\mathcal A', x_k, \{x_1,\dots, x_{k-1}\})\\
		&=\frac{n}{n+n'} p(z) + \frac{n'}{n+n'} q(z).
	\end{align*}
	
	This shows that $r(z)$ is a convex combination of $p(z)$ and $q(z)$, so $\min\{p(z), q(z)\}\leq r(z)\leq \max\{p(z), q(z)\}$. This completes the proof that $f$ satisfies RPC.
	\medskip
	
	\noindent\textbf{$2$-AFS and $\Theta(\log n)$-core:} We  first show that \map satisfies $2$-AFS and give a matching lower bound. By \Cref{prop:afs-core}, it then follows that \map satisfies $O(\log n)$-core because every rule that satisfies $2$-AFS satisfies $2(1+\log n)$-core. Finally, we complete the proof by giving a family of profiles where \map only satisfies $\Omega(\log n)$-core.
	
	\medskip
	\noindent
	\textbf{Claim 1: \map satisfies $2$-AFS.}
	
	Fix some approval profile $\mathcal A$, and let $p=\text{\map}(\mathcal A)$ denote the distribution returned by \map.
	Moreover, let $S\subseteq N$ denote a subset of voters and $x^*\in C$ a candidate such that $x^*\in A_i$ for all $i\in S$. Lastly, let $x_1,\dots, x_m$ denote the sequence according to which \map assigns shares to the candidates, and let $k$ denote the index such that $x_k=x^*$.
	The total utility of the voters in $S$ is
	\begin{align*}
		\sum_{i\in S} u_i(p)=\frac{1}{n}\sum_{j\in [m]} |\{i\in S\colon x_j\in A_i\}| \cdot \Pi(\mathcal A, x_j, \{x_1,\dots, x_{j-1}\}).
	\end{align*}
	
	By the definition of \map, candidate $x_j$ maximizes $\Pi(\mathcal A, x, \{x_1,\dots, x_{j-1}\})$, so it holds for all $j\in [k]$ that $\Pi(\mathcal A, x_j, \{x_1,\dots, x_{j-1}\})\geq \Pi(\mathcal A, x^*, \{x_1,\dots, x_{j-1}\})$. Next, let $\hat S_j=\{i\in S\colon x_j\in A_i\land \{x_1,\dots, x_{j-1}\}\cap A_i=\emptyset \}$ denote the set of voters in $S$ for whom $x_j$ is the first approved candidate in our sequence, and note that $|\{i\in S\colon x_j\in A_i\}|\geq |\hat S_j|$. We conclude that
	\begin{align*}
		\sum_{i\in S} u_i(p)\geq \frac{1}{n}\sum_{j\in [k]}  |\hat S_j|  \cdot \Pi(\mathcal A, x^*, \{x_1,\dots, x_{j-1}\}).
	\end{align*}
	
	Next, it holds that $\Pi(\mathcal A, x^*, \{x_1,\dots, x_{j-1}\})\geq |S|- \sum_{\ell\in [j-1]} |\hat S_\ell|$ since the payment willingness of a voter is $0$ once one of his approved candidates has been assigned its share.
	This means that
	\begin{align*}
		\sum_{i\in S} u_i(p)\geq \frac{1}{n}\sum_{j\in [k]} |\hat S_j|\cdot (|S|- \sum_{\ell\in [j-1]} |\hat S_\ell|)= \frac{1}{n}\left( |S|\cdot\sum_{j\in [k]} |\hat S_j| - \sum_{j\in [k]} |\hat S_j| \sum_{\ell\in [j-1]} |\hat S_\ell|\right).
	\end{align*}
	
	We will next derive a lower bound on the right-hand term. To this end, we note that $\sum_{j\in [k]} |\hat S_j|=|S|$ because the sets $\hat S_j$ are disjoint and $\hat S_k=S\setminus \bigcup_{j\in [k-1]} \hat S_j$. Hence, it follows that $\sum_{j\in [k]} |\hat S_j|\cdot |S|=|S|^2$. 
	Next, it holds for all $\hat S_j$ that $|\hat S_j|\cdot \sum_{\ell\in [j-1]} |\hat S_\ell|=\sum_{r=0}^{|\hat S_j| - 1}\sum_{\ell\in [j-1]} |\hat S_\ell| \leq \sum_{r = 0}^{|\hat S_j| - 1} (r + \sum_{\ell \in [j-1]} |\hat S_\ell|)$. By applying this for all these sets, we derive that $\sum_{j\in [k]} |\hat S_j|\cdot \sum_{\ell\in [j-1]} |\hat S_\ell|\leq \sum_{j\in [k]} \sum_{r=0}^{|\hat S_j|-1}(r + \sum_{\ell \in [j-1]} |\hat S_\ell|) = \sum_{j=0}^{|S|-1} j =\frac{(|S|-1)|S|}{2}$. 
	By using these insights, we infer that
	\begin{align*}
		\sum_{i\in S} u_i(p)
		\geq \frac{1}{n}\left( |S|\cdot\sum_{j\in [k]} |\hat S_j| - \sum_{j\in [k]} |\hat S_j| \sum_{\ell\in [j-1]} |\hat S_\ell|\right) 
		\geq  \frac{1}{n}\left(|S|^2-\frac{(|S|-1)|S|}{2}\right)
		\geq \frac{|S|^2}{2n}.
	\end{align*}
	
	Since this holds for all groups of voters $S$ that approve a common candidate, this proves that \map satisfies $2$-AFS.
	
	\medskip
	\noindent
	\textbf{Claim 2: \map fails $(2-\epsilon)$-AFS for every $\epsilon>0$.}
	
	Assume for contradiction that \map satisfies $(2-\epsilon)$-AFS for some $\epsilon>0$ and choose $\ell\in\mathbb{N}$ such that $(2-\epsilon) (\ell+3)<2\ell$. To derive a contradiction, we consider the following profile $\mathcal{A}$ with $\ell+1$ candidates $C=\{x_1,\dots, x_\ell, x^*\}$ and $n=\ell+\frac{\ell(\ell+1)}{2}$ voters.
	\begin{itemize}[]
		\item For each candidate $x_i\in \{x_1,\dots, x_\ell\}$, there is one voter who reports $\{x_i, x^*\}$. We will refer to this group of voters as $S$ and observe that $|S|=\ell$.
		\item For each candidate $x_i\in \{x_1,\dots, x_\ell\}$, there are $\ell+1-i$ voters who only approve $x_i$.
	\end{itemize}
	
	For this profile, \map will return the distribution $p$ defined by $p(x_i)=\frac{\ell+2-i}{n}$ for all candidates $x_i\in \{x_1,\dots, x_\ell\}$ and $p(x^*)=0$. In more detail, \map processes the candidates for $\mathcal A$ in the order $x_1,x_2,\dots, x_\ell, x^*$ because, for all $i\in [\ell]$, it holds for the total payment willingness that $\Pi(\mathcal A, x_i, \{x_1,\dots, x_{i-1}\})=\ell+2-i$ and $\Pi(\mathcal A, x^*, \{x_1,\dots, x_{i-1}\})=\ell+1-i$. Since each candidate $x_i\in C\setminus \{x^*\}$ is approved by exactly one voter in $S$, the total utility of these voters is
	\begin{align*}
		\sum_{i\in S} u_i(p)=\sum_{j\in [\ell]} \frac{\ell+2-j}{n}=
		\frac{1}{n} \sum_{j\in [\ell]} (j+1)=\frac{1}{n} \left(\ell+\frac{\ell(\ell+1)}{2}\right)=1.
	\end{align*}
	
	On the other hand, because all voters in $S$ approve $x^*$, $(2-\epsilon)$-AFS requires that
	\begin{align*}
		\sum_{i\in S} u_i(p)\geq \frac{1}{2-\epsilon}\cdot \frac{|S|^2}{n}= \frac{1}{2-\epsilon}\cdot \frac{\ell^2}{\ell+\frac{\ell(\ell+1)}{2}}= \frac{1}{2-\epsilon}\cdot \frac{2\ell}{\ell+3}>1.
	\end{align*}
	
	The last inequality follows by the choice of $\ell$. Since our two equations contradict each other, it follows that \map fails $(2-\epsilon)$-AFS.
	
	\medskip
	\noindent
	\textbf{Claim 3: \map only satisfies $\Omega(\log n)$-core.}
	
	To prove this claim, we will construct a family of approval profiles $\mathcal{A}^k$ with $n = k \cdot 2^{k-1}$ voters and $3\cdot 2^{k-1}-1$ candidates for all $k\in\mathbb{N}$ with $k\geq 2$ such that the distribution $p^k=\text{\map}(\mathcal{A}^k)$ only satisfies $\alpha$-core for $\alpha\geq \frac{k}{2}$. Since $\log_2 n = \log_2(k\cdot 2^{k-1})=k-1 + \log_2 k \leq 2k$, this shows that \map only satisfies $\Omega(\log n)$-core.
	
	Now, fix an integer $k\in \mathbb{N}$ with $k\geq 2$. To describe the profile $\mathcal{A}^k$, we will denote the set of voters of this profile by $N^k$ and we assume that $N^k=[k]\times [2^{k-1}]$. That is, every voter is indicated by a unique tuple $(i,j)$, where $i$ is best interpreted as ``row index'' and $j$ as ``column index''. Moreover, there are two types of candidates $X^k=\{x_1,\dots, x_{2^{k-1}}\}$ and $Y^k=\{y^i_\ell \colon i \in [k], \ell \in [2^{i-1}]\}$ and we assume that our tie-breaking prefers all candidates in $Y^k$ to those in $X^k$. The profile $\mathcal{A}^k$ is defined as follows:
	\begin{itemize}
		\item Every voter $(i,j)$ approves exactly one candidate in $Y^k$, namely the candidate $y^i_\ell$ for $\ell=\lceil\frac{j}{2^{k-i}}\rceil$. Less formally, the voters in the $i$-th row are partitioned into $2^{i-1}$ sets of size $2^{k-i}$ such that all voters in a set approve candidate~$y_\ell^i$.
		\item Every voter $(k,j)$ only approves $x_j$ among all candidates in $X^k$. Less formally, the voters in the $k$-th row only approve the candidate in $X^k$ that matches their column entry.
		\item Every voter $(i,j)$ with $i<k$ approves $2^{k-i-1}$ candidates from $X^k$, namely all $x_\ell$ with $2^{k-i-1}\cdot {(\lceil\frac{j}{2^{k-i-1}}\rceil-1)} \leq\ell\leq 2^{k-i-1}\cdot \lceil \frac{j}{2^{k-i-1}}\rceil $. Less formally, for $i<k$, each voter in the $i$-th row approves $2^{k-i-1}$ candidates from $X^k$ and all candidates in $X^k$ are approved by $2^{k-i-1}$ voters of this row.
	\end{itemize}
	
	We note that the candidates in $Y^k$ are unique for every row and partition the rows into subgroups. By contrast, the candidates in $X^k$ cover the voters across all rows and voters in rows with smaller indices approve more candidates from $X^k$. An example of $\mathcal{A}^k$ for $k=3$ is shown below.
	
	\begin{center}
		\begin{tabular}{c|ccccccc}
			& $j=1$ & $j=2$ & $j=3$ & $j=4$\\\hline
			$i=1$ & $\{y_1^1, x_1,x_2\}$ & $\{y_1^1, x_1,x_2\}$ &  $\{y_1^1, x_3,x_4\}$ &  $\{y_1^1, x_3,x_4\}$\\
			$i=2$& $\{y_1^2, x_1\}$ & $1$: $\{y_1^2, x_2\}$ &$1$: $\{y_2^2, x_3\}$ & $1$: $\{y_2^2, x_4\}$ \\
			$i=3$&$\{y_1^3, x_1\}$ & $1$: $\{y_2^3, x_2\}$ &$1$: $\{y_3^3, x_3\}$ & $1$: $\{y_4^3, x_4\}$
		\end{tabular}
	\end{center}
	
	In the profile $\mathcal A^k$, each candidate $y_\ell^i$ is approved by $2^{k-i}$ voters and each candidate in $X^k$ is approved by $1 + \sum_{r = 1}^{k-1} 2^{k-1-r} = 2^{k-1}$ voters.
	By our tie-breaking, \map first assigns a share of $\frac{1}{k}$ to candidate $y_1^1$. After this step, the total payment willingness for the candidates in $y_\ell^i$ with $i\geq 2$ remains unchanged, and the total payment willingness for the candidates in $X^k$ is $1 + \sum_{r = 2}^{k-1} 2^{k-1-r} = 2^{k-2}$.
	Hence, by our tie-breaking, \map will assign next a share of $\frac{1}{2k}$ to both $y_1^2$ and $y_2^2$.
	After these two steps, the total payment willingness for every candidate in~$X^k$ is $1 + \sum_{r = 3}^{k-1} 2^{k-1-r} = 2^{k-3}$ and the total payment willingness for the candidates $y_\ell^i$ with $i \geq 3$ is still $2^{k-i}$.
	By repeating this reasoning, we derive that \map returns the distribution~$p^k$ given by $p^k(y_\ell^i) = \frac{1}{k \cdot 2^{i-1}}$ for all $y^i_\ell \in Y^k$ and $p^k(x) = 0$ for all~$x \in X^k$.
	Consequently, the utility of every voter $(i, j)$ for $p^k$ is $u_{(i, j)}(p^k) = \frac{1}{k \cdot 2^{i-1}}$.
	
	Now, consider the distribution $q$ defined by $q(x)=\frac{1}{2^{k-1}}$ for every candidate $x\in X^k$. Every voter $(i,j)$ with $i<k$ approves $2^{k-i-1}$ candidates in $X^k$, so $u_{(i,j)}(q)=\frac{1}{2^{k-1}} \cdot 2^{k-i-1}=\frac{1}{2^{i}}$. Moreover, the voters $(i,j)$ with $i=k$ approve a single candidate $x\in X^k$, so their utility is $u_{(i,j)}(q)=\frac{1}{2^{k-1}}$. Hence, it holds that $u_{(i, j)}(q) \geq \frac{k}{2} \cdot u_{(i, j)}(p^k)$ for every voter $(i, j)$, which shows that \map fails $\alpha$-core for every $\alpha< \frac{k}{2}$.
	\end{proof}

	\begin{remark}
		The profiles in Claim 3 of the proof of \Cref{thm:MPmain} show that \map violates efficiency because all voters can improve their utility by an $\Omega(\log n)$-factor in these profiles. 
		However, we note that our construction requires a large number of candidates $m$, and we can use a result by \citet{MPS20a} to limit the efficiency loss for small $m$. 
		Specifically, the proof of Theorem 2 of these authors demonstrates that the distribution chosen by \map guarantees a utilitarian social welfare of at least $\frac{2}{\sqrt{m}}-\frac{2}{m}$ of the optimum. 
		In turn, this implies that we can improve the utility of every voter by at most $\frac{\sqrt{m}}{2-2/\sqrt{m}}$ under \map. 
		Hence, for many realistic values of $m$, the efficiency loss of \map is not crucial. Despite this, it should be noted that \map can even violate efficiency if there are only three alternatives. 
	\end{remark}
	
	
	\begin{remark}
		\label{para:tight}
		We note that \map shows that the implications between $\alpha$-AFS and $\alpha$-core in \Cref{prop:afs-core} are asymptotically tight. In more detail, as \map satisfies $2$-AFS and $\Theta(\log n)$-core, it is not possible to show a sub-logarithmic implication from approximate AFS to approximate core. Moreover, in the profiles used to show that \map fails $\alpha$-AFS for $\alpha<2$, the approximation ratio of \map for core converges to $1$, thus showing that it is not possible to show a better implication from approximate core to approximate AFS than the one given in \Cref{prop:afs-core}.
	\end{remark}
	
	\begin{remark}
		While \map is monotonic, it fails the more demanding property of strategyproofness. To see this, consider the following profile $\mathcal{A}$ with $7$ voters and $4$ candidates $C=\{a,b,c,d\}$.\medskip
		
		{\centering
			\begin{tabular} {@{}llllll@{}}
			$\mathcal{A}$: & $1$: $\{a,b,c\}$ & $2$: $\{a,b\}$ & $2$: $\{a,c\}$ & $1$: $\{b,d\}$ & $1$: $\{c,d\}$
		\end{tabular}\par
	}\smallskip
		
		\noindent For this profile, \map chooses the distribution $p$ with $p(a)=\frac{5}{7}$ and $p(d)=\frac{2}{7}$. However, if the first voter reports $\{b,c\}$ instead of $\{a,b,c\}$ and the tie-breaking prefers $b$ over $a$ and $c$, the outcome changes to the distribution $q$ with $q(b)=\frac{4}{7}$ and $q(c)=\frac{3}{7}$. Since this voter has a utility of $\frac{5}{7}$ when voting honestly but of $1$ when reporting $\{b,c\}$, \map is manipulable.
	\end{remark}

\section{Sequential Payment Rules}\label{sec:SPR}

Motivated by the positive results for \map, we will now generalize the underlying idea of this rule, which results in the class of \emph{sequential payment rules}. The idea of these rules is to allow voters to spend only parts of their shares on a candidate instead of allocating their full shares in one shot, as done by \map. In more detail, in every step of a sequential payment rule, each voter indicates a payment willingness for his approved candidates depending on his remaining budget. Just as for \map, we choose the candidate that maximizes the total payment willingness and the voters who are willing to pay for this candidate send the corresponding amount to this candidate. Finally, we remove this candidate from consideration, and repeat this process until all budget is spent.


Our key ingredient to formalize these rules are \emph{payment willingness functions} $\pi:\mathbb{N}\times [0,1]\rightarrow [0,1]$, which specify how much each voter is willing to spend on the next candidate depending on the number of his approved candidates and his remaining budget. We note 
that, from a technical point of view, payment willingness functions can also be seen as functions of the type $\pi:\mathbb{N}\times \mathbb{N}\rightarrow [0,1]$ as it suffices to know how many times a voter paid for candidates to infer his remaining budget and thus his next payment willingness. 
Since this description is easier to handle, we will subsequently treat payment willingness functions as mappings from $\mathbb{N}\times \mathbb{N}$ to $[0,1]$ and interpret $\pi(|A_i|, t)$ as the payment willingness of voter $i$ for his $t$-th candidate.
Finally, payment willingness functions need to satisfy two more conditions:
\begin{enumerate}[leftmargin=*, label=(\arabic*)]
	\item Voters have to spend their whole budget: $\sum_{j=1}^{|A_i|} \pi(|A_i|, j)=1$ for all approval ballots $A_i$. Since $\pi(|A_i|, j)$ is always non-negative, this implies that the payment willingness of a voter is $0$ once he spent his entire budget.\smallskip
	\item The payment willingness is weakly decreasing in the budget: $\pi(|A_i|, j)\geq \pi(|A_i|, j+1)$ for all approval ballots $A_i$ and all integers $j\in [|A_i|-1]$.
	\end{enumerate}

Each sequential payment rule $f$ is fully specified by its payment willingness function $\pi$. In particular, when letting $X$ denote the set of candidates to whom we already allocated shares, the total payment willingness for a candidate $x\in C\setminus X$ is $\Pi(\mathcal{A}, x, X)=\sum_{i\in N_x(\mathcal{A})} \pi(|A_i|, |A_i\cap X|+1)$. A sequential payment rule then works as follows: starting from $X=\emptyset$, we identify in each round the candidate $x\in C\setminus X$ that maximizes $\Pi(\mathcal{A}, x, X)$ (with ties broken lexicographically), assign a share of $\frac{1}{n}\Pi(\mathcal{A}, x, X)$ to this candidate, and add $x$ to the set $X$. We then run this process for $m$ rounds to compute the final distribution. A pseudocode description of sequential payment rules is given in \Cref{alg:SPrules}.

\begin{algorithm}[t]
	\caption{Sequential payment rules}
	\label{alg:SPrules}
	
	\SetKwData{Left}{left}\SetKwData{This}{this}\SetKwData{Up}{up}
	\SetKwFunction{Union}{Union}\SetKwFunction{FindCompress}{FindCompress}
	\SetKwInOut{Input}{input}\SetKwInOut{Output}{output}
	
	\Input{An approval profile $\mathcal A$}
	\Output{A distribution $p\in \Delta(C)$}
	\BlankLine
	$X \leftarrow \emptyset$\;
	\For{$j\in [m]$}{
		$x \gets \argmax_{x \in C \setminus X} \sum_{i \in N_x(\mathcal{A})} \pi(|A_i|, |A_i \cap X| + 1)$\;
		$X \gets X \cup \{x\}$\;
		$p(x) \gets \frac{1}{n} \sum_{i \in N_x(\mathcal{A})} \pi(|A_i|, |A_i \cap X| + 1)$\;
	}
	\Return{$p$}
\end{algorithm}

To exemplify these definitions, we first note that \map is the sequential payment rule defined by the payment willingness function $\pi$ with $\pi(x,1)=1$ and $\pi(x,y)=0$ for all $x\in\mathbb{N}$ and $y\in [x]\setminus \{1\}$. Moreover, \ues also belongs to the class of sequential payment rules as it can be described by the payment willingness function given by $\pi(x,y)=\frac{1}{x}$ for all $x\in \mathbb{N}$, $y\in [x]$. Note here that, if the payment willingness function is constant in the second parameter, the order in which we pick the candidates does not matter. More generally, sequential payment rules can be seen as the counterpart of sequential Thiele rules, a well-studied family of voting rules for approval-based committee elections \citep{LaSk22b,DoLe23a}. However, we note that sequential payment rules use the payment willingness to assign shares to the candidates instead of identifying which candidate to add to the committee, which causes a significant difference in their behavior when compared to sequential Thiele rules.

As we will show next, \map stands out as the most appealing sequential payment rule. To this end, we first observe that all sequential payment rules but \map fail unanimity. This observation already rules out the usage of such rules for numerous applications. Furthermore, we show next that sequential payment rules do also not allow to significantly improve over \map with respect to the conditions studied in this paper. Specifically, we prove next that, while all sequential payment rules satisfy RPC, all such rules other than \map and \ues fail monotonicity. Moreover, every sequential payment rule violates $\alpha$-AFS for $\alpha<\frac{3}{2}$, so they also do not allow to substantially improve the fairness guarantees of \map. The proofs of the following theorem and of \Cref{thm:tightbound} have been deferred to the appendix.

\begin{restatable}{theorem}{SPR}\label{thm:SPR}
	The following claims holds: 
\begin{enumerate}[leftmargin=*,label=(\arabic*)]
		\item All sequential payment rules satisfy RPC.
		\item All sequential payment rules except \ues and \map fail monotonicity.
		\item All sequential payment rules fail $\alpha$-AFS for $\alpha<\frac{3}{2}$. 
	\end{enumerate}
\end{restatable}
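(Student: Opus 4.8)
I would prove the three parts separately, reusing the \map machinery wherever possible and isolating the genuinely new work. Since the easy directions of (2) are already in place (\map is handled by \Cref{thm:MPmain} and \ues in its own subsection), I would dispatch the two routine parts (1) and (3) first and reserve the monotonicity claim, which is the crux, for last. For part (1), the plan is to observe that the RPC proof of \Cref{thm:MPmain} never uses any feature of \map beyond two structural properties of the total payment willingness: additivity over voter-disjoint profiles, $\Pi(\mathcal A+\mathcal A',x,Y)=\Pi(\mathcal A,x,Y)+\Pi(\mathcal A',x,Y)$, and monotonicity under inclusion, $\Pi(\mathcal A,x,Y)\ge\Pi(\mathcal A,x,Y')$ whenever $Y\subseteq Y'$. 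Both hold for an arbitrary payment willingness function $\pi$: the former because $\Pi$ is a sum over voters and the two profiles share the candidate set while having disjoint voter sets; the latter because $\pi(|A_i|,\cdot)$ is weakly decreasing in its second argument, so enlarging $Y$ can only advance each voter along his schedule. With these two facts, the three steps of the \map argument transfer verbatim: the assigned shares are weakly decreasing along the processing order; if ${\succsim^{f(\mathcal A)}}|_z={\succsim^{f(\mathcal A')}}|_z$ then the two processing orders agree up to $z$ (using the shared lexicographic tie-breaking); and additivity forces the order of $\mathcal A+\mathcal A'$ to agree up to $z$ as well, so $r(z)$ is a convex combination of $p(z)$ and $q(z)$.

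For part (3), I would exhibit, for each $\pi$, a tailored family of profiles whose AFS ratio approaches $3/2$ as the family grows. The engine is a \emph{nested padding} gadget: a group $S$ all approving a common candidate $x^*$, where each $i\in S$ additionally approves private candidate(s), together with padding voters (approving only private candidates) whose counts are chosen so that every private candidate is processed before $x^*$. This forces every voter in $S$ to spend his large early payments on private candidates and only his diminished later payment on $x^*$, suppressing $x^*$'s share and hence the group's average utility relative to the AFS target $|S|/n$. The key computation is an upper bound on $\sum_{i\in S}u_i(p)$ in terms of the payment values and the minimal padding needed to enforce the order; optimizing this bound over admissible $\pi$ (subject to $\sum_j\pi(k,j)=1$ and monotonicity) should give a worst case of $3/2$, attained in the limit by \mps{\frac{1}{3}}. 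The delicate point is that padding simultaneously inflates $n$ and donates share to the private candidates, which the group also values, so the crudest single-size version only yields $\tfrac{1}{3/2-\pi(2,1)}$; reaching $3/2$ at the extremal geometric schedule appears to require several ballot sizes matched to that schedule.

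Part (2) is the crux, and I would prove the contrapositive: if $\pi$ is neither \map nor \ues, then $f$ fails monotonicity. The two monotone rules are precisely the extreme schedules (\map places all mass on the first approved candidate; \ues spreads uniformly and, crucially, order-independently), and the claim is that every intermediate schedule can be defeated. I would first extract two witnesses: because $\pi$ is not \map there is a ballot size $k$ with $\pi(k,1)<1$, hence $\pi(k,2)>0$, so some voter pays a positive amount on a second candidate; and because $\pi$ is not \ues there is a size $k'$ and position $j$ with a strict drop $\pi(k',j)>\pi(k',j+1)$.

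I would then build a profile with a designated candidate $x^*$ and a voter $i$ whose ballot I enlarge by $x^*$, engineered so that the \emph{indirect} effect of the enlargement strictly lowers $x^*$'s share. The mechanism is a reshuffling chain: enlarging voter $i$'s ballot dilutes his payments to his other approved candidates, lowering the total willingness of one such candidate $w$, which lets a competitor $c$ overtake $w$ and be processed earlier, which forces a \emph{different} voter $j$ (approving both $c$ and $x^*$) to spend an early payment on $c$ and hence contribute strictly less to $x^*$; the strict drop from the non-\ues witness makes this loss positive, while the non-\map witness ensures $i$ does not simply commit his whole budget to $x^*$ and wash the effect out. The construction must be balanced so the guaranteed gain to $x^*$ from its new supporter $i$ is strictly smaller than the induced loss through $j$. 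I expect this to be the hardest part: the two witnesses may live at \emph{different} ballot sizes, the normalization and monotonicity constraints on $\pi$ sharply restrict how payments can shift across sizes (which is exactly what controls whether $w$'s willingness really drops), and the reshuffle must be pinned down exactly. The argument therefore likely needs a case split (for instance, according to whether the first deviation from \map and the first deviation from \ues occur at the same ballot size) rather than a single uniform gadget.
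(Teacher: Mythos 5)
Your plan coincides with the paper's proof in all three parts: (1) is indeed handled by noting that the \map RPC argument only uses additivity of $\Pi$ over voter-disjoint profiles and its monotonicity under set inclusion; (2) uses exactly your two witnesses (a strict drop $\pi(\ell_1,1)=1>\pi(\ell_1+1,1)$ from the non-\map assumption and a strict drop $\pi(\ell_2,k)>\pi(\ell_2,k+1)$ from the non-\ues assumption) together with your reshuffling chain --- dummy candidates pad the ballots to sizes $\ell_1$ and $\ell_2$, and multiplicities are chosen so that the loss $t_2\bigl(\pi(\ell_2,k)-\pi(\ell_2,k+1)\bigr)$ routed through the voters approving both the competitor and $x^*$ strictly exceeds the gain $t_1\,\pi(\ell_1+1,1)$ from $x^*$'s new supporters; and (3) follows from your padding gadget, except that the paper fixes a single ballot size $t$ and varies the number $z\in[t-1]$ of private candidates forced before $x^*$, producing $t$ linear inequalities on $\pi(t,\cdot)$ that are combined with weights $1,1,3,\dots,3^{t-2}$ against $\sum_j\pi(t,j)=1$ to yield $\alpha\ge\tfrac{3}{2}(1-3^{-t})$ and hence $\tfrac{3}{2}$ in the limit. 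Apart from that one organizational detail in (3), this is essentially the same approach.
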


Despite the other drawbacks of sequential payment rules, \Cref{thm:SPR} suggests that they allow to improve the approximation ratio to AFS from $2$ to $\frac{3}{2}$. For applications where simplicity, population consistency, and fairness are the primary concerns (and outweigh efficiency and monotonicity violations), sequential payment rules other than \map may thus still be an option. Moreover, we find it an interesting question how well such simple combinatorial rules can approximate demanding fairness considerations. As our last contribution, we will therefore investigate which sequential payment rule optimizes the approximation ratio to AFS. 

To give a precise answer to this question, we will use a more fine-grained analysis that parameterizes the AFS approximation ratio based on the maximum ballot size in a profile. Specifically, we say a distribution rule satisfies \emph{$\alpha$-AFS for a maximal ballot size $t$} if it satisfies $\alpha$-AFS for all approval profiles where all voters approve at most $t$ candidates.
Based on this notion, we will single out the \emph{$\frac{1}{3}$-multiplicative sequential payment rule (\mps{\frac{1}{3}})} as the fairest sequential payment rule. The idea of this rule is that the payment willingness of a voter is always discounted by a factor of $\frac{1}{3}$ if he pays for a candidate. More formally, the payment willingness function of this rule is defined by the equations $\sum_{y\in [x]} \pi(x,y)=1$ for all $x\in \mathbb{N}$ and $\pi(x,y+1)=\frac{1}{3}\pi(x,y)$ for all $x\in\mathbb{N}$ and $y\in [x-1]$, which implies that $\pi(x,y)=\frac{3^{-y}}{\sum_{i\in [x]}3^{-i}}$ for all $x\in \mathbb{N}$, $y\in [x]$. As we show next, no sequential payment rule satisfies $\alpha$-AFS for a maximal ballot size $t\in\mathbb{N}$ if $\alpha<\frac{3}{2}(1-3^{-t})$ and \mps{\frac{1}{3}} is the only rule in this class that tightly matches this lower bound for all $t\in\mathbb{N}$. 

\begin{restatable}{theorem}{tightanalysis}\label{thm:tightbound}
	The following statements hold:
	\begin{enumerate}[leftmargin=*,label=(\arabic*)]
		\item Let $t\!\in\!\mathbb{N}$. No sequential payment rule satisfies $\alpha$-AFS for the maximal ballot size $t$ and ${\alpha\!<\! \frac{3}{2}(1-3^{-t})}$.
		\item \mps{\frac{1}{3}} is the only sequential payment rule that satisfies ${\frac{3}{2}(1-3^{-t})}$-AFS for all maximal ballot sizes $t\!\in\!\mathbb{N}$.
	\end{enumerate}
\end{restatable}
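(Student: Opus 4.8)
The plan is to reduce both parts to a single finite-dimensional optimization over the payment willingness values. Throughout, fix a maximal ballot size $t$ and, for a sequential payment rule with payment willingness function $\pi$, abbreviate $a_j=\pi(t,j)$ for $j\in[t]$; by the defining constraints $a_1\ge a_2\ge\dots\ge a_t\ge 0$ and $\sum_{j=1}^t a_j=1$, and I write $A_p=\sum_{j=1}^p a_j$. For Part~(1) I would build, for each target position $p\in[t]$, a family of profiles $C_p$ (parameterized by a growing integer $N$) in which a group $S$ of $N$ voters all approve a common candidate $x^*$, and each such voter additionally approves $t-1$ \emph{private} decoys. The decoys in positions $1,\dots,p-1$ are boosted by helper voters with singleton ballots so that they are processed before $x^*$, while those in positions $p+1,\dots,t$ carry no helpers and are processed afterwards, with helper counts chosen to force this order while keeping the total number of voters $n$ as small as possible (this generalizes the construction in Claim~2 of the proof of \Cref{thm:MPmain}). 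The crucial mechanism is dilution: since the decoys are private, their possibly large shares are credited to only one member of $S$ each, whereas $x^*$ is credited to all of $S$. Processing the pre-decoys column by column so that the willingness of $x^*$ decays as fast as possible, I expect the leading order $n\approx N^2\big(A_p-\tfrac{a_1-a_p}{2}\big)$, giving the asymptotic AFS ratio
\[
R_p^{-1}\;=\;A_p-\tfrac{a_1-a_p}{2},\qquad\text{i.e.}\qquad R_p\;=\;\Big(A_p-\tfrac{a_1-a_p}{2}\Big)^{-1}.
\]
For $t=2$ this recovers $R_1=1/a_1$ and $R_2=2/(3-2a_1)$, matching the private-decoy instance and the Claim~2 instance respectively.

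Given these constructions, Part~(1) reduces to showing $\max_{p\in[t]}R_p\ge\tfrac32(1-3^{-t})$ for every admissible $(a_j)$, equivalently $\min_p R_p^{-1}\le\tfrac{2}{3(1-3^{-t})}$. As each $R_p^{-1}$ is linear in $a$, this is a max--min over the simplex, optimized by balancing the functionals $R_p^{-1}$ to a common value. The balance condition $R_{p+1}^{-1}=R_p^{-1}$ simplifies to $3a_{p+1}=a_p$, i.e.\ exactly the $\tfrac13$-geometric recursion; together with $\sum_j a_j=1$ this forces the \mps{\frac{1}{3}} weights, and a direct computation gives the common value $R_p^{-1}=\tfrac{2}{3(1-3^{-t})}$ for all $p$. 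To make balancing rigorous I would exhibit non-negative weights $\lambda_p$ summing to one for which $\sum_p\lambda_p R_p^{-1}(a)$ is constant on the simplex and equal to $\tfrac{2}{3(1-3^{-t})}$; then $\min_p R_p^{-1}\le\sum_p\lambda_p R_p^{-1}=\tfrac{2}{3(1-3^{-t})}$ for every $a$, which is Part~(1). Such a certificate exists precisely because the geometric point equalizes all functionals, and I would read it off from the geometric weights.

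For Part~(2) I would treat achievability and uniqueness separately. For achievability---that \mps{\frac{1}{3}} satisfies $\tfrac32(1-3^{-t})$-AFS on every profile of maximal ballot size $t$---I would adapt the argument that \map satisfies $2$-AFS (Claim~1 in the proof of \Cref{thm:MPmain}). Fixing $S$ with common candidate $x^*$ processed at some round $k$, I would lower-bound $\sum_{i\in S}u_i(p)$ by summing the contributions of voters in $S$ over the candidates processed up to $x^*$, using the maximality inequality $\Pi(\mathcal A,x_j,\cdot)\ge\Pi(\mathcal A,x^*,\cdot)$ at each round; the one new ingredient is that payments now decay geometrically rather than being all-or-nothing, so the residual-budget bookkeeping produces the truncated geometric factor $1-3^{-t}$. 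Uniqueness then follows from Part~(1): its optimization has the $\tfrac13$-geometric weights as its \emph{unique} optimizer, so any rule differing from \mps{\frac{1}{3}} has, for the smallest ballot size $t$ at which its weights deviate from geometric, some construction $C_p$ with $R_p>\tfrac32(1-3^{-t})$, and hence fails $\tfrac32(1-3^{-t})$-AFS.

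The main obstacle I expect lies in the design and analysis of the $C_p$ in Part~(1): one must choose helper multiplicities (depending on $\pi$, and possibly vanishing when some $a_j=0$, as for \map where $\max_p R_p=2$) that provably force the intended processing order for the given rule, and then control the second-order terms so the ratio genuinely converges to $R_p$. Getting the helper counts simultaneously correct for all positions $p$ and all admissible $\pi$---including degenerate ones---is the delicate part; by comparison, the optimization step and the achievability argument are routine once the formula for $R_p$ is established.
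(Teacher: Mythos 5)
Your proposal follows essentially the same route as the paper: the private-decoy-plus-singleton-helper constructions yield exactly the paper's linear constraints $\frac{1}{2}\pi_1+\pi_2+\dots+\pi_{p-1}+\frac{3}{2}\pi_p\ge\frac{1}{\alpha}$ (your $R_p^{-1}=A_p-\frac{a_1-a_p}{2}$), the positive-combination certificate is the paper's telescoping combination with weights $1,1,3,9,\dots,3^{t-2}$, uniqueness follows from tightness of all constraints forcing the $\frac{1}{3}$-geometric weights, and achievability adapts the \map-style AFS bound with geometric bookkeeping. The one ``delicate part'' you flag (forcing the processing order for arbitrary, possibly degenerate $\pi$) is handled in the paper by passing to fractional profiles and perturbing the helper counts via a rational, tie-broken approximation $K^\epsilon$ of the exact willingness levels.
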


\begin{remark}
	\label{rem:SPR}
	We note that \mps{\frac{1}{3}} is a member of the broader class of $\gamma$-multiplicative sequential payment rules (\mps{\gamma}), where the payment willingness of each voter is discounted by $\gamma\in [0,1]$ (instead of $\frac{1}{3}$) after he pays for a candidate. In the appendix (\Cref{thm:MPS}), we determine the approximation ratio to AFS for every rule in this class as a function of $\gamma$. Further, while \mps{\frac{1}{3}} is the only sequential payment rule that satisfies ${\frac{3}{2}(1-3^{-t})}$-AFS for all maximal ballot sizes, there are multiple such rules that match the general lower bound of $\frac{3}{2}$. An example is the $\frac{1}{3}$-additive sequential payment rule, where the payment willingness of every agent decreases by $\frac{1}{3}$ whenever he pays for a candidate. More formally, this rule is defined by the payment willingness function $\pi$ with $\pi(1,1)=1$, $\pi(x,1)=\frac{2}{3}$, $\pi(x,2)=\frac{1}{3}$, and $\pi(x,y)=0$ for all $x\in \mathbb{N}\setminus \{1\}$, $y\in [x]\setminus \{1,2\}$. Additionally to satisfying $\frac{3}{2}$-AFS, this rule is monotonic for all instances where the voters are required to approve at least two candidates. This can be shown based on a similar argument as for \map since the payment willingness of each voter for his first two candidates (i.e., $\pi(x,1)$ and $\pi(x,2)$) does not change when approving an additional candidate.
	\end{remark}

\section{Conclusion}

In this paper, we study approval-based budget division with the aim of finding simple, consistent, and fair distribution rules. More specifically, our goal is to find simple combinatorial rules that satisfy monotonicity and strong population consistency conditions and, at the same time, give strong fairness guarantees. To this end, we first observe that all existing rules fail our desiderata. In particular, while \nash satisfies demanding fairness conditions such as core and AFS, it violates monotonicity and is a rather intrinsic rule. On the other hand, all other common rules (i.e., \cut, \fut, and \ues) give poor approximations to AFS and core, and \cut and \fut additionally fail weak forms of population consistency. Motivated by these observations, we suggest the maximum payment rule (\map), which determines the outcome by first distributing the budget uniformly to the voters and then letting the voters spend their shares on the candidates. In more detail, under \map, we repeatedly identify the candidate that is approved by the most voters who have not spent their share yet and assign the shares of these voters to the candidate. As we show, \map satisfies all of our conditions: it satisfies monotonicity as well as a demanding population consistency notion called ranked population consistency, and it gives a $2$-approximation to AFS and a $\Theta(\log n)$-approximation to core. 

Motivated by these positive results, we further generalize the underlying idea of \map to a family of rules called sequential payment rules. Intuitively, these rules differ from \map by allowing for more fine-grained payments: in each step, the voters now indicate a payment willingness for their approved candidates depending on their remaining budget, and we let voters send their corresponding shares to the candidate that maximizes the total payment willingness. We then prove that the class of sequential payment rules only offers limited improvements over \map. In more detail, while all sequential payment rules satisfy ranked population consistency, \map and \ues are the only rules within this class that satisfy monotonicity. Moreover, we prove that no sequential payment rule is an $\alpha$-approximation to AFS for $\alpha<\frac{3}{2}$, and we characterize the $\frac{1}{3}$-multiplicative payment rule (\mps{\frac{1}{3}}) as the fairest rule in this class.


Our work points to several interesting follow-up questions. For instance, a straightforward drawback of sequential payment rules is that they fail Pareto efficiency, leading to the question of whether one can define similar spending dynamics that are approximately fair and efficient. More generally, it seems interesting to study other types of spending dynamics that, e.g., could take the current utility of voters into account or allow voters to pay multiple times for a candidate.

\section*{Acknowledgments}

This work was partially supported by the NSF-CSIRO grant on ``Fair Sequential Collective Decision-Making'' and the ARC Laureate Project FL200100204 on ``Trustworthy AI''.


\begin{thebibliography}{51}
	\providecommand{\natexlab}[1]{#1}
	\providecommand{\url}[1]{\texttt{#1}}
	\expandafter\ifx\csname urlstyle\endcsname\relax
	\providecommand{\doi}[1]{doi: #1}\else
	\providecommand{\doi}{doi: \begingroup \urlstyle{rm}\Url}\fi
	
	\bibitem[Airiau et~al.(2019)Airiau, Aziz, Caragiannis, Kruger, Lang, and
	Peters]{AAC+19a}
	St\'{e}phane Airiau, Haris Aziz, Ioannis Caragiannis, Justin Kruger,
	J{\'e}r{\^o}me Lang, and Dominik Peters.
	\newblock Portioning using ordinal preferences: Fairness and efficiency.
	\newblock In \emph{Proceedings of the 28th International Joint Conference on
		Artificial Intelligence (IJCAI)}, pages 11--17, 2019.
	
	\bibitem[Aumann(1961)]{Auma61b}
	Robert~J. Aumann.
	\newblock The core of a cooperative game without side payments.
	\newblock \emph{Transactions of the American Mathematical Society},
	98:\penalty0 539--552, 1961.
	
	\bibitem[Aziz and Shah(2021)]{AzSh21a}
	Haris Aziz and Nisarg Shah.
	\newblock Participatory budgeting: Models and approaches.
	\newblock In Tam\'{a}s Rudas and G\'{a}bor P\'{e}li, editors, \emph{Pathways
		Between Social Science and Computational Social Science: Theories, Methods,
		and Interpretations}, pages 215--236. Springer International Publishing,
	2021.
	
	\bibitem[Aziz et~al.(2017)Aziz, Brill, Conitzer, Elkind, Freeman, and
	Walsh]{ABC+16a}
	Haris Aziz, Markus Brill, Vincent Conitzer, Edith Elkind, Rupert Freeman, and
	Toby Walsh.
	\newblock Justified representation in approval-based committee voting.
	\newblock \emph{Social Choice and Welfare}, 48\penalty0 (2):\penalty0 461--485,
	2017.
	
	\bibitem[Aziz et~al.(2020)Aziz, Bogomolnaia, and Moulin]{ABM20a}
	Haris Aziz, Anna Bogomolnaia, and Herv\'{e} Moulin.
	\newblock Fair mixing: the case of dichotomous preferences.
	\newblock \emph{ACM Transactions on Economics and Computation}, 8\penalty0
	(4):\penalty0 18:1--18:27, 2020.
	
	\bibitem[Bei et~al.(2025)Bei, Lu, and Suksompong]{BLS24a}
	Xiaohui Bei, Xinhang Lu, and Warut Suksompong.
	\newblock Truthful cake sharing.
	\newblock \emph{Social Choice and Welfare}, 64:\penalty0 309--343, 2025.
	
	\bibitem[Bogomolnaia et~al.(2002)Bogomolnaia, Moulin, and Stong]{BMS02a}
	Anna Bogomolnaia, Herv{\'e} Moulin, and Richard Stong.
	\newblock Collective choice under dichotomous preferences.
	\newblock {M}imeo, 2002.
	
	\bibitem[Bogomolnaia et~al.(2005)Bogomolnaia, Moulin, and Stong]{BMS05a}
	Anna Bogomolnaia, Herv{\'e} Moulin, and Richard Stong.
	\newblock Collective choice under dichotomous preferences.
	\newblock \emph{Journal of Economic Theory}, 122\penalty0 (2):\penalty0
	165--184, 2005.
	
	\bibitem[Brams and Fishburn(2007)]{BrFi07c}
	Steven~J. Brams and Peter~C. Fishburn.
	\newblock \emph{Approval Voting}.
	\newblock Springer-Verlag, 2nd edition, 2007.
	
	\bibitem[Brandl and Peters(2022)]{BrPe19a}
	Florian Brandl and Dominik Peters.
	\newblock Approval voting under dichotomous preferences: {A} catalogue of
	characterizations.
	\newblock \emph{Journal of Economic Theory}, 205, 2022.
	
	\bibitem[Brandl et~al.(2016)Brandl, Brandt, and Seedig]{Bran13a}
	Florian Brandl, Felix Brandt, and Hans~Georg Seedig.
	\newblock Consistent probabilistic social choice.
	\newblock \emph{Econometrica}, 84\penalty0 (5):\penalty0 1839--1880, 2016.
	
	\bibitem[Brandl et~al.(2021)Brandl, Brandt, Peters, and Stricker]{BBPS21a}
	Florian Brandl, Felix Brandt, Dominik Peters, and Christian Stricker.
	\newblock Distribution rules under dichotomous preferences: {T}wo out of three
	ain't bad.
	\newblock In \emph{Proceedings of the 22nd ACM Conference on Economics and
		Computation (ACM-EC)}, pages 158--179, 2021.
	
	\bibitem[Brandl et~al.(2022)Brandl, Brandt, Greger, Peters, Stricker, and
	Suksompong]{BBP+19a}
	Florian Brandl, Felix Brandt, Matthias Greger, Dominik Peters, Christian
	Stricker, and Warut Suksompong.
	\newblock Funding public projects: {A} case for the {N}ash product rule.
	\newblock \emph{Journal of Mathematical Economics}, 99:\penalty0 102585, 2022.
	
	\bibitem[Brandt(2015)]{Bran11c}
	Felix Brandt.
	\newblock Set-monotonicity implies {K}elly-strategyproofness.
	\newblock \emph{Social Choice and Welfare}, 45\penalty0 (4):\penalty0 793--804,
	2015.
	
	\bibitem[Brandt(2017)]{Bran17a}
	Felix Brandt.
	\newblock Rolling the dice: {R}ecent results in probabilistic social choice.
	\newblock In Ulle Endriss, editor, \emph{Trends in Computational Social
		Choice}, chapter~1, pages 3--26. AI Access, 2017.
	
	\bibitem[Brandt et~al.(2023)Brandt, Greger, Segal-Halevi, and
	Suksompong]{BGSS23b}
	Felix Brandt, Matthias Greger, Erel Segal-Halevi, and Warut Suksompong.
	\newblock Balanced donor coordination.
	\newblock In \emph{Proceedings of the 24th ACM Conference on Economics and
		Computation (ACM-EC)}, page 299, 2023.
	
	\bibitem[Brandt et~al.(2024)Brandt, Greger, Segal-Halevi, and
	Suksompong]{BGSS24a}
	Felix Brandt, Matthias Greger, Erel Segal-Halevi, and Warut Suksompong.
	\newblock Optimal budget aggregation with single-peaked preferences.
	\newblock In \emph{Proceedings of the 25th ACM Conference on Economics and
		Computation (ACM-EC)}, page~49, 2024.
	
	\bibitem[Brill et~al.(2022)Brill, G\"olz, Peters, Schmidt-Kraepelin, and
	Wilker]{BGP+22a}
	Markus Brill, P.~G\"olz, Dominik Peters, U.~Schmidt-Kraepelin, and K.~Wilker.
	\newblock Approval-based apportionment.
	\newblock \emph{Mathematical Programming}, 2022.
	
	\bibitem[Caragiannis et~al.(2024)Caragiannis, Christodoulou, and
	Protopapas]{CCP24a}
	Ioannis Caragiannis, George Christodoulou, and Nicos Protopapas.
	\newblock Truthful aggregation of budget proposals with proportionality
	guarantees.
	\newblock \emph{Artificial Intelligence}, 335:\penalty0 104178, 2024.
	
	\bibitem[Cover(1984)]{Cove84a}
	Thomas~M. Cover.
	\newblock An algorithm for maximizing expected log investment return.
	\newblock \emph{IEEE Transactions on Information Theory}, 30\penalty0
	(2):\penalty0 369--373, 1984.
	
	\bibitem[Dong and Lederer(2024)]{DoLe23a}
	Chris Dong and Patrick Lederer.
	\newblock Refined characterizations of approval-based committee scoring rules.
	\newblock In \emph{Proceedings of the 38th AAAI Conference on Artificial
		Intelligence (AAAI)}, pages 9670--9678, 2024.
	
	\bibitem[Duddy(2015)]{Dudd15a}
	Conal Duddy.
	\newblock Fair sharing under dichotomous preferences.
	\newblock \emph{Mathematical Social Sciences}, 73:\penalty0 1--5, 2015.
	
	\bibitem[Ebadian et~al.(2024)Ebadian, Kahng, Peters, and Shah]{EKPS24a}
	Soroush Ebadian, Anson Kahng, Dominik Peters, and Nisarg Shah.
	\newblock Optimized distortion and proportional fairness in voting.
	\newblock \emph{ACM Transactions on Economics and Computation}, 12\penalty0
	(1):\penalty0 1--39, 2024.
	
	\bibitem[Elkind et~al.(2024)Elkind, Greger, Lederer, Suksompong, and
	Teh]{EGL+24a}
	Edith Elkind, Matthias Greger, Patrick Lederer, Warut Suksompong, and Nicholas
	Teh.
	\newblock Settling the score: {P}ortioning with cardinal preferences.
	\newblock Technical report, https://arxiv.org/pdf/2307.15586v3, 2024.
	
	\bibitem[Fain et~al.(2016)Fain, Goel, and Munagala]{FGM16a}
	Brandon Fain, Ashish Goel, and Kamesh Munagala.
	\newblock The core of the participatory budgeting problem.
	\newblock In \emph{Proceedings of the 12th International Conference on Web and
		Internet Economics (WINE)}, pages 384--399, 2016.
	
	\bibitem[Fairstein et~al.(2023)Fairstein, Benad{\'e}, and Gal]{FBG23a}
	Roy Fairstein, Gerdus Benad{\'e}, and Kobi Gal.
	\newblock Participatory budgeting designs for the real world.
	\newblock In \emph{Proceedings of the 37th AAAI Conference on Artificial
		Intelligence (AAAI)}, pages 5633--5640, 2023.
	
	\bibitem[Faliszewski et~al.(2017)Faliszewski, Skowron, Slinko, and
	Talmon]{FSST17a}
	Piotr Faliszewski, Piotr Skowron, Arkadii Slinko, and Nimrod Talmon.
	\newblock Multiwinner voting: A new challenge for social choice theory.
	\newblock In Ulle Endriss, editor, \emph{Trends in Computational Social
		Choice}, chapter~2. 2017.
	
	\bibitem[Fishburn(1978)]{Fish78d}
	Peter~C. Fishburn.
	\newblock Axioms for approval voting: {D}irect proof.
	\newblock \emph{Journal of Economic Theory}, 19\penalty0 (1):\penalty0
	180--185, 1978.
	
	\bibitem[Fishburn(1982)]{Fish82a}
	Peter~C. Fishburn.
	\newblock Monotonicity paradoxes in the theory of elections.
	\newblock \emph{Discrete Applied Mathematics}, 4\penalty0 (2):\penalty0
	119--134, 1982.
	
	\bibitem[Freeman and Schmidt-Kraepelin(2024)]{FrSc24a}
	Rupert Freeman and Ulrike Schmidt-Kraepelin.
	\newblock Project-fair and truthful mechanisms for budget aggregation.
	\newblock In \emph{Proceedings of the 38th AAAI Conference on Artificial
		Intelligence (AAAI)}, pages 9704--9712, 2024.
	
	\bibitem[Freeman et~al.(2021)Freeman, Pennock, Peters, and {Wortman
		Vaughan}]{FPPV21a}
	Rupert Freeman, David~M. Pennock, Dominik Peters, and Jennifer {Wortman
		Vaughan}.
	\newblock Truthful aggregation of budget proposals.
	\newblock \emph{Journal of Economic Theory}, 193:\penalty0 105234, 2021.
	
	\bibitem[Gibbard(1977)]{Gibb77a}
	Allan Gibbard.
	\newblock Manipulation of schemes that mix voting with chance.
	\newblock \emph{Econometrica}, 45\penalty0 (3):\penalty0 665--681, 1977.
	
	\bibitem[Guerdjikova and Nehring(2014)]{GuNe14a}
	Ani Guerdjikova and Klaus Nehring.
	\newblock Weighing experts, weighing sources: {T}he diversity value.
	\newblock 2014.
	\newblock Mimeo.
	
	\bibitem[Haret et~al.(2024)Haret, Klumper, Maly, and Sch{\"a}fer]{HKMS24a}
	Adrian Haret, Sophie Klumper, Jan Maly, and Guido Sch{\"a}fer.
	\newblock Committees and equilibria: Multiwinner approval voting through the
	lens of budgeting games.
	\newblock In \emph{Proceedings of the 25th ACM Conference on Economics and
		Computation (ACM-EC)}, pages 51 -- 70, 2024.
	
	\bibitem[Intriligator(1973)]{Intr73a}
	Michael~D. Intriligator.
	\newblock A probabilistic model of social choice.
	\newblock \emph{Review of Economic Studies}, 40\penalty0 (4):\penalty0
	553--560, 1973.
	
	\bibitem[Lackner and Skowron(2021)]{LaSk21a}
	Martin Lackner and Piotr Skowron.
	\newblock Consistent approval-based multi-winner rules.
	\newblock \emph{Journal of Economic Theory}, 192:\penalty0 105173, 2021.
	
	\bibitem[Lackner and Skowron(2023)]{LaSk22b}
	Martin Lackner and Piotr Skowron.
	\newblock \emph{Multi-Winner Voting with Approval Preferences}.
	\newblock Springer-Verlag, 2023.
	
	\bibitem[Lindner et~al.(2008)Lindner, Nehring, and Puppe]{LNP08a}
	Tobias Lindner, Klaus Nehring, and Clemens Puppe.
	\newblock Allocating public goods via the midpoint rule.
	\newblock In \emph{Proceedings of the 9th International Meeting of the Society
		of Social Choice and Welfare}, 2008.
	
	\bibitem[Lu et~al.(2024)Lu, Peters, Aziz, Bei, and Suksompong]{LPA+24a}
	Xinhang Lu, Jannik Peters, Haris Aziz, Xiaohui Bei, and Warut Suksompong.
	\newblock Approval-based voting with mixed goods.
	\newblock \emph{Social Choice and Welfare}, 62:\penalty0 643--677, 2024.
	
	\bibitem[Maskin(1999)]{Mask99a}
	Eric Maskin.
	\newblock Nash equilibrium and welfare optimality.
	\newblock \emph{Review of Economic Studies}, 66\penalty0 (26):\penalty0 23--38,
	1999.
	
	\bibitem[Michorzewski et~al.(2020)Michorzewski, Peters, and Skowron]{MPS20a}
	Marcin Michorzewski, Dominik Peters, and Piotr Skowron.
	\newblock Price of fairness in budget division and probabilistic social choice.
	\newblock In \emph{Proceedings of the 34th AAAI Conference on Artificial
		Intelligence (AAAI)}, pages 2184--2191, 2020.
	
	\bibitem[Munagala et~al.(2022)Munagala, Shen, Wang, and Wang]{MSWW22a}
	Kamesh Munagala, Yiheng Shen, Kangning Wang, and Zhiyi Wang.
	\newblock Approximate core for committee selection via multilinear extension
	and market clearing.
	\newblock In \emph{Proceedings of the 33rd Annual ACM-SIAM Symposium on
		Discrete Algorithms (SODA)}, pages 2229--2252, 2022.
	
	\bibitem[Peters and Skowron(2020)]{PeSk20a}
	Dominik Peters and Piotr Skowron.
	\newblock Proportionality and the limits of welfarism.
	\newblock In \emph{Proceedings of the 21nd ACM Conference on Economics and
		Computation (ACM-EC)}, pages 793--794, 2020.
	
	\bibitem[Rey et~al.(2025)Rey, Schmidt, and Maly]{RSM25a}
	Simon Rey, Felicia Schmidt, and Jan Maly.
	\newblock The (computational) social choice take on indivisible participatory
	budgeting.
	\newblock Technical report, https://arxiv.org/pdf/2303.00621.pdf, 2025.
	
	\bibitem[S\'{a}nchez-Fern\'{a}ndez and Fisteus(2019)]{SaFi19a}
	Luis S\'{a}nchez-Fern\'{a}ndez and Jes\'{u}s~A. Fisteus.
	\newblock Monotonicity axioms in approval-based multi-winner voting rules.
	\newblock In \emph{Proceedings of the 18th International Conference on
		Autonomous Agents and Multiagent Systems (AAMAS)}, pages 485--493, 2019.
	
	\bibitem[Sanver and Zwicker(2012)]{SaZw10a}
	M.~Remzi Sanver and William~S. Zwicker.
	\newblock Monotonicity properties and their adaption to irresolute social
	choice rules.
	\newblock \emph{Social Choice and Welfare}, 39\penalty0 (2--3):\penalty0
	371--398, 2012.
	
	\bibitem[Smith(1973)]{Smit73a}
	John~H. Smith.
	\newblock Aggregation of preferences with variable electorate.
	\newblock \emph{Econometrica}, 41\penalty0 (6):\penalty0 1027--1041, 1973.
	
	\bibitem[{Speroni di Fenizio} and Gewurz(2019)]{SpGe19a}
	Pietro {Speroni di Fenizio} and Daniele~A. Gewurz.
	\newblock The space of proportional voting systems and the most majoritarian
	among them.
	\newblock \emph{Social Choice and Welfare}, 52:\penalty0 663--683, 2019.
	
	\bibitem[Yang et~al.(2025)Yang, Hausladen, Peters, Pournaras, Fricker, and
	Helbing]{YHPP24a}
	Joshua~C. Yang, Carina~I. Hausladen, Dominik Peters, Evangelos Pournaras,
	Reugal~Hnggli Fricker, and DIrk Helbing.
	\newblock Designing digital voting systems for citizens: Achieving fairness and
	legitimacy in participatory budgeting.
	\newblock \emph{Digital Government: Research and Practice}, 5\penalty0
	(3):\penalty0 1--30, 2025.
	
	\bibitem[Young(1975)]{Youn75a}
	H.~Peyton Young.
	\newblock Social choice scoring functions.
	\newblock \emph{SIAM Journal on Applied Mathematics}, 28\penalty0 (4):\penalty0
	824--838, 1975.
	
	\bibitem[Young and Levenglick(1978)]{YoLe78a}
	H.~Peyton Young and Arthur~B. Levenglick.
	\newblock A consistent extension of {C}ondorcet's election principle.
	\newblock \emph{SIAM Journal on Applied Mathematics}, 35\penalty0 (2):\penalty0
	285--300, 1978.
	
\end{thebibliography}

\clearpage
\appendix


\section{Omitted Proofs From Section~5}

In this appendix, we present the omitted proofs of the claims from Section~\ref{sec:SPR}. We start by proving \Cref{thm:SPR}.

\SPR*
\begin{proof}
	We start by considering the three claims in \Cref{thm:SPR}. To this end, we first note that an analogous argument as for \map shows that every sequential payment rule satisfies RPC; we thus omit this proof. Secondly, the claim on $\alpha$-AFS is an immediate consequence of \Cref{thm:tightbound} by considering the limit for $t\rightarrow \infty$ of our lower bound. We will thus only show the claim on monotonicity here. 
	
	To this end, let $f$ denote a sequential payment rule other than \map and \ues and let $\pi$ denote its payment willingness function. Since $f$ is not \map, there is an integer $\ell_1$ such that $\pi(\ell_1,1)=1> \pi(\ell_1 +1,1)$.
	Moreover, since $f$ is not \ues, there is another pair of indices $\ell_2, k$ with $k\in [\ell_2-1]$ such that $\pi(\ell_2,k)\neq \pi(\ell_2,k+1)$. Because payment willingness functions are non-increasing in their second parameter, this means that  $\pi(\ell_2,k)> \pi(\ell_2,k+1)$. Based on these parameters, we will now construct a profile where $f$ fails monotonicity. To this end, let $\ell=\max(\ell_1,\ell_2)$. The rough idea of our construction is as follows: there will be three ''active'' candidates $x,y,z$ and $2\ell-2$ inactive candidates. In the original profile, $f$ will process the active candidates in the order $y,x,z$. By contrast, if some voters additionally approve $x$, this will decrease the payment willingness for $y$ as $\pi(\ell_1, 1)>\pi(\ell_1+1,1)$. In particular, $z$ will then be chosen as the first candidate among $x$, $y$, and $z$. However, choosing $z$ before $x$ will reduce the payment willingness towards $x$ as $\pi(\ell_2, k)>\pi(\ell_2, k+1)$, thus resulting in a smaller share for $x$ despite the fact that it gained more approvals. To formalize this, we will use $2\ell+1$ candidates which are partitioned into three groups $B=\{b_1,\dots, b_{\ell-1}\}$, $D=\{d_1,\dots, d_{\ell-1}\}$, and $x,y,z$. For convenience, we let by $B(s) = \{b_1, \dots, b_s\}$ and $D(s) = \{d_1, \dots, d_s\}$. Then, we consider the following approval profile $\mathcal A$.
	\begin{itemize}
		\item Let $\delta_1=\pi(\ell_1,1)-\pi(\ell_1+1,1)$ and define $t_1$ as an integer such that $t_1 \delta_1>1$. We add $t_1$ voters approving the set $D(\ell_1-1) \cup \{y\}$ and we call this set of voters~$N_1$.
		\item Let $\delta_2=\pi(\ell_2, k)-\pi(\ell_2,k+1)$ and choose an integer $t_2$ such that $t_2\delta_2>t_1 \pi(\ell_1+1,1)$. We add $t_2$ voters who report the set $B(k-1)\cup D(\ell_2-k-1)\cup \{x,z\}$.
		\item Choose an integer $t_3$ such that $t_3\delta_2>t_1\pi(\ell_1+1,1)+1$. We add $t_3$ voters who approve the set $B(k-1)\cup D(\ell_2-k-1)\cup \{y,z\}$.
		\item For each $w\in \{x,y,z\}$, we add at least $t_1+t_2+t_3$ voters who only approve $w$ to guarantee that the candidates in $\{x,y,z\}$ will be assigned their shares before the candidates in $D$. This works because the definition of payment willingness functions require that $\pi(1,1)=1$. Moreover, we can add more such voters to ensure that $\Pi(\mathcal {A}, z, B)+1\geq \Pi(\mathcal {A}, y, B)>\Pi(\mathcal {A}, z, B)$ and $\Pi(\mathcal {A}, x, B)+t_1\cdot \pi(\ell_1+1,1)+1\geq \Pi(\mathcal A, z, B)>\Pi(\mathcal A, x, B)+t_1\cdot \pi(\ell_1+1,1)$.
		\item Let $T$ denote the total number of voters introduced so far. For each $w\in B$, we add $2T$ voters who only approve $w$. This ensures that these candidates will be assigned their shares before all other candidates.
	\end{itemize}
	
	By construction, $f$ will first assign shares to the candidates in $B$ for the profile $\mathcal A$. Moreover, since we add sufficiently many voters who only approve $y$, this candidate will be processed next. Let $n_x$ and $n_z$ denote the number of voters who report $\{x\}$ and $\{z\}$, respectively. Now, it holds for the payment willingness for $x$ and $z$ that
	\begin{align*}
		\Pi(\mathcal A,z,B\cup \{y\})&=t_2\cdot \pi(\ell_2,k) + t_3 \cdot \pi(\ell_2, k+1)+n_z\\
		&=t_2\cdot \pi(\ell_2,k) + t_3 \cdot \pi(\ell_2, k)-t_3\delta_2+n_z\\
		&= \Pi(\mathcal A,z,B)-t_3\delta_2 \\
		&<\Pi(\mathcal A,z,B)-t_1\pi(\ell_1+1,1)-1\\
		&\leq\Pi(\mathcal A,x,B)\\
		&=\Pi(\mathcal A,x,B\cup \{y\}).
	\end{align*}
	
	Here, the first equality follows from the definition of $\Pi$, the second one uses the definition of $\delta_2$, and the third equation uses that $\Pi(\mathcal A, z, B)=t_2\cdot \pi(\ell_2,k) + t_3 \cdot \pi(\ell_2, k)+n_z$. The next line follows from the definition of $t_3$ and the fifth line because we ensure that $\Pi(\mathcal A,z,B)\leq \Pi(\mathcal A,x,B)+t_1\pi(\ell_1+1,1)+1$. Finally, the last equation holds since there is no voter in $\mathcal{A}$ that approves both $x$ and $y$. Because of this inequality, we now infer that $f$ next assigns a share of $\frac{t_2 \cdot \pi(\ell_2,k) + n_x}{n}$ to $x$.
	
	Now, let $\mathcal A'$ denote the profile where all voters in $N_1$ additionally approve $x$. We will show that the share of $x$ decreases, which contradicts monotonicity. To this end, we note that in $\mathcal A'$, the candidates in $B$ are still the first that will be assigned their shares by $f$. After this, $f$ assigns a share to $z$. In more detail, the payment willingness for $z$ is higher than that for $y$ because
	\begin{align*}
		\Pi(\mathcal A',y, B)&=t_1\cdot \pi(\ell_1+1,1)+t_3\cdot\pi(\ell_2,k)+n_y\\
		&=\Pi(\mathcal A, y, B)-t_1\delta_1\\
		&<\Pi(\mathcal A, y, B)-1\\
		&\leq \Pi(\mathcal A, z, B)\\
		&=\Pi(\mathcal A', z, B).
	\end{align*}
	
	Here, we use the definitions of $\Pi$, $\delta_1$, $t_1$, and that  $\Pi(\mathcal A, z, B)+1\geq  \Pi(\mathcal A, y, B)$ in this order. The final equation holds since all voters who approve $z$ report the same preferences in $\mathcal A$ and~$\mathcal A'$.
	
	Moreover, using similar reasoning, we also infer that the payment willingness for $z$ in $\mathcal A'$ is higher than that for $x$ because
	\begin{align*}
		\Pi(\mathcal A', x, B)&=t_1\cdot \pi(\ell_1+1,1)+t_2\cdot\pi(\ell_2,k)+n_x\\
		&=\Pi(\mathcal A, x, B)+t_1\cdot \pi(\ell_1+1,1)\\
		&<\Pi(\mathcal A, z, B)\\
		&=\Pi(\mathcal A', z, B).
	\end{align*}
	
	Hence, we indeed first choose $z$ among $\{x,y,z\}$.
	Now, assume that $x$ is chosen in the next step; if $y$ is chosen next, the share of $x$ only decreases further. The total payment willingness for $x$ is
	\begin{align*}
		\Pi(\mathcal A',x,B\cup \{z\})&=t_1\cdot \pi(\ell_1+1, 1)+t_2\cdot \pi(\ell_2,k+1)+n_x\\
		&=t_1\cdot \pi(\ell_1+1, 1)+t_2\cdot \pi(\ell_2,k)-t_2\delta_2+n_x\\
		&<t_2\cdot \pi(\ell_2,k)+n_x.
	\end{align*}
	The last inequality follows because we choose $t_2$ such that $t_2\delta_2>t_1 \pi(\ell_1+1,1)$. This shows that the share of $x$ in $\mathcal{A}'$ is less than $\frac{t_2 \cdot \pi(\ell_2,k) + n_x}{n}$, so $f$ fails monotonicity. 
\end{proof}

We next turn to the analysis of the approximation ratio of sequential payment rules to AFS. To facilitate the proof of the lower bound for this result, we will next present necessary conditions on the payment willingness function of every sequential payment rule that satisfies $\alpha$-AFS. More specifically, the next proposition directly relates the payment willingness function $\pi$ of a sequential payment rule with its approximation ratio $\alpha$ to AFS.

\begin{restatable}{proposition}{lowerbound}
	\label{prop:lowerbound}
	Let $f$ denote a sequential payment rule that satisfies $\alpha$-AFS for some $\alpha\in \mathbb{R}$, and let $t\in\mathbb{N}$. It holds for the payment willingness function $\pi$ of $f$ that
	\begin{align*}
		\pi(t,1)&\geq  \frac{1}{\alpha}\qquad\\
		\frac{1}{2}\pi(t,1) + \frac{3}{2} \pi(t,2)&\geq \frac{1}{\alpha}\\
		\frac{1}{2}\pi(t,1) + \pi(t,2) + \frac{3}{2} \pi(t,3)&\geq  \frac{1}{\alpha}\\
		&\dots\\
		\frac{1}{2}\pi(t,1) + \pi(t,2) + \dots + \pi(t,t-1)+\frac{3}{2}\pi(t,t)&\geq  \frac{1}{\alpha}.
	\end{align*}
\end{restatable}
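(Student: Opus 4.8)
The plan is to prove, for each fixed $s\in[t]$, the corresponding inequality by exhibiting a family of approval profiles (indexed by a growing integer $\ell$) in which every voter approves at most $t$ candidates and which contains a group $S$ of $\ell$ voters sharing a common approved candidate $x^*$. For each profile I will compute $\frac{n}{|S|^2}\sum_{i\in S}u_i(f(\mathcal A))$ exactly and show that it converges, as $\ell\to\infty$, to the left-hand side of the $s$-th inequality. Since $f$ satisfies $\alpha$-AFS, the group $S$ obeys $\sum_{i\in S}u_i(f(\mathcal A))\ge \frac{1}{\alpha}\cdot\frac{|S|^2}{n}$, i.e.\ $\frac{n}{|S|^2}\sum_{i\in S}u_i\ge \tfrac1\alpha$; passing to the limit then yields the claim. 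The base case $s=1$ is immediate: let $S$ consist of $\ell$ voters each approving $x^*$ together with $t-1$ private dummy candidates. Then $x^*$ is processed first, is assigned $\pi(t,1)$, and the dummy contributions are $O(1/\ell)$ in the limit, so AFS gives $\pi(t,1)\ge 1/\alpha$.

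For $s\ge 2$ the construction is a \emph{chain of private candidates}. Each voter $i\in S$ approves, besides $x^*$, a sequence of private candidates $x_i^1,\dots,x_i^{s-1}$ (padded out to ballot size exactly $t$ with further dummies, so that the payment willingnesses are $\pi(t,\cdot)$). Each $x_i^j$ is approved, among members of $S$, only by voter $i$, but is equipped with additional single-candidate padding voters so that for every $i$ the rule processes voter $i$'s approved candidates in the order $x_i^1,\dots,x_i^{s-1},x^*$. Consequently voter $i$ pays $\pi(t,j)$ to $x_i^j$ and $\pi(t,s)$ to $x^*$; since $x^*$ is approved by all $\ell$ voters and processed only after every private candidate, it receives share $\tfrac1n\,\ell\,\pi(t,s)$ and thus contributes $\tfrac1n\,\ell^2\pi(t,s)$ to the total utility of $S$.

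The number $m_i^j$ of padding voters on $x_i^j$ is chosen minimal subject to letting $x_i^j$ win its round. When $x_i^j$ is processed, $x^*$ has total willingness $(i-1)\pi(t,j+1)+(\ell-i+1)\pi(t,j)$ (voters $1,\dots,i-1$ have already made their $j$-th payment, the rest have not), so I set $m_i^j$ just above $(i-1)\pi(t,j+1)+(\ell-i)\pi(t,j)$; within a layer the candidates are then ordered by decreasing padding. Summing the padding contributions to the utility of $S$ and dividing by $\ell^2$ gives $\tfrac12\bigl(\pi(t,1)+2\sum_{j=2}^{s-1}\pi(t,j)+\pi(t,s)\bigr)$, because each layer contributes a triangular sum $\tfrac{\ell(\ell-1)}2(\pi(t,j)+\pi(t,j+1))$ and these telescope. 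Adding the direct $\pi(t,s)$ contribution from $x^*$ yields exactly $\tfrac12\pi(t,1)+\pi(t,2)+\dots+\pi(t,s-1)+\tfrac32\pi(t,s)$, as required, with the $O(\ell)$ lower-order terms and the dummy shares vanishing in the limit.

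The main obstacle is verifying that $f$ really processes the candidates in the claimed order. This is delicate because the paddings $m_i^j$ are only linear in $\ell$ while the shared candidate $x^*$ accumulates willingness of the same order, so the margins are thin and several candidates are near-tied at each step. The facts that make it work are that the paddings are strictly decreasing in the layer index $j$ (which both keeps each voter's private candidates in the order $x_i^1,\dots,x_i^{s-1}$ and keeps them all ahead of $x^*$ until that voter has made $s-1$ payments) and, within a layer, strictly decreasing in the voter index $i$; residual ties are resolved by perturbing the padding values by a tiny increment and by the rule's lexicographic tie-breaking. I would also check the degenerate case $\pi(t,s)=0$, where $x^*$ carries zero willingness but the argument is unaffected since its share is then $0$. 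Carrying out this round-by-round ordering argument is the technical heart of the proof.
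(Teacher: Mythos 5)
Your proposal is correct and follows essentially the same route as the paper's own proof: for each prefix length you build a profile with $\ell$ voters sharing $x^*$, each holding a private chain of candidates boosted by padding voters calibrated to $x^*$'s running payment willingness, so that the chain is processed just ahead of $x^*$; the utility of the group then converges to the left-hand side of the corresponding inequality as $\ell\to\infty$, and $\alpha$-AFS forces the bound. The only differences are technical conveniences the paper adopts (fractional profiles with an explicit rational $\epsilon$-perturbation $K^\epsilon$ to handle the rounding and tie-breaking issues you flag as the ``technical heart,'' and an upper bound on the utility rather than an exact value), none of which changes the substance of the argument.
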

\begin{proof}
	Let $f$ denote a sequential payment rule that satisfies $\alpha$-AFS for some finite $\alpha$, let $\pi$ be its payment willingness function, and let $t\in\mathbb{N}$ denote an arbitrary integer. Throughout this proof, we will interpret $t$ as the maximal feasible ballot size to ensure that we can use this proposition also in the proof of \Cref{thm:tightbound}. We moreover define $\pi_i=\pi(t,i)$ for all $i\in [t]$ to simplify notation and note that $\pi_1\geq \pi_2\geq \dots \geq \pi_t$ by the definition of payment willingness functions.
	
	In this proof, it will be easier to discuss fractional profiles as this allows us to avoid complicated computations. Formally, a {fractional profile} $W$ is a function from the set of approval ballots ${2^C\setminus \{\emptyset\}}$ to $\mathbb{Q}_{\geq 0}$, with the interpretation that $W(A)$ indicates how many voters report the ballot $A$ in the profile $W$. That is, rather than having $x\in\mathbb{N}_0$ voters reporting an approval ballot $A$, it is now possible that $x$ is a rational number.
	The size of a fractional profile $W$ is defined by $|W|=\sum_{A\in 2^C\setminus \{\emptyset\}} W(A)$ and we will assume that $|W|>0$ for all fractional profiles. It is straightforward to generalize $f$ to fractional profiles: we can extend the definition of the total payment willingness $\Pi$ to fractional profiles by letting $\Pi(W,x,Y)=\sum_{A\in 2^C\setminus \{\emptyset\}\colon x\in A} W(A) \cdot \pi(|A|, |A\cap Y|+1)$ for every fractional profile $W$, candidate $x$, and set of candidates $Y\subseteq C\setminus \{x\}$. Then, sequential payment rules work on fractional profiles as described in \Cref{alg:SPrules}. From the definition of sequential payments rules, it follows that $f(W)=f(\ell W)$ for every $\ell\in\mathbb{N}$, where the fractional profile $W'=\ell W$ is defined by $W'(A)=\ell\cdot W(A)$ for all approval ballots $A$. 
	
	Next, we extend the definition of $\alpha$-AFS to fractional profiles as follows: a distribution $p$ satisfies $\alpha$-AFS for a fractional profile $W$ if
	\[
	\frac{\alpha}{|S|} \cdot \sum_{A\in 2^C\setminus \{\emptyset\}} \bigg( S(A) \cdot \sum_{y\in A} p(y) \bigg) \geq  \frac{|S|}{|W|}
	\]
	for all fractional (sub)profiles $S$ and candidates $x\in C$ such that $S(A)\leq W(A)$ for all $A\in\mathcal{A}$ and $S(A)>0$ implies that $x \in A$.
	Intuitively, $S$ can be seen here a set of (fractional) voters such that $x\in A_i$ for all voters $i\in S$.
	It holds that $f$ satisfies $\alpha$-AFS for regular approval profiles if and only if it satisfies $\alpha$-AFS for fractional profiles. In more detail, if $f$ satisfies $\alpha$-AFS for fractional approval profiles if it satisfies this property also for regular aproval profiles because every regular approval profile corresponds to a fractional profile. On the other hand, if $f$ fails $\alpha$-AFS for a fractional profile $W$, there is a fractional subprofile $S$ and a candidate $x\in C$ such that $S(A)\leq W(A)$ for all $A\in 2^C\setminus \{\emptyset\}$, $S(A)>0$ implies $x\in A$, and $\frac{\alpha}{|S|} \cdot \sum_{A\in 2^C\setminus \{\emptyset\}} \left( S(A) \cdot \sum_{y\in A} f(W, y) \right) < \frac{|S|}{|W|}$. Since $f(\ell W)=f(W)$ for all $\ell\in\mathbb{N}$, it follows that $f$ also fails $\alpha$-AFS for every profile $\ell W$, which is witnessed by the fractional subprofile $\ell S$. However, because both $S$ and $W$ are fractional profiles and thus functions of the type $2^C\setminus \{\emptyset\}\rightarrow\mathbb{Q}$, there is an integer $\ell^*$ such that $\ell^*W(A)\in\mathbb{N}_0$ and $\ell^* S(A)\in\mathbb{N}_0$ for all approval ballots $A$. Hence, $\ell^*W$ corresponds to an approval profile and $\ell S^*$ to a group of voters in this approval profile, so $f$ fails $\alpha$-AFS for approval profiles, too. 
	
	We will next derive our upper bounds for $\frac{1}{\alpha}$ by investigating several fractional profiles. To this end, we fix some integer $\ell\geq 2$ and use $1+\ell(t-1)$ candidates $\{x^*\}\cup \{y_j^i\colon j\in [t-1], i\in [\ell]\}$. We then define the ballots $B^1,\dots, B^\ell$ by $B^i=\{x^*, y_1^i,\dots, y^i_{t-1}\}$ for all $i\in [\ell]$ and note that $|B^i|=t$ for all $i\in [\ell]$. 
	First, we consider the fractional profile $W^0$ defined by $W^0(B^i)=1$ for all $i\in [\ell]$ and $W^0(A)=0$ for all ballots $A\not\in \{B^1,\dots, B^\ell\}$. The total payment willingness for the candidate $x^*$ in the first round is $\ell \pi_1$ and the total payment willingness for all other candidates is $\pi_1$. Hence, $f$ chooses $x^*$ in the first round and assigns it a share of $\frac{\ell \pi_1}{|W^0|}=\pi_1$. 
	Moreover, since there is no overlap between the approval ballots $B^1,\dots, B^\ell$ except for $x^*$ and voters only spend their budget on their approved candidates, it holds for the distribution $p=f(W^0)$ that $\sum_{x\in B^i\setminus \{x^*\}} p(x)=\frac{1-\pi_1}{|W^0|}=\frac{1-\pi_1}{\ell}$ for all $i\in [\ell]$. Since every voter in $W^0$ approves $x^*$, $\alpha$-AFS requires that 
	\begin{align*}
		\sum_{A\in 2^C\setminus \{\emptyset\}} W^0(A) \sum_{x\in A} p(x)=\sum_{i=1}^\ell \sum_{x\in B^i} p(x)=\ell \pi_1+1-\pi_1\geq \frac{\ell}{\alpha}.
	\end{align*}
	
	Put differently, this means that $\pi_1 +\frac{1-\pi_1}{\ell}\geq  \frac{1}{\alpha}$. Now, if $\pi_1< \frac{1}{\alpha}$, there is an integer $\ell$ such that $\pi_1+\frac{1-\pi_1}{\ell}<\frac{1}{\alpha}$, which implies that $\alpha$-AFS is violated. Since $f$ satisfies this property by assumption, we derive that $\pi_1\geq  \frac{1}{\alpha}$. 
	
	Next, we consider $t-1$ more profiles $W^{z}_\epsilon$ with $z\in [t-1]$ and $\epsilon\in (0,1)$, where the rough idea is that each candidate in $\{y_j^i\colon j\in [z], i\in [\ell]\}$ is chosen before $x^*$ with a total payment willingness that is marginally higher than that of $x^*$. 
	To this end, we let $K(u,v)=(\ell-v)\cdot \pi_u + v\cdot \pi_{u+1}$ for $u\in [t-1]$ and $v\in [\ell]\cup \{0\}$ denote the payment willingness for $x^*$ when $v$ of the $\ell$ voters reporting $B^1,\dots, B^\ell$ have payed for $u$ candidates and the remaining $\ell-v$ voters have payed for $u-1$ candidates.
	We moreover note that $K(u,\ell)=\ell \pi_{u+1}=K(u+1,0)$ for all $u\in [t-1]$ and $K(u,v)\geq K(u,v+1)$ for all $u\in [t-1]$ and $v\in [\ell-1]\cup \{0\}$ since $\pi_1\geq \pi_2\geq \dots \geq \pi_t$. 
	Next, we define $K^\epsilon(u,v)$ for all $u\in [t-1]$, $v\in [\ell]\cup \{0\}$ as a function such that 
	\begin{enumerate}[label=\arabic*)]
		\item $K^\epsilon(u,v)\in \mathbb{Q}$,
		\item $K(u,v)< K^\epsilon(u,v)\leq K(u,v)+\epsilon$, and
		\item $K^\epsilon(u,v)> K^\epsilon(u',v')$ for all $u',v'$ with $u'>u$, or $u'=u$ and $v'>v$.
	\end{enumerate}
	
	Less formally, $K^\epsilon$ is an approximation of $K$ which takes care of two issues: firstly, $K(u,v)$ may be irrational while $K^\epsilon(u,v)$ is rational. Secondly, $K$ may assign the same score to multiple combinations of $u$ and $v$, whereas $K^\epsilon$ breaks these ties. We furthermore note that $K^\epsilon$ exists for every $\epsilon\in(0,1]$ due to the density of $\mathbb{Q}$ in $\mathbb{R}$.
	
	We now derive the profile $W^z_\epsilon$ from $W^0$ by adding $K^\epsilon(u,v-1)$ voters for each candidate $y_u^v$ with $u\in [z]$  and $v\in [\ell]$ that uniquely approve $y_u^v$. These voters ensure that $f$ processes the candidates in the order $y_1^1, \dots, y_1^\ell, y_2^1,\dots, y_2^\ell, \dots, y_z^1,\dots, y_z^\ell, x^*$ for the profile $W^z_\epsilon$. 
	To see this, fix two values $u\in [z]$ and $v\in [\ell]$ and assume that all candidates in $X=\{y_j^i\colon i\in [\ell], j\in [u-1]\}\cup \{y_u^i\colon i\in [v-1]\}$ have been assigned their shares. 
	In this case, the total payment willingness for $x^*$ is
	\begin{align*}
		\Pi(W^z_\epsilon, x^*,X) = \sum_{i\in[\ell]} \pi(t, |X\cap B^i|+1) = (\ell-v+1) \pi_{u} + (v-1) \pi_{u+1} = K(u,v-1).
	\end{align*}

	By contrast, the total payment willingness for a candidate $y_j^i\not \in X$ with $j\leq z$ is $\Pi(W^z_\epsilon, y_j^i,X) = K^\epsilon(j,i-1)+\pi(t, |B^i\cap X|+1)$, which is $K^\epsilon(j,i-1)+\pi_u$ if $i\geq v$ or $K^\epsilon(j,i-1)+\pi_{u+1}$ if $i\leq v-1$. Now, by the definition of $K^\epsilon$ and the fact that $\pi_u\geq \pi_{u+1}$, we have that $\Pi(W^z_\epsilon, y_u^v,X)>\Pi(W^z_\epsilon, y, X)$ for all candidates $y\not\in X\cup \{x^*, y_u^v\}$. Moreover, since $K^\epsilon(u,v-1)>K(u,v-1)$, we also have that $\Pi(W^z_\epsilon, y_u^v,X)>\Pi(W^z_\epsilon,x^*,X)$, which means that $y_u^v$ is the next candidate that will be assigned its share. Iteratively applying this reasoning results in the given sequence. 
	
	We next consider again the sets of voters $S$ that report $B^1, \dots, B^\ell$, i.e., $S$ is the subprofile of $W^z_\epsilon$ such that $S(B^i)=1$ for all $i\in [\ell]$ and $S(A)=0$ for all other ballots. In particular, we will next compute the total utility of the set of voters in $S$. To this end, let $p=f(W^z_\epsilon)$ be the distribution returned by $f$ for $W^z_\epsilon$. Since each candidate $y^i_j$ is approved by a single voter in $S$, it holds that
	\begin{align*} 
		\sum_{A\in 2^C\setminus \{\emptyset\}} S(A) \sum_{x\in A} p(x)=\ell \cdot p(x^*) + \sum_{j\in [t-1]} \sum_{i\in [\ell]}  p(y_j^i).
	\end{align*}
	
	First, we note that, by our previous analysis, it is easy to see that $x^*$ is chosen at a total payment willingness of $\Pi(W^z_\epsilon, x^*, \{y_j^i\colon j\in [z], i\in [\ell]\})=\ell \pi_{z+1}$. Next, every candidate $y_j^i$ with $j\leq z$ has a total payment willingness of $K^\epsilon(j,i-1)+\pi_j$ when it is assigned its share. Since $K^\epsilon(j,i-1)\leq K(j,i-1)+\epsilon$, we can compute that 
	\begin{align*}
		\sum_{j\in [z]} \sum_{i\in[\ell]} K^\epsilon(j,i-1) 
		&\leq \sum_{j\in [z]} \sum_{i\in[\ell]} K(j,i-1) +\epsilon\\
		&=z\cdot\ell\cdot \epsilon + \sum_{j\in [z]} \sum_{i\in[\ell]} \left((\ell-(i-1)) \pi_j +  (i-1) \pi_{j+1})\right)\\
		&=z\cdot\ell\cdot \epsilon + \sum_{i\in [\ell]} (\ell -(i-1)) \pi_1 + \sum_{i\in [\ell]} (i-1) \pi_{z+1} \\
		&\quad + \sum_{j=2}^{z} \sum_{i\in [\ell]} \left((\ell-(i-1)) \pi_j + (i-1)\pi_j\right)\\
		&=z\cdot\ell\cdot \epsilon + \pi_1 \frac{\ell(\ell+1)}{2} + \pi_{z+1} \frac{(\ell-1)\ell}{2} + \sum_{j=2}^z \ell^2 \pi_j
	\end{align*}
	
	Moreover, it holds that $\sum_{j\in [z]} \sum_{i\in[\ell]} \pi_j=\ell \sum_{j\in [z]} \pi_j$, and that $p(y_j^i)=\frac{\pi_{j+1}}{|W^z_\epsilon|}$ for all $j>z$ and $i\in [\ell]$ because these candidates are chosen after $y_{j'}^i$ for $j'\in [z]$ and $x^*$. Hence, we conclude that 
	\allowdisplaybreaks
	\begin{align*}
		\sum_{A\in 2^C\setminus \{\emptyset\}} S(A)\sum_{x\in A} p(x)&=\ell \cdot p(x^*) + \sum_{j\in [t-1]} \sum_{i\in [\ell]}  p(y_j^i) \\
		&\quad\leq \frac{1}{|W^z_\epsilon|}\bigg(\ell^2 \pi_{z+1} + \sum_{j\in [z]} \sum_{i\in [\ell]} (K^\epsilon(j,i-1)+\pi_j) +  \sum_{j=z+2}^{t} \sum_{i\in [\ell]} \pi_j\bigg)\\
		&\quad\leq\frac{1}{|W^z_\epsilon|} \bigg(\ell^2 \pi_{z+1} + z\cdot\ell\cdot \epsilon + \pi_1 \frac{\ell(\ell+1)}{2} + \pi_{z+1} \frac{(\ell-1)\ell}{2} \\
		&\quad\quad+ \sum_{j=2}^z \ell^2 \pi_j + \sum_{j=1}^z \ell \pi_j + \ell (1-\sum_{j=1}^{z+1} \pi_j) \bigg)\\
		&\quad\leq \frac{\ell^2}{|W^z_\epsilon|}\bigg(\pi_{z+1} + \frac{z\epsilon}{\ell} + 	\frac{\pi_1}{2} + \frac{\pi_1}{2\ell} +\frac{\pi_{z+1}}{2} + 
		 \sum_{j=2}^z \pi_j +  \frac{1}{\ell}\bigg).\\
		&\quad= \frac{\ell^2}{|W^z_\epsilon|}\bigg(\frac{\pi_1}{2} + \frac{\pi_1}{2\ell} + \sum_{j=2}^z \pi_j + \frac{3\pi_{z+1}}{2} + \frac{1}{\ell}+\frac{z\epsilon}{\ell}\bigg).
	\end{align*}
	
	Furthermore, $\alpha$-AFS requires that $\sum_{A\in 2^C\setminus \{\emptyset\}} S(A) \sum_{x\in A} p(x)\geq \frac{1}{\alpha} \frac{\ell^2}{|W^z_\epsilon|}$. Hence, we conclude that 
	\begin{align*}
		\frac{\pi_1}{2} + \frac{\pi_1}{2\ell} + \sum_{j=2}^z \pi_j + \frac{3\pi_{z+1}}{2} + \frac{1}{\ell}+\frac{z\epsilon}{\ell}\geq \frac{1}{\alpha}.
	\end{align*}
	
	Now, if $\frac{\pi_1}{2} + \sum_{j=2}^z \pi_j + \frac{3\pi_{z+1}}{2} <\frac{1}{\alpha}$, we can find a sufficiently large $\ell$ (and sufficiently small $\epsilon$) such that the above inequality is violated. This contradicts $\alpha$-AFS, so $\frac{\pi_1}{2} + \sum_{j=2}^z \pi_j + \frac{3\pi_{z+1}}{2} \geq \frac{1}{\alpha}$. Since this holds for every $z\in [t-1]$, the proposition follows. 
\end{proof}

We note that \Cref{prop:lowerbound} is a powerful tool to infer lower bounds on the AFS approximation ratios of sequential payment rules. For instance, if $\pi(t,1)=\frac{2}{3}$ for some $t\in\mathbb{N}$, the first inequality of this proposition implies that the corresponding sequential payment rule only satisfies $\alpha$-AFS for $\alpha\geq\frac{3}{2}$.
Moreover, we will next use this proposition to fully determine the AFS approximation ratio of all $\gamma$-multiplicative sequential payment (\mps{\gamma}) rules.
To this end, we recall that, for every $\gamma\in [0,1]$, \mps{\gamma} is the sequential payment rule defined by the payment willingness function $\pi$ that satisfies 
\emph{(i)} $\sum_{y\in [x]} \pi(x,y)=1$ for all $x\in\mathbb{N}$ and 
\emph{(ii)} $\pi(x,y+1)=\gamma\cdot \pi(x,y)$ for all $x\in\mathbb{N}$ and $y\in [x-1]$. 
When $\gamma>0$, this means that $\pi(x,y)=\frac{\gamma^y}{\sum_{i\in [x]} \gamma^i}$ for all $x\in\mathbb{N}$ and $y\in [x]$. On the other hand, when $\gamma=0$, then $\pi(x,1)=1$ for all $x\in\mathbb{N}$, so \mps{0} is equivalent to \map.
We note that the following proposition formalizes the claims made in \Cref{rem:SPR}. 

\begin{restatable}{proposition}{MPS}
	\label{thm:MPS}
	\mps{\gamma} satisfies $\frac{2}{1+\gamma}$-AFS if $0\leq \gamma\leq \frac{1}{3}$ and $\frac{1}{1-\gamma}$-AFS if $\frac{1}{3}\leq \gamma<1$. Moreover, these bounds are tight.
\end{restatable}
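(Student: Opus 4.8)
The plan is to establish the two inequalities separately. The \emph{tightness} claim (that no smaller $\alpha$ works) will follow almost mechanically from \Cref{prop:lowerbound}, whereas the \emph{positive} claim (that \mps{\gamma} attains the stated ratio) will generalize the $2$-AFS argument used for \map in \Cref{thm:MPmain}. The two regimes and the threshold $\gamma=\frac13$ should emerge as the point where a single scalar expression switches its minimizer.

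For tightness, I would substitute the explicit payment willingness function $\pi(t,y)=\frac{(1-\gamma)\gamma^{y-1}}{1-\gamma^{t}}$ into the family of inequalities of \Cref{prop:lowerbound}. Since every left-hand side is a positive combination of the values $\pi(t,1),\dots,\pi(t,t)$, and each of these is decreasing in $t$ toward $(1-\gamma)\gamma^{y-1}$, it suffices to evaluate the constraints in the limit $t\to\infty$. A direct summation then collapses the $z$-th constraint to
\[
\tfrac12\pi(t,1)+\sum_{j=2}^{z}\pi(t,j)+\tfrac32\pi(t,z+1)\;\longrightarrow\;\frac{1+\gamma}{2}+\frac{\gamma^{z}(1-3\gamma)}{2},
\]
and the very first constraint $\pi(t,1)\ge\frac1\alpha$ is exactly the $z=0$ instance $1-\gamma$ of the same expression. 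The sign of $1-3\gamma$ now determines the minimizer: for $\gamma\le\frac13$ the correction $\gamma^{z}(1-3\gamma)$ is positive and vanishes as $z\to\infty$, so the binding value is $\frac{1+\gamma}{2}$ and $\alpha\ge\frac{2}{1+\gamma}$; for $\gamma\ge\frac13$ it is non-positive and increasing in $z$, so the binding value is attained at $z=0$, giving $1-\gamma$ and $\alpha\ge\frac{1}{1-\gamma}$.

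For the positive direction I would fix a profile $\mathcal{A}$, a group $S$ with a common candidate $x^{*}=x_{k}$ in the processing order $x_1,\dots,x_m$, and write $X_j=\{x_1,\dots,x_{j-1}\}$. Greediness gives $p(x_j)=\frac1n\Pi(\mathcal{A},x_j,X_j)\ge\frac1n\Pi(\mathcal{A},x^{*},X_j)$ for all $j\le k$. The key reorganization is that the $S$-voters who pay in round $j$ are precisely the $s_{x_j}=|\{i\in S\colon x_j\in A_i\}|$ voters advancing there; summing each voter's per-round shares and regrouping by round yields
\[
\sum_{i\in S}u_i(p)\ \ge\ \frac1n\sum_{j=1}^{k}s_{x_j}\,\Pi(\mathcal{A},x^{*},X_j).
\]
I would then use the defining feature of the geometric rule---after $c$ payments a voter's next payment willingness is at least $(1-\gamma)$ times his remaining budget, which itself decays by a factor $\gamma$ per payment---to bound $\Pi(\mathcal{A},x^{*},X_j)\ge(1-\gamma)\rho(j)$, where $\rho(j)=\sum_{i\in S}\gamma^{c_i(j)}$ is the remaining willingness mass of $S$ and $c_i(j)=|A_i\cap X_j|$. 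This reduces the task to lower bounding $\sum_{j\le k}s_{x_j}\rho(j)$ subject to $\rho$ decaying geometrically and every $S$-voter advancing at least once by round $k$.

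The main obstacle is the last step: the naive telescoping, which replaces $s_{x_j}$ by the advancing mass $\frac{\rho(j)-\rho(j+1)}{1-\gamma}$ and the discrete sum by an integral, only yields $\frac{2}{1-\gamma^{2}}$-AFS, strictly weaker than both targets once $\gamma>0$. Recovering the exact constant requires tracking the \emph{count} $s_{x_j}$ of advancing voters separately from the \emph{mass} they contribute, since a voter who has already advanced many times still collects the full share $p(x_j)$ while adding almost nothing to $\rho$; this is exactly what makes the worst case the canonical instances with disjoint private candidates that underlie \Cref{prop:lowerbound}. I therefore expect the cleanest completion to argue that such instances are worst-case and to evaluate the ratio there, where the same expression $\frac{1+\gamma}{2}+\frac{\gamma^{z}(1-3\gamma)}{2}$ reappears and its minimum over $z\ge0$ matches the lower bound in both regimes. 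The finite-ballot correction $\frac{1}{1-\gamma^{t}}$ only inflates every payment, so it can be dropped without changing the extremal configuration.
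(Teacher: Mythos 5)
Your tightness argument is correct and follows essentially the paper's route: both extract the lower bound from \Cref{prop:lowerbound} by substituting the geometric payment willingness function, and your unified expression $\frac{1+\gamma}{2}+\frac{\gamma^{z}(1-3\gamma)}{2}$ correctly identifies the binding constraint as $z=0$ when $\gamma\geq\frac{1}{3}$ and as $z\to\infty$ when $\gamma\leq\frac{1}{3}$. The paper does the same thing less uniformly, invoking the first inequality of \Cref{prop:lowerbound} in one regime and the last in the other, together with the observation that $\pi(t,1)\geq 1-\gamma$ and $\pi(t,t)\to 0$; your presentation of this half is, if anything, cleaner.

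The positive direction, however, contains a genuine gap that you yourself flag but do not close. After reducing the problem to lower-bounding $\sum_{j\leq k}s_{x_j}\,\Pi(\mathcal{A},x^{*},X_j)$, you note that the naive telescoping loses a constant factor and then propose to ``argue that the canonical instances with disjoint private candidates are worst-case and evaluate the ratio there.'' That reduction is precisely the missing content, and it is where all of the work in the paper's proof lies. The paper closes it with a concrete exchange argument: for a fixed total number $T_j$ of approvals that $S$ gives to $\{x_1,\dots,x_{j-1}\}$, the quantity $\sum_{i\in S}\pi(t,|A_i\cap X_j|+1)$ is minimized when the counts $|A_i\cap X_j|$ are as balanced as possible (shifting one unit of approval from a lightly loaded voter to a heavily loaded one strictly decreases the sum, by the geometric decay of $\pi$), which yields the closed form $K\bigl(\lfloor T_j/s\rfloor+1,\,T_j\bmod s\bigr)$ with $K(u,v)=(s-v)\pi_u+v\pi_{u+1}$. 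The monotonicity of $K$ then converts the round-indexed sum into a sum over $T=0,\dots,T_k$ that can be evaluated exactly, and the case split on the sign of $1-\frac{3}{2}(1-\gamma)$ produces the two regimes. Without this smoothing-plus-summation step (or an equivalent), your argument only certifies a strictly weaker approximation ratio for $\gamma>0$, so the upper-bound half of the statement is not established.
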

\begin{proof}
	Since the $0$-multiplicative sequential spending rule is the maximum payment rule, the theorem follows for $\gamma=0$ from \Cref{thm:MPmain}. Hence, we assume that $\gamma\in (0,1)$ and we will subsequently prove an upper and lower bound on the approximation ratio of \mps{\gamma}.
	As usual, we will denote by $\pi$ the payment willingness function of \mps{\gamma}, i.e., $\pi(x,y)=\frac{\gamma^y}{\sum_{i=1}^x \gamma^i}$ for all $x\in \mathbb{N}$, $y\in [x]$. We start by showing the lower bound on the AFS approximation ratio of $f$.
	\medskip

	\textbf{Claim 1: Lower Bound}
	
	To prove our lower bound, we use a case distinction with respect to $\gamma$. First, consider the case that $\frac{1}{3}\leq \gamma<1$. In this case, we use the first inequality of \Cref{prop:lowerbound}, which shows that, if $f$ satisfies $\alpha$-AFS for some $\alpha\in\mathbb{R}$, then $\pi(t,1)\geq \frac{1}{\alpha}$ for all $t\in\mathbb{N}$. Since $\pi(t,1)=\frac{\gamma}{\sum_{i\in [t]}\gamma^i}$, this means that, for all $t\in\mathbb{N}$, it holds that 
	$\alpha\geq \sum_{i=0}^{t-1} \gamma^i$.
	The right hand side of this inequality is the geometric series which converges to $\sum_{i\in\mathbb{N}_0} \gamma^i =\frac{1}{1-\gamma}$, so it follows that $\alpha$ cannot be smaller than $\frac{1}{1-\gamma}$. In more detail, for every $\epsilon>0$, there is some $t\in\mathbb{N}$ such that $\frac{1}{1-\gamma}-\epsilon< \sum_{i=0}^{t-1} \gamma^i$ and \Cref{prop:lowerbound} would thus be violated if \mps{\gamma} satisfies $\frac{1}{1-\gamma}-\epsilon$-AFS.
	
	As the second case, suppose that $0<\gamma<\frac{1}{3}$. We consider the last inequality given by \Cref{prop:lowerbound}, which shows that, for all $t\in\mathbb{N}$, $\frac{1}{2} \pi(t,1) + \frac{3}{2} \pi(t,t) + \sum_{i=2}^{t-1} \pi(t,i)\geq \frac{1}{\alpha}$ if $f$ satisfies $\alpha$-AFS. Since $\sum_{i\in [t]} \pi(t,i)=1$, we can rewrite this inequality as 
		$1-\frac{1}{2}\pi(t,1)+\frac{1}{2}\pi(t,t)\geq \frac{1}{\alpha}$ or, equivalently,
		$\alpha\geq \frac{2}{2-\pi(t,1)+\pi(t,t)}$.	
	Since $\pi(t,1)=\frac{\gamma}{\sum_{i\in [t]}\gamma^i}\geq \frac{\gamma}{\sum_{i\in \mathbb{N}}\gamma^i}=1-\gamma$, this means that $\alpha\geq \frac{2}{1+\gamma+\pi(t,t)}$. 
	Finally, we note that $\pi(t,t)=\frac{\gamma^t}{\sum_{i\in [t]} \gamma^i}\leq \gamma^{t-1}$, which converges to $0$ as $t$ goes to infinity. Hence, for every $\epsilon>0$, there is a $t\in\mathbb{N}$ such that $\frac{2}{1+\gamma}-\epsilon<\frac{2}{1+\gamma+\pi(t,t)}$, so $f$ fails $\alpha$-AFS for every $\alpha<\frac{2}{1+\gamma}$.\medskip		 
	
	\textbf{Claim 2: Upper Bound}
	
	To show our upper bound on the AFS approximation ratio of $f$, let $\mathcal{A}$ denote an approval profile, let $S\subseteq N$ be a group of voters, $x^*$ a candidate such that $x^*\in \bigcap_{i\in S} A_i$, and $p$ the distribution chosen by \mps{\gamma} for $\mathcal A$. We moreover define $s=|S|$ and $n=|N|$ for a simple notation, and let $t=\max_{i\in N} |A_i|$ denote the maximal ballot size in $\mathcal{A}$. Furthermore, let $x_1,\dots, x_m$ denote the sequence in which the \mps{\gamma} rule processes the candidates for $\mathcal{A}$ and let $k$ denote the index such that $x_k=x^*$. It holds that
	\begin{align*}®
		\sum_{i\in S} u_i(p)=\frac{1}{n}\sum_{j\in [m]} |N_{x_j}(\mathcal{A})\cap S|\cdot \Pi(\mathcal{A}, x_j, \{x_1,\dots, x_{j-1}\}). 
	\end{align*}
	
	By the definition of sequential payment rules, we have for every candidate $x_j$ with $j\leq k$ that 
	\begin{align*}
		\Pi(\mathcal{A}, x_j, \{x_1,\dots, x_{j-1}\})&\geq \Pi(\mathcal{A}, x^*,  \{x_1,\dots, x_{j-1}\})\\
		&\geq \sum_{i\in S} \pi(|A_i|, |A_i\cap \{x_1,\dots, x_{j-1}\}|+1). 
	\end{align*}
	
	Moreover, it holds that $\pi(|A_i|, |A_i\cap \{x_1,\dots, x_{j-1}\}|+1)\geq \pi(t,  |A_i\cap \{x_1,\dots, x_{j-1}\}|+1)$ because $|A_i|\leq t$ and 
		$\pi(x,y)=\frac{\gamma^{y}}{\sum_{i=1}^{x} \gamma^{i}}> \frac{\gamma^{y}}{\sum_{i=1}^{x+1} \gamma^{i}}=\pi(x+1,y)$
	for all $x<t$, $y\in [x]$. Consequently, we conclude that 
	\begin{align*}
		\sum_{i\in S} &u_i(p)\geq \frac{1}{n} \sum_{j\in [k]} \left(|N_{x_j}(\mathcal{A})\cap S| \cdot \sum_{i\in S} \pi(t, |A_i\cap \{x_1,\dots, x_{j-1}\}|+1)\right). 
	\end{align*}

	For the next step, we define $\pi_i=\pi(t,i)$ for all $i\in [t]$ and $T_j=\sum_{\ell=1}^{j-1} |N_{x_{j-1}}(\mathcal{A})\cap S|$ as the total number of approvals that the voters in $S$ give to the candidates in $\{x_1,\dots, x_{j-1}\}$. We will now derive a lower bound for $\sum_{i\in S} \pi(t, |A_i\cap \{x_1,\dots, x_{j-1}\}|+1)$ depending on $T_j$. To this end, let $w^j$ denote the vector defined by $w^j_\ell=|\{i\in S\colon |A_i\cap \{x_1,\dots,x_{j-1}\}|+1=\ell\}|$ for all $\ell\in [t]$, i.e., $w^j_\ell$ states the number of voters in $S$ that approve $\ell-1$ of the candidates in $\{x_1,\dots, x_{j-1}\}$. It holds by definition that $\sum_{\ell\in [t]} w^j_\ell\cdot (\ell-1)=T_j$ and $\sum_{\ell\in [t]} w^j_\ell=s$. 
	Moreover, we have that \[\sum_{i\in S} \pi(t, |A_i\cap \{x_1,\dots, x_{j-1}\}|+1) = \sum_{\ell\in [t]} w^j_\ell \pi_{\ell}.\] 
	Now, assume that there are two indices $\ell_1$ and $\ell_2$ such that $\ell_2-\ell_1\geq 2$, $w^j_{\ell_1}>0$, and $w^j_{\ell_2}>0$. We consider next the vector $\hat w$ with $\hat w_{z}=w^j_z$ for all $z\in [t]\setminus \{\ell_1, \ell_1+1, \ell_2-1, \ell_2\}$, $\hat w_{\ell_1}=w^j_{\ell_1}-1$, $\hat w_{\ell_1+1}=w^j_{\ell_1+1}+1$, $\hat w_{\ell_2-1}=w^j_{\ell_2-1}+1$, and $\hat w_{\ell_2}=w^j_{\ell_2}-1$. It holds that
	\begin{align*}
		\sum_{\ell\in [t]} w^j_\ell \pi_{\ell}-\sum_{\ell\in [t]} \hat w_\ell \pi_{\ell}&= (\pi_{\ell_1}-\pi_{\ell_1+1})+(\pi_{\ell_2}-\pi_{\ell_2-1})\\
		&=(1-\gamma) \pi_{\ell_1} -(1-\gamma)\pi_{\ell_2-1}\\
		&>0. 
	\end{align*}
	
	The second equation uses that $\gamma \pi_\ell=\pi_{\ell+1}$ for all $\ell\in [m-1]$ and the final inequality that $\pi_{\ell_1}>\pi_{\ell_2-1}$ (which holds since $\ell_1<\ell_2-1$). Moreover, our construction ensures that $\sum_{\ell\in [t]} \hat w_\ell\cdot (\ell-1)=T_j$ and $\sum_{\ell\in [t]} \hat w_\ell=s$. Now, we can repeat this process until we arrive at a vector $w^*$ such that there are no indices $\ell_1$ and $\ell_2$ with $\ell_2-\ell_1\geq 2$, $ w^*_{\ell_1}>0$, and $w^*_{\ell_2}>0$. In particular, for this vector, it still holds that $\sum_{\ell\in [t]} w^*_\ell\cdot (\ell-1)=T_j$ and $\sum_{\ell\in [t]} w^*_\ell=s$ and that $\sum_{\ell\in [t]} w^j_\ell \pi_{\ell}>\sum_{\ell\in [t]} w^*_\ell \pi_{\ell}$. Due to our constraints, we can fully specify $w^*$. To this end, let $\ell^*=\lfloor \frac{T_j}{s}\rfloor+1$. Since $w^*$ can only have two non-zero entries and $\sum_{\ell\in [t]} w^*_\ell\cdot (\ell-1)=T_j$ and $\sum_{\ell\in [t]} w^*_\ell=s$, it must be that  $w^*_{\ell^*}=s - (T_j \mod s)$, $w^*_{\ell^*+1}=T_j\mod s$, and $w^*_\ell=0$ for all $\ell\in [t]\setminus \{\ell^*, \ell^*+1\}$. Finally, we infer that
	\begin{align*}
		\sum_{i\in S} \pi(t, |A_i\cap \{x_1,\dots, x_{j-1}\}|+1)&= \sum_{\ell\in [t]} w^j_\ell \pi_{\ell}\\
		&\geq \sum_{\ell\in [t]} w^*_\ell \pi_{\ell}\\
		&=(s-(T_j\mod s)) \pi_{\ell^*} + (T_j\mod s) \pi_{\ell^*+1}. 
	\end{align*}
	
	We next define $K(u,v)=(s-v) \pi_u+ v\pi_{u+1}$ for all $u\in [{t-1}]$ and $v\in [s]\cup \{0\}$ and note that $K(\lfloor\frac{T_j}{s}\rfloor +1, T_j\mod s)=\sum_{\ell\in [t]} w^*_\ell \pi_{\ell}$. 
	By our previous analysis, it holds that $\sum_{i\in S} \pi(t, |A_i\cap \{x_1,\dots, x_{j-1}\}|+1)\geq K(\lfloor\frac{T_j}{s}\rfloor +1, T_j\mod s)$ for all $j\in [k]$ and thus, 
	\begin{align*}
		\sum_{i\in S} u_i(p)\geq \frac{1}{n}\sum_{j\in [k]} |N_{x_j}(\mathcal{A})\cap S| \cdot K(\lfloor\frac{T_j}{s}\rfloor +1, T_j\text{ mod } s). 
	\end{align*}
	
	By the definition of $K$, we have that $K(u,v)>K(u,v+1)$ for all $u\in [t-1]$ and $v\in [s-1]\cup \{0\}$ and $K(u,s)=K(u+1,0)$. This implies that
	\begin{align*}
		|N_{x_j}(\mathcal{A})\cap S|\cdot K(\lfloor\frac{T_j}{s}\rfloor +1, T_j\text{ mod } s)
		&= \sum_{T=T_j}^{T_j+|N_{x_j}(\mathcal{A})\cap S|-1} K(\lfloor\frac{T_j}{s}\rfloor +1, T_j\text{ mod } s)\\
		&\geq \sum_{T=T_j}^{T_j+|N_{x_j}(\mathcal{A})\cap S|-1} K(\lfloor\frac{T}{s}\rfloor +1, T\text{ mod } s). 
	\end{align*}
	
	Moreover, it holds hat $|N_{x_k}(\mathcal{A})\cap S|=s$ as all voters in $S$ approve $x_k=x^*$. This means that 
	\begin{align*}
		\sum_{i\in S} u_i(p)\geq \frac{1}{n}\sum_{T=0}^{T_k-1} K(\lfloor\frac{T}{s}\rfloor +1, T\text{ mod } s) 
		+ \frac{s}{n}\cdot K(\lfloor\frac{T_k}{s}\rfloor +1, T_k\text{ mod } s).
	\end{align*}
	
	Next, we define $u=\lfloor\frac{T_k}{s}\rfloor +1$ and $v=T_k\mod s$. Based on this notation, we infer that 
	\begin{align*}
		\sum_{T=0}^{T_k-1} K(\lfloor\frac{T}{s}\rfloor +1, T\text{ mod } s)&= \sum_{j=1}^{u-1} \sum_{\ell=0}^{s-1} K(j,\ell) + \sum_{\ell=0}^{v-1} K(u,\ell). 
	\end{align*} 
	
	We recall that $\gamma \pi_j=\pi_{j+1}$, so we calculate that 
	\begin{align*}
		\sum_{\ell=0}^{s-1} K(j,\ell) &= \sum_{\ell=0}^{s-1} \left((s-\ell) \cdot \pi_j + \ell\cdot \pi_{j+1}\right)\\
		& = \sum_{\ell=0}^{s-1} (s-(1-\gamma)\ell) \cdot \pi_j \\
		&=s^2\cdot \pi_j - (1-\gamma) \cdot \frac{(s-1)s}{2}\cdot \pi_j\\
		&\geq \frac{1+\gamma}{2}\cdot s^2\cdot \pi_j.
	\end{align*}
	
	Next, we observe that $\pi_j=\gamma^{j-1}\pi_1$ by the definition of \mps{\gamma}. Using the geometric sum, we derive that
	\begin{align*}
		\sum_{j=1}^{u-1} \sum_{\ell=0}^{s-1} K(j,\ell)& \geq\sum_{j=1}^{u-1} \left(\frac{1+\gamma}{2}\cdot s^2\cdot \pi_j\right)\\
		&= \frac{1+\gamma}{2}\cdot s^2\cdot \pi_1 \cdot\sum_{j=0}^{u-2} \gamma^j\\
		&=  s^2\cdot \pi_1 \cdot \frac{1+\gamma}{2}\cdot \frac{1-\gamma^{u-1}}{1-\gamma}.
	\end{align*}
	
	We will next use a case distinction with respect to $\gamma$ to complete the proof. \medskip
	
	\emph{Case 1: $\gamma\geq \frac{1}{3}$.} In this case, we compute the following lower bound for $\sum_{\ell=0}^{v-1} K(u,\ell)+s\cdot K(u,v)$:
	\begin{align*}
		\sum_{\ell=0}^{v-1} K(u,\ell)+s\cdot K(u,v)
		&= \sum_{\ell=0}^{v-1} \left( (s-\ell)\cdot \pi_u + \ell \cdot \pi_{u+1}\right) +s\cdot((s-v) \cdot \pi_u + v\cdot \pi_{u+1})\\
		&=\sum_{\ell=0}^{v-1} \left((s-(1-\gamma)\ell)\cdot \pi_u \right) +s\cdot ((s-(1-\gamma)v) \cdot \pi_u)\\
		&=\pi_u\cdot \big(v\cdot s -(1-\gamma)\cdot \frac{(v-1)v}{2} + s^2 - (1-\gamma)v\cdot s\big)\\
		&\geq \pi_u \cdot \big(s^2+v\cdot s\cdot (1-\frac{3}{2}(1-\gamma))\big)\\
		&\geq\pi_u\cdot s^2\\
		&=s^2\cdot \gamma^{u-1}\cdot \pi_1.
	\end{align*}
	
	In our second to last line, we use that $\gamma\geq \frac{1}{3}$, so $1-\frac{3}{2}(1-\gamma)\geq 0$. Based on our observations, we now conclude that 
	
	\begin{align*}
		\sum_{i\in S} u_i(p)&\geq \frac{1}{n}\sum_{T=0}^{T_k-1} K(\lfloor\frac{T}{s}\rfloor +1, T\text{ mod } s) + \frac{s}{n}\cdot K(\lfloor\frac{T_k}{s}\rfloor +1, T_k\text{ mod } s)\\
		&=\frac{1}{n}\bigg(\sum_{j=1}^{u-1} \sum_{\ell=0}^{s-1} K(j,\ell) + \sum_{\ell=0}^{v-1} K(u,\ell) + s\cdot K(u,v)\bigg)\\
		&\geq\frac{s^2}{n}\pi_1\bigg(\frac{1+\gamma}{2}\cdot \frac{1-\gamma^{u-1}}{1-\gamma} + \gamma^{u-1}\bigg)\\
		&=\frac{s^2}{n}\pi_1\cdot\frac{1+\gamma-\gamma^{u-1}-\gamma^u+2\gamma^{u-1} - 2\gamma^u}{2(1-\gamma)}\\
		&=\frac{s^2}{n}\pi_1\cdot \frac{1+\gamma +\gamma^{u-1} (1-3\gamma)}{2(1-\gamma)}\\
		&\geq \frac{s^2}{n}\pi_1\cdot \frac{1+\gamma  +1-3\gamma}{2(1-\gamma)}\\
		&=\frac{s^2}{n}\pi_1.
	\end{align*}
	
	The first five inequalities here use our previous bounds and some standard transformation. For the sixth line (where we replace $\gamma^{u-1}$ with $1$), we use the fact that $1-3\gamma\leq 0$ and $\gamma^{u-1}\leq 1$ as $\frac{1}{3}\leq \gamma<1$ and $u\geq 1$. Finally, employing the geometric series, we infer that 
	\begin{align*}
		\pi_1=\frac{\gamma}{\sum_{i=1}^t \gamma^i}\geq \frac{\gamma}{\sum_{i\in\mathbb N} \gamma^i}=\frac{1}{\sum_{i\in \mathbb{N}_0} \gamma^i}=\frac{1}{1/(1-\gamma)}=1-\gamma.
	\end{align*}
	
	By combining this insight with the fact that $\sum_{i\in S} u_i(p)\geq \frac{s^2}{n} \pi_1$, it finally follows that \mps{\gamma} satisfies $\frac{1}{1-\gamma}$-AFS if $\gamma\geq\frac{1}{3}$.\medskip
	
	\emph{Case 2: $\gamma<\frac{1}{3}$.} Next, assume that $\gamma<\frac{1}{3}$. In this case, it holds that
	\begin{align*}
		\sum_{\ell=0}^{v-1} K(u,\ell)+s\cdot K(u,v)
		&\geq \pi_u \cdot \big(s^2+v\cdot s\cdot (1-\frac{3}{2}(1-\gamma))\big)\\
		&\geq\pi_u \cdot \big(s^2+s^2\cdot (1-\frac{3}{2}(1-\gamma))\big)\\
		&=\pi_u \cdot s^2\cdot (\frac{1}{2}+\frac{3}{2}\gamma)\\
		&=\gamma^{u-1}\cdot \pi_1 \cdot s^2\cdot (\frac{1}{2}+\frac{3}{2}\gamma)\\
	\end{align*}
	
	Note here that the first inequality can be derived analogously to the last case. However, we now have that  $(1-\frac{3}{2}(1-\gamma))<0$, which means that $v\cdot s\cdot (1-\frac{3}{2}(1-\gamma))\geq s^2\cdot (1-\frac{3}{2}(1-\gamma))$. In a similar fashion to the first case, we can now derive that 
	\begin{align*}
		\sum_{i\in S} u_i(p)
		&\geq\frac{s^2}{n}\pi_1\bigg(\frac{1+\gamma}{2}\cdot \frac{1-\gamma^{u-1}}{1-\gamma} + \gamma^{u-1}\cdot (\frac{1}{2}+\frac{3}{2}\gamma)\bigg)\\
		&=\frac{s^2}{n}\pi_1 \cdot \frac{1+\gamma -\gamma^{u-1}-\gamma^u +\gamma^{u-1}+3\gamma^u-\gamma^u-3\gamma^{u+1}}{2(1-\gamma)}\\
		&=\frac{s^2}{n}\pi_1 \cdot \frac{1+\gamma +\gamma^u(1-3\gamma)}{2(1-\gamma)}\\
		&\geq \frac{s^2}{n}\pi_1 \cdot \frac{1+\gamma}{2(1-\gamma)}.
	\end{align*}
	
	We use here for the last inequality that $1-3\gamma>0$. Finally, by using again that $\pi_1\geq {1-\gamma}$, it follows that $\sum_{i\in S} u_i(p)\geq \frac{s^2}{n}\cdot \frac{1+\gamma}{2}$, which proves that \mps{\gamma} satisfies $\frac{2}{1+\gamma}$-AFS if $0<\gamma<\frac{1}{3}$. 
\end{proof}

Finally, we will prove \Cref{thm:SPR}.

\tightanalysis*
\begin{proof}
	We split the theorem into three statements. Firstly, we will show that no sequential payment rule satisfies $\alpha$-AFS for $\alpha< \frac{3}{2}(1-3^{-t})$ when $t$ denotes the maximum feasible ballot size. Secondly, we will prove that each sequential payment rule except for the $\frac{1}{3}$-MSP rule also fails $\alpha$-AFS for $\alpha=\frac{3}{2}(1-3^{-t})$ for some maximal feasible ballot size $t$. As the last point, we will show that the $\frac{1}{3}$-multiplicative sequential payment rule indeed matches this lower bound.\medskip
	
	\textbf{Claim 1: No sequential payment rule satisfies $\alpha$-AFS for $\alpha< \frac{3}{2}(1-3^{-t})$ when the maximal feasible ballot size is $t$.}
	
	Let $f$ denote a sequential payment rule that satisfies $\alpha$-AFS for some $\alpha\in\mathbb{R}$, let $\pi$ be its payment willingness function, and let $t\in\mathbb{N}$ denote the maximum feasible ballot size. We moreover define $\pi_i=\pi(t,i)$ for all $i\in [t]$ and note that $\pi_1\geq \pi_2\geq \dots \geq \pi_t$ by definition. Using \Cref{prop:lowerbound}, we infer the following inequalities:
	\begin{align*}
		\pi_1\geq  \frac{1}{\alpha}\\
		\frac{1}{2}\pi_1 + \frac{3}{2} \pi_2\geq \frac{1}{\alpha}\\
		\frac{1}{2}\pi_1 + \pi_2 + \frac{3}{2} \pi_3\geq  \frac{1}{\alpha}\\
		\dots\\
		\frac{1}{2}\pi_1 + \pi_2 + \dots + \pi_{t-1}+\frac{3}{2} \pi_t\geq  \frac{1}{\alpha}.
	\end{align*}
	
	Based on these inequalities, we will now derive an upper bound on $\frac{1}{\alpha}$ by using the fact that $\pi_1+\pi_2+\dots+\pi_t=1$. To this end, we first note that by multiplying the $i$-th inequality (for $i\geq 2$) with $2$ and adding the first inequality, we obtain the following modified set of inequalities. 
	\begin{align*}
		\pi_1\geq  \frac{1}{\alpha}\\
		2\pi_1 + 3 \pi_2\geq \frac{3}{\alpha}\\
		2\pi_1 + 2 \pi_2 + 3 \pi_3\geq \frac{3}{\alpha}\\
		\dots\\
		2\pi_1 + 2\pi_2 + \dots + 2\pi_{t-1}+3 \pi_t\geq \frac{3}{\alpha}.
	\end{align*}
	
	Now, by adding up the first and second constraint, we get that $3\pi_1+3\pi_2\geq \frac{4}{\alpha}$. Using this constraint and adding in three times the third constraint, we derive that $9\pi_1+9\pi_2+9\pi_3\geq \frac{13}{\alpha}$. As next step, we can add $9$ times our fourth constraint to get that $27\pi_1+27\pi_2+27\pi_3+27\pi_4\geq \frac{40}{\alpha}$. More generally, denoting the left hand sides of our modified inequalities by $I^1,\dots, I^t$, it holds that $I^1+I^2+3I^3+9I^4+\dots+3^{t-2}I^t=3^{t-1}(\pi_1+\pi_2+\dots \pi_t)=3^{t-1}$. Moreover, when adding up the right hand sides with the same coefficients, we get that $\frac{1}{\alpha}+\frac{3}{\alpha}+\frac{9}{\alpha}+\dots+\frac{3^{t-1}}{\alpha}=\frac{1}{\alpha} \sum_{i=0}^{t-1} 3^i$. Finally, solving for $\alpha$ shows that $\alpha\geq \frac{\sum_{i=0}^{t-1} 3^i}{3^{t-1}}= \frac{3}{2} (1-3^{-t})$ as $\sum_{i=0}^{t-1} 3^i=\frac{1}{2}(3^t-1)$, which proves our lower bound. \bigskip
	
	\textbf{Claim 2: No sequential payment rule other than the $\frac{1}{3}$-MSP rule satisfies $\frac{3}{2}(1-3^{-t})$-AFS for all maximum feasible ballot sizes $t\in\mathbb{N}$.}
	
	Let $f$ denote a sequential payment rule, let $\pi$ denote its payment willingness function, fix some arbitrary integer $t\in\mathbb{N}$ with $t\geq 2$, and define $\pi_i$ as in Claim 1. We will show that $\pi_i=\frac{3^{-i}}{\sum_{j\in [t]} 3^{-j}} $ for all $i\in [t]$ if $f$ satisfies $\frac{3}{2}(1-3^{-t})$-AFS for the maximal ballot size $t$. Because this holds for all $t\in\mathbb{N}$, this proves that $f$ is \mps{\frac{1}{3}}, so no other rule satisfies our fairness condition. Now, to show this claim, we use again the inequalities of \Cref{prop:lowerbound}. In particular, we note here that $\frac{3}{2}(1-3^{-t})=\sum_{i=0}^{t-1} 3^{-i}$. Hence, \Cref{prop:lowerbound} shows that
	\begin{align*}
		\pi_1\geq  \frac{1}{\sum_{i=0}^{t-1} 3^{-i}}\\
		\frac{1}{2}\pi_1 + \frac{3}{2} \pi_2\geq \frac{1}{\sum_{i=0}^{t-1} 3^{-i}}\\
		\frac{1}{2}\pi_1 + \pi_2 + \frac{3}{2} \pi_3\geq  \frac{1}{\sum_{i=0}^{t-1} 3^{-i}}\\
		\dots\\
		\frac{1}{2}\pi_1 + \pi_2 + \dots + \pi_{t-1}+\frac{3}{2} \pi_t\geq  \frac{1}{\sum_{i=0}^{t-1} 3^{-i}}\\
	\end{align*}
	
	Our first inequality immediately implies that $\pi_1\geq \frac{1}{\sum_{i=0}^{t-1} 3^{-i}}=\frac{3^{-1}}{\sum_{i\in [t]} 3^{-i}}$. Next, since $\sum_{i\in [t]} \pi_i=1$, we can rewrite our last inequality by 
	\begin{align*}
		\frac{1}{2}\pi_1 + \pi_2 + \dots + \pi_{t-1}+\frac{3}{2} \pi_t&\geq \frac{1}{\sum_{i=0}^{t-1} 3^{-i}}\\
		\iff 1-\frac{1}{2} \pi_1 + \frac{1}{2}\pi_t&\geq \frac{1}{\sum_{i=0}^{t-1} 3^{-i}}\\
		\iff \pi_t&\geq \frac{2}{\sum_{i=0}^{t-1} 3^{-i}} + \pi_1-2.
	\end{align*}
	
	Using the fact that $\pi_1\geq\frac{1}{\sum_{i=0}^{t-1} 3^{-i}}$ and that $\sum_{i=0}^{t-1} 3^{-i}=\frac{3}{2}(1-3^{-t})$, we get that 
	\begin{align*}
		\pi_t\geq \frac{3}{\sum_{i=0}^{t-1} 3^{-i}} -\frac{2\sum_{i=0}^{t-1} 3^{-i}}{\sum_{i=0}^{t-1} 3^{-i}}
		\geq \frac{3^{-(t-1)}}{\sum_{i=0}^{t-1} 3^{-i}}=\frac{3^{-t}}{\sum_{i=1}^t 3^{-i}}.
	\end{align*}
	
	Next, assume inductively that there is $j\in \{3,\dots,t\}$ such that $\pi_{\ell}\geq \frac{3^{-(\ell-1)}}{\sum_{i=0}^{t-1} 3^{-i}}$ for all $\ell\in \{j,\dots, t\}$. We will show that the same holds for $j-1$. To this end, we first note that 
	\begin{align*}
		\frac{1}{2}\pi_1 + \pi_2 + \dots + \pi_{j-2} + \frac{3}{2} \pi_{j-1} & \geq  \frac{1}{\sum_{i=0}^{t-1} 3^{-i}}\\
		\iff (1-\sum_{\ell=j}^t \pi_\ell) + \frac{1}{2} \pi_{j-1} -  \frac{1}{2} \pi_{1} & \geq   \frac{1}{\sum_{i=0}^{t-1} 3^{-i}}\\
		\iff \pi_{j-1}&\geq  \frac{2}{\sum_{i=0}^{t-1} 3^{-i}} + \pi_{1} + 2\sum_{\ell=j}^t \pi_\ell -2.
	\end{align*}
	
	Using the same computations as $\pi_t$ shows that $ \frac{2}{\sum_{i=0}^{t-1} 3^{-i}} + \pi_{1} -2\geq \frac{3^{-(t-1)}}{\sum_{i=0}^{t-1} 3^{-i}}$. Combining this with the lower bounds on $\pi_\ell$ for $\ell\in \{j,\dots, t\}$ given by the induction hypothesis, we conclude that 
	\begin{align*}
		\pi_{j-1}&\geq \frac{3^{-(t-1)}}{\sum_{i=0}^{t-1} 3^{-i}} + 2\sum_{\ell=j}^{t} \frac{3^{-(\ell-1)}}{\sum_{i=0}^{t-1} 3^{-i}}\\
		&=\frac{3^{-(j-1)}}{\sum_{i=0}^{t-1} 3^{-i}} \left(3^{-(t-j)} + 2\sum_{\ell=0}^{t-j} 3^{-\ell}\right)\\
		&=\frac{3^{-(j-1)}}{\sum_{i=0}^{t-1} 3^{-i}} \left(3^{-(t-j)} + 3-3^{-(t-j)}\right)\\
		&=\frac{3^{-(j-2)}}{\sum_{i=0}^{t-1} 3^{-i}}.
	\end{align*}
	
	Hence, it holds that $\pi_j\geq \frac{3^{-(j-1)}}{\sum_{i=0}^{t-1} 3^{-i}}=\frac{3^{-j}}{\sum_{i\in [t]} 3^{-i}}$ for all $j\in [t]$. Since $\sum_{j\in [t]} \pi_j=1$, all of these inequalities must be tight, which means that the payment willingness function $\pi$ for approval ballots of size $t$ is the equal to the payment willingness of \mps{\frac{1}{3}}. Because this analysis holds for all $t$, it follows that \mps{\frac{1}{3}} is the only sequential payment rule that can satisfy our lower bound tightly for all maximum ballot sizes.\bigskip
	
	\textbf{Claim 3: \mps{\frac{1}{3}} satisfies $\frac{3}{2}(1-3^{-t})$-AFS for all profiles $\mathcal{A}$ with maximum ballot size $t\in\mathbb{N}$.}
	
	Let $\mathcal{A}$ denote a profile such that $|A_i|\leq t$ for all $i\in N$, and let $S\subseteq N$ be a group of voters and $x^*$ a candidate such that $x^*\in \bigcap_{i\in S} A_i$. We need to show for the distribution $p=\text{\mps{\frac{1}{3}}}(\mathcal A)$ that 
	\begin{align*}
		\sum_{i\in S} u_i(p)\geq \frac{1}{\frac{3}{2}(1-3^{-t})} \cdot \frac{|S|^2}{n}.
	\end{align*}
	
	To this end, we recall that we have shown in the proof of \Cref{thm:MPS} that every \mps{\gamma} rule with $\gamma\geq \frac{1}{3}$ satisfies that $\sum_{i\in S} u_i(q)\geq\pi(\ell,1) \cdot \frac{|S|^2}{n}$, where $\pi$ denotes the payment willingness function of the corresponding distribution rule, $q$ the distribution chosen by the rule, and $\ell=\max_{i\in N} |A_i|\leq t$ the maximal ballot size in $\mathcal{A}$. By applying this inequality to \mps{\frac{1}{3}} and using that $\pi(\ell,1)\geq \pi(t,1)$, it follows that 
	\begin{align*}
		\sum_{i\in S} u_i(p)\geq \frac{{3^{-1}}}{\sum_{i\in [t]} 3^{-i}} \cdot \frac{|S|^2}{n}=\frac{1}{\sum_{i=0}^{t-1} 3^{-i}}\cdot \frac{|S|^2}{n}.
	\end{align*}
	
	Finally, since $\sum_{j=0}^{t-1} 3^{-j} =\frac{3}{2}(1-3^{-t})$, \mps{\frac{1}{3}} indeed satisfies $\frac{3}{2}(1-3^{-t})$-AFS for profiles with a maximal ballot size of $t$. 
\end{proof}

\end{document}